\documentclass{article}

\newif\ificml
\icmlfalse

\ificml

\usepackage{times}
\usepackage{graphicx} 
\usepackage{epstopdf}
\usepackage{subfigure}

\usepackage{natbib}

\usepackage{algorithm}
\usepackage{algorithmic}

\usepackage{hyperref}


\usepackage[accepted]{icml2016}

\usepackage{amsmath,amsthm,amssymb}

\else

\usepackage{amsmath,amsfonts,amsthm,amssymb}
\usepackage{fullpage}
\usepackage{url}
\usepackage{algorithm}
\usepackage{algorithmic}
\usepackage{float}
\usepackage{enumerate}
\usepackage{epsfig}
\usepackage{subfigure}
\usepackage{color}
\usepackage[usenames,dvipsnames,svgnames,table]{xcolor}
\usepackage[colorlinks=true,
            linkcolor=red,
            urlcolor=blue,
            citecolor=gray]{hyperref}

\fi


\makeatletter
\newtheorem*{rep@theorem}{\rep@title}
\newcommand{\newreptheorem}[2]{%
\newenvironment{rep#1}[1]{%
 \def\rep@title{#2 \ref{##1}}%
 \begin{rep@theorem}}%
 {\end{rep@theorem}}}
\makeatother

\newreptheorem{theorem}{Theorem}
\newreptheorem{lemma}{Lemma}
\newreptheorem{corollary}{Corollary}

\usepackage{rfmacros}
  \usepackage{nth}
  \usepackage{intcalc}

\newcommand\vA{\bv{A}}
\newcommand\vb{\bv{b}}
\newcommand\vx{\bv{x}}
\newcommand\gap{\gamma}

\DeclareMathOperator{\sr}{sr}

\newcommand\vU{\bv{U}}

\newcommand\vSigma{\bv{\Sigma}}
\newcommand\vV{\bv{V}}

\newcommand\vS{\bv{S}}
\newcommand{\algA}{\mathcal{A}}
\newcommand{\algB}{\mathcal{B}}
\newcommand{\algC}{\mathcal{C}}

\newcommand\pinv{\dagger}
\newcommand\ridgesol[2]{{#1}^{#2}}



\begin{document}

\ificml

\twocolumn[
\icmltitle{Principal Component Projection Without Principal Component Analysis}

\icmlauthor{Roy Frostig}{rf@cs.stanford.edu}
\icmladdress{Stanford University}
\icmlauthor{Cameron Musco}{cnmusco@mit.edu}
\icmlauthor{Christopher Musco}{cpmusco@mit.edu}
\icmladdress{MIT}
\icmlauthor{Aaron Sidford}{asid@microsoft.com}
\icmladdress{Microsoft Research, New England}

\icmlkeywords{numerical analysis, principal component analysis,
  principal component regression}

\vskip 0.3in
]

\else

\title{Principal Component Projection \\Without Principal Component Analysis}
\author{Roy Frostig\\Stanford University\\\texttt{rf@cs.stanford.edu} \and
Cameron Musco\\MIT\\\texttt{cnmusco@mit.edu}  \and
Christopher Musco\\MIT\\\texttt{cpmusco@mit.edu}  \and 
Aaron Sidford\\Microsoft Research, New England\\\texttt{asid@microsoft.com}}
\date{February 23, 2016}
\maketitle

\fi

\begin{abstract}
We show how to efficiently project a vector onto the top principal
components of a matrix, \emph{without explicitly computing these
  components}. Specifically, we introduce an iterative algorithm that
provably computes the projection using few calls to any black-box
routine for ridge regression.

By avoiding explicit
principal component analysis (PCA), our algorithm is the first with no runtime dependence on the number of
top principal components. 
We show that it can be used to give a fast iterative method for the popular
principal component regression problem, giving the first major
runtime improvement over the naive method of combining
PCA with regression.

To achieve our results, we first observe that ridge regression can be used to obtain a ``smooth projection'' onto the top principal components. We then sharpen this approximation to true projection using a low-degree polynomial approximation to the matrix step function.
Step function approximation is a topic of long-term interest in
scientific computing. We extend prior theory by constructing
polynomials with simple iterative structure and rigorously analyzing their behavior under limited precision.
\end{abstract}

\section{Introduction}
In machine learning and statistics, it is common -- and often
essential -- to represent data in a concise form that decreases noise and increases
efficiency in downstream tasks. Perhaps the most
widespread method for doing so is to project data onto the linear subspace spanned by
its directions of highest variance -- that is, onto the span of the top
components given by principal component analysis (PCA).
\ificml

\fi
Computing principal components can be an expensive task, a challenge that prompts 
a basic algorithmic question:

\begin{quote}
\ificml \vspace{-.5em} \fi
\emph{Can we project a vector onto the span of a matrix's top principal components without performing principal component analysis?}
\ificml \vspace{-.5em} \fi
\end{quote}

This paper answers that question in the affirmative, demonstrating that projection
is much easier than PCA itself. We show that it can be solved using a simple iterative algorithm
based on black-box calls to a ridge regression routine. 
\ificml

\fi
The algorithm's runtime \emph{does not depend} on the number of top principal components chosen for projection, a cost inherent to any algorithm for PCA, or even algorithms that just compute an orthogonal span for the top components.

\subsection{Motivation: principal component regression}

To motivate our projection problem, consider one of the most basic downstream
applications for PCA: linear regression. Combined, PCA and regression
comprise the \ificml following\fi \emph{principal component regression} (PCR) problem:
\begin{definition}[Principal component regression (PCR)]
Let $\bv{A} \in \R^{n\times d}$ be a design matrix whose rows are data points and let $\bv{b} \in \R^d$ be a vector of data labels.
Let $\bv{A}_{\lambda}$ denote the result of projecting each row of $\bv{A}$ onto the span of the top principal components of $\bv{A}$ -- in particular the eigenvectors
of the covariance matrix $\frac{1}{n} \bv{A}^\T \bv{A}$ whose corresponding variance (eigenvalue) exceeds a threshold $\lambda$. The
task of PCR is to find a minimizer of the squared loss
$\ltwo{\vA_\lambda \vx - \vb}^2$. In other words, the goal is to
compute $\bv{A}_\lambda^\pinv \bv{b}$, where $\bv{A}_\lambda^\pinv$
is the Moore-Penrose pseudoinverse of $\bv{A}_\lambda$.
\end{definition}
PCR is a key regularization method in statistics,
numerical linear algebra, and scientific disciplines including chemometrics
\cite{hotelling1957relations, hansen1987truncatedsvd, Friedman}. 
It
models the assumption that small principal components represent noise
rather than data signal. 
PCR is typically solved by first using PCA to compute $\bv{A}_\lambda$ and then applying
linear regression. The PCA step dominates the algorithm's cost, especially if many principal components have
variance above the threshold $\lambda$. 

We remedy this issue by showing that our principal component \emph{projection} algorithm yields a fast algorithm for \emph{regression}. Specifically, 
full access to $\bv{A}_\lambda$ is unnecessary for PCR: $\bv{A}_\lambda^\pinv \bv{b}$ can be computed efficiently given only an approximate projection of the vector $\bv{A}^\T\bv{b}$ onto $\bv{A}$'s top principal components. By solving projection without PCA we obtain the first PCA-free algorithm for PCR.

\subsection{A first approximation: ridge regression}

Interestingly, our main approach to efficient principal component projection 
is based on a common alternative to PCR: ridge regression.
This ubiquitous regularization method computes a
minimizer of $\norm{\bv{A} \bv{x} - \vb}_2^2 + \lambda
\norm{\bv{x}}_2^2$ for some regularization parameter $\lambda$ \cite{tikhonov1963solution}. 
The advantage of ridge regression is its formulation as a simple
convex optimization problem that can be solved efficiently using many techniques (see Lemma \ref{ridgeRuntime}).

Solving ridge regression is equivalent to applying the matrix $(\bv{A}^\T\bv{A}+\lambda \bv{I})^{-1}\bv{A}^\T$, an 
operation that can be viewed as a smooth relaxation of PCR. 
Adding the $\ell_2$ norm penalty (i.e.\ $\lambda \bv{I}$) effectively ``washes out'' $\bv{A}$'s small principal components in
comparison to its large ones and achieves an
effect similar to PCR at the extreme ends of $\bv{A}$'s spectrum.

Accordingly, ridge regression gives access to a ``smooth projection''
operator, $(\bv{A}^\T\bv{A}+\lambda \bv{I})^{-1}\bv{A}^\T\bv{A}$. This
matrix approximates $\bv{P}_{\bv{A}_\lambda}$, the projection matrix
onto $\bv{A}$'s top principal components. Both have the same singular
vectors, but $\bv{P}_{\bv{A}_\lambda}$ has a singular value of $1$ for
each squared singular value $\sigma_i^2 \geq \lambda$ in $\bv{A}$ and
a singular value of $0$ for each $\sigma_i^2 < \lambda$, whereas
$(\bv{A}^\T\bv{A}+\lambda \bv{I})^{-1}\bv{A}^\T\bv{A}$ has singular
values equal to $\frac{\sigma_i^2}{\sigma_i^2 + \lambda}$. This
function approaches $1$ when $\sigma_i^2$ is much greater than
$\lambda$ and $0$ when it is smaller. Figure~\ref{soft_projection}
illustrates the comparison.

\begin{figure}[ht]
\begin{center}
\ificml
\centerline{\includegraphics[width=.9\columnwidth]{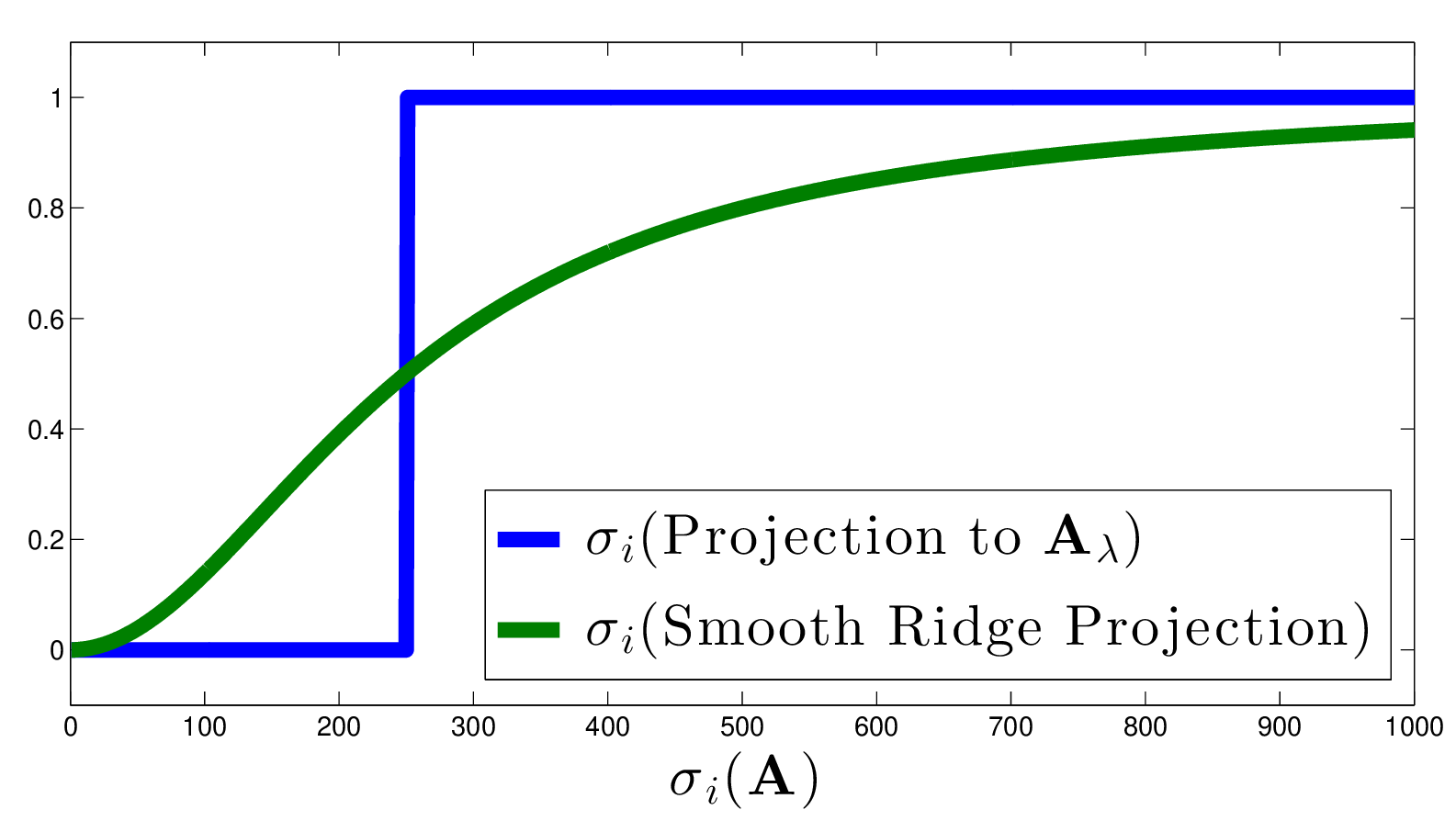}}
\else
\centerline{\includegraphics[width=.5\columnwidth]{soft_projection.eps}}
\fi
\caption{Singular values of the projection matrix $\bv{P}_{\bv{A}_\lambda}$ vs.\ those of the smooth projection operator $(\bv{A}^\T\bv{A}+\lambda \bv{I})^{-1}\bv{A}^\T\bv{A}$ obtained from ridge regression.}
\label{soft_projection}
\end{center}
\vskip -0.2in
\end{figure} 

Unfortunately, ridge regression is a very crude approximation to
PCR and projection in many settings and may perform significantly
worse in certain data analysis applications \cite{dhillon2013risk}. In
short, while ridge regression algorithms are valuable tools, it has
been unclear how to wield them for tasks like projection or PCR.

\subsection{Main result: from ridge regression to projection}

We show that it is possible to \emph{sharpen} the weak approximation given by ridge regression.
Specifically, there exists a low degree
polynomial $p(\cdot)$ such that $p\left((\bv{A}^\T\bv{A}+\lambda \bv{I})^{-1}\bv{A}^\T\bv{A} \right ) \bv{y}$ provides a very accurate approximation to $\bv{P}_{\bv{A}_\lambda} \bv{y}$ for any vector $\bv{y}$.
Moreover, the polynomial can be evaluated as a recurrence,
which translates into a simple iterative algorithm: we can apply the sharpened 
approximation to a vector by repeatedly applying
any ridge regression routine a small number of times. 

\begin{theorem}[Principal component projection without PCA]
  \label{thm:main}
Given $\bv{A}\in\mathbb{R}^{n \times d}$ and $\bv{y} \in \mathbb{R}^d$, Algorithm~\ref{alg:pcp} uses $\tilde O(\gamma^{-2} \log(1/\epsilon))$ approximate applications of $(\bv{A}^\T\bv{A}+\lambda \bv{I})^{-1}$ and returns $\bv{x}$ with $\norm{\bv{x}-\bv{P}_{\bv{A}_\lambda}\bv{y}}_2 \le \epsilon\norm{\bv{y}}_2$.
\end{theorem}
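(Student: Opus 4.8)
The plan is to recognize the target operator $\bv{P}_{\bv{A}_\lambda}$ as a \emph{matrix step function} applied to the ridge-based smooth projector $\bv{M}_\lambda := (\bv{A}^\T\bv{A}+\lambda \bv{I})^{-1}\bv{A}^\T\bv{A}$, and then to approximate that step function by a low-degree polynomial that can be evaluated through a short recurrence. Writing the SVD $\bv{A}=\bv{U}\bv{\Sigma}\bv{V}^\T$, we have $\bv{M}_\lambda=\bv{V}\bv{D}\bv{V}^\T$ with $\bv{D}_{ii}=\sigma_i^2/(\sigma_i^2+\lambda)$, and also the convenient identity $\bv{M}_\lambda=\bv{I}-\lambda(\bv{A}^\T\bv{A}+\lambda\bv{I})^{-1}$, so one product by $\bv{M}_\lambda$ costs exactly one application of $(\bv{A}^\T\bv{A}+\lambda\bv{I})^{-1}$. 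Meanwhile $\bv{P}_{\bv{A}_\lambda}=\bv{V}\bv{\Pi}\bv{V}^\T$ with $\bv{\Pi}_{ii}=1$ if $\sigma_i^2\ge\lambda$ and $0$ otherwise, and since $\sigma_i^2\ge\lambda \iff \bv{D}_{ii}\ge 1/2$ we get $\bv{P}_{\bv{A}_\lambda}=\step(\bv{M}_\lambda)$ for the step function $\step(t)$ equal to $1$ on $[1/2,\infty)$ and $0$ below. For any polynomial $p$ we then have $p(\bv{M}_\lambda)=\bv{V}p(\bv{D})\bv{V}^\T$, and because $\bv{V}$ has orthonormal columns,
\[
\norm{p(\bv{M}_\lambda)\bv{y}-\bv{P}_{\bv{A}_\lambda}\bv{y}}_2 \;\le\; \Big(\max_i |p(\bv{D}_{ii})-\step(\bv{D}_{ii})|\Big)\norm{\bv{y}}_2 .
\]
So it suffices to build a polynomial $p$ that approximates $\step$ to error $\epsilon$ at every eigenvalue of $\bv{M}_\lambda$.

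Next I would use the gap parameter $\gamma$: it guarantees that the eigenvalues of $\bv{M}_\lambda$ all avoid an interval of width $\Theta(\gamma)$ centered at $1/2$ (equivalently, $\bv{A}$ has no squared singular value within a $\gamma$-relative band of $\lambda$). It therefore suffices to construct $p$ with $|p(t)|\le\epsilon$ on $[0,\tfrac12-\Omega(\gamma)]$, $|p(t)-1|\le\epsilon$ on $[\tfrac12+\Omega(\gamma),1)$, and $|p|$ bounded by a constant on all of $[0,1]$. This is a classical step-/sign-function approximation problem; the extra requirement here is that $p$ be expressible through a three-term, Chebyshev-like recurrence, so that $p(\bv{M}_\lambda)\bv{y}$ can be formed with $\deg(p)$ products by $\bv{M}_\lambda$ via Clenshaw-type evaluation — this is the "simple iterative structure" the introduction alludes to. A careful construction yields such a $p$ of degree $\tilde O(\gamma^{-2}\log(1/\epsilon))$, which already accounts for the stated number of (exact) ridge-regression calls.

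The main obstacle is the last step: controlling the effect of using only an \emph{approximate} ridge-regression routine. Each iterate of the recurrence is perturbed, and a naive bound on how these perturbations grow over a degree-$d$ recurrence is exponential in $d$; this must be replaced by a polynomial-in-$d$ bound using the specific form of the polynomials. The crucial structural fact to establish is that the partial polynomials produced by the recurrence are \emph{uniformly bounded} on $[0,1]$ (equivalently, that the intermediate iterates stay bounded in norm); granted this stability, a standard error-propagation argument shows that if every ridge solve is carried out to relative accuracy $\delta=\text{poly}(\epsilon,\gamma)/\deg(p)$ — which by Lemma~\ref{ridgeRuntime} costs only an extra $O(\log(\deg(p)/(\epsilon\gamma)))$ factor per solve, absorbed into the $\tilde O$ — the accumulated error is at most $(\epsilon/2)\norm{\bv{y}}_2$. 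Adding the $(\epsilon/2)\norm{\bv{y}}_2$ approximation error of the exact polynomial from the previous step yields $\norm{\bv{x}-\bv{P}_{\bv{A}_\lambda}\bv{y}}_2\le\epsilon\norm{\bv{y}}_2$. I expect essentially all of the technical work to concentrate in this bounded-iterates / limited-precision analysis.
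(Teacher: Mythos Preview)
Your high-level plan is exactly the paper's: write $\bv{P}_{\bv{A}_\lambda}=\step(\bv{M}_\lambda)$ for the ridge projector $\bv{M}_\lambda$, approximate $\step$ by a degree-$O(\gamma^{-2}\log(1/\epsilon))$ polynomial evaluated through a short recurrence, and control error accumulation by showing the intermediate iterates remain uniformly bounded. You also correctly pinpoint the bounded-iterates property as the crux of the limited-precision analysis.

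The one substantive divergence is the polynomial itself. You reach for a ``three-term, Chebyshev-like recurrence'' with Clenshaw evaluation; the paper instead uses the explicit polynomial $p_q(x)=\sum_{i=0}^q x(1-x^2)^i\prod_{j=1}^i\frac{2j-1}{2j}$ (the degree-$q$ Taylor expansion of $x/\sqrt{x^2}$ about $x^2=1$, applied at $2\bv{M}_\lambda-\bv{I}$). This choice is not incidental: each summand $t_i(x)$ satisfies $|t_i(x)|\le 1$ on $[-1,1]$ and $t_{i+1}=\frac{2i+1}{2i+2}(1-x^2)\,t_i$ is computed from $t_i$ by a map of operator norm at most $1$, so the recurrence is term-by-term non-expansive and error grows only \emph{additively}, by $O(\epsilon')$ per step. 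That is precisely what makes the ``bounded partial polynomials'' fact you need immediate. A Clenshaw recurrence for a Chebyshev expansion does not enjoy this structure---intermediate Clenshaw iterates can be large even when the final value is small---and the paper in fact proves the existence of a lower-degree $O(\gamma^{-1}\log(1/\gamma\epsilon))$ Chebyshev-based approximant (Lemma~\ref{lem:chebyshev-improved-expansion}) but explicitly leaves a stable algorithmic version to future work. So your plan goes through, but you should swap the Chebyshev/Clenshaw suggestion for the Taylor-based polynomial above; otherwise the stability step you flagged as the hard part becomes genuinely hard rather than a two-line induction.
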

Like most iterative PCA algorithms, our running time scales
inversely with $\gap$, the \emph{spectral gap}  around $\lambda$.
\ificml

\fi
Notably, \ificml the runtime\else it \fi does not depend
on the number of principal components in $\bv{A}_\lambda$,
a cost incurred by any method that applies the projection $\bv{P}_{\bv{A}_\lambda}$ directly, either by explicitly computing the top principal components of $\bv{A}$,  or even by just computing an orthogonal span for these components.

As mentioned, the above theorem also yields an algorithm for principal component \emph{regression} that computes $\bv{A}_\lambda^\pinv \bv{b}$ without finding $\bv{A}_\lambda$. We achieve this result by introducing a robust reduction from projection to PCR, that again relies on ridge regression as a computational primitive.
\begin{corollary}[Principal component regression without PCA]\label{cor:pcr}
Given $\bv{A}\in\mathbb{R}^{n\times d}$ and $\bv{b} \in \mathbb{R}^n$, Algorithm~\ref{alg:pcr} uses $\tilde O(\gamma^{-2} \log(1/\epsilon))$ approximate applications of $(\bv{A}^\T\bv{A}+\lambda \bv{I})^{-1}$ and returns $\bv x$ with $\norm{\bv{x}-\bv{A}_\lambda^{\pinv}\bv{b}}_{\bv{A}^\T\bv{A}} \le~\epsilon \norm{\bv{b}}_2$.
\end{corollary}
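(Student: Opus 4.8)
The plan is to reduce principal component regression to a single call of the projection routine of Theorem~\ref{thm:main} followed by a short, well-conditioned ``inversion'' that again uses only black-box ridge solves. First I would record the algebraic identity driving the reduction. Writing $\bv{A}=\bv{U}\bv{\Sigma}\bv{V}^\T$ and letting $\bv{U}_\lambda,\bv{\Sigma}_\lambda,\bv{V}_\lambda$ restrict to the singular triples with $\sigma_i^2\ge\lambda$, one has $\bv{A}_\lambda=\bv{A}\bv{P}_{\bv{A}_\lambda}=\bv{U}_\lambda\bv{\Sigma}_\lambda\bv{V}_\lambda^\T$, hence $\bv{A}_\lambda^\pinv\bv{b}=\bv{V}_\lambda\bv{\Sigma}_\lambda^{-1}\bv{U}_\lambda^\T\bv{b}$, while $\bv{P}_{\bv{A}_\lambda}\bv{A}^\T\bv{b}=\bv{V}_\lambda\bv{\Sigma}_\lambda\bv{U}_\lambda^\T\bv{b}$. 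Therefore $\bv{A}_\lambda^\pinv\bv{b}=f(\bv{A}^\T\bv{A})\,\bv{P}_{\bv{A}_\lambda}\bv{A}^\T\bv{b}$ for \emph{any} function $f$ agreeing with $x\mapsto 1/x$ on $[\lambda,\sigma_1^2]$ -- its values on $[0,\lambda)$ are irrelevant, since $\bv{P}_{\bv{A}_\lambda}\bv{A}^\T\bv{b}$ has no component there. So it suffices to (i) approximate the vector $\bv{P}_{\bv{A}_\lambda}\bv{A}^\T\bv{b}$ and then (ii) apply to it an operator acting like $(\bv{A}^\T\bv{A})^{-1}$ on the top subspace, on which $\bv{A}^\T\bv{A}$ has all eigenvalues in $[\lambda,\sigma_1^2]$ and is well conditioned relative to $\lambda$.

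For step (i), Algorithm~\ref{alg:pcr} would first invoke Algorithm~\ref{alg:pcp} on $\bv{y}=\bv{A}^\T\bv{b}$ with an internal accuracy $\epsilon'$ to be fixed, obtaining $\bv{z}=\bv{P}_{\bv{A}_\lambda}\bv{A}^\T\bv{b}+\bv{\xi}$ with $\norm{\bv{\xi}}_2\le\epsilon'\norm{\bv{A}^\T\bv{b}}_2\le\epsilon'\sigma_1\norm{\bv{b}}_2$, at a cost of $\tilde O(\gap^{-2}\log(1/\epsilon'))$ ridge solves. For step (ii) I would run the preconditioned Richardson iteration $\bv{x}^{(j+1)}=\bv{x}^{(j)}+(\bv{A}^\T\bv{A}+\lambda\bv{I})^{-1}(\bv{z}-\bv{A}^\T\bv{A}\bv{x}^{(j)})$ from $\bv{x}^{(0)}=\bv{0}$, costing one ridge solve and a constant number of products with $\bv{A},\bv{A}^\T$ per step. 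With exact solves it returns $g_T(\bv{A}^\T\bv{A})\bv{z}$, where $g_T(x)=x^{-1}\bigl(1-(\lambda/(x+\lambda))^T\bigr)$ for $x>0$ and $g_T(0)=T/\lambda$; on the top subspace the per-step error contracts by $\max_i\lambda/(\sigma_i^2+\lambda)\le 1/2$, so $T=\Theta(\log(\sigma_1/(\sqrt\lambda\,\epsilon)))$ iterations make $g_T$ agree with $f$ there to within the target precision. The output is $\bv{x}=\bv{x}^{(T)}$.

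The analysis would then split $\norm{\bv{x}-\bv{A}_\lambda^\pinv\bv{b}}_{\bv{A}^\T\bv{A}}$, via the triangle inequality, into (a) the iterative-solver error in forming $g_T(\bv{A}^\T\bv{A})\bv{z}$ (zero with exact solves, controlled as in Theorem~\ref{thm:main} with approximate ones), (b) $\norm{(g_T-f)(\bv{A}^\T\bv{A})\,\bv{P}_{\bv{A}_\lambda}\bv{A}^\T\bv{b}}_{\bv{A}^\T\bv{A}}$, and (c) $\norm{g_T(\bv{A}^\T\bv{A})\bv{\xi}}_{\bv{A}^\T\bv{A}}$. Diagonalizing $\bv{A}^\T\bv{A}$: on the top subspace $|\sigma_i(g_T-f)(\sigma_i^2)|\le 2^{-T}/\sqrt\lambda$ and below threshold the argument vanishes, so (b)$\le 2^{-T}(\sigma_1/\sqrt\lambda)\norm{\bv{b}}_2$. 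Term (c) is the crux, and is exactly where the $\bv{A}^\T\bv{A}$ norm matters: although $g_T$ can be as large as $T/\lambda$ on the small and zero singular directions where $\bv{\xi}$ may carry mass, the weight $\sigma_i$ built into the norm gives $\sigma_i g_T(\sigma_i^2)\le \sigma_i T/(\sigma_i^2+\lambda)\le T/(2\sqrt\lambda)$ uniformly, so (c)$\le (T/(2\sqrt\lambda))\norm{\bv{\xi}}_2\le (T\sigma_1/(2\sqrt\lambda))\epsilon'\norm{\bv{b}}_2$, and in particular $\ker\bv{A}$ contributes nothing. Choosing $\epsilon'=\Theta(\epsilon\sqrt\lambda/(T\sigma_1))$ and $T=\Theta(\log(\sigma_1/(\sqrt\lambda\,\epsilon)))$ drives every term to $O(\epsilon\norm{\bv{b}}_2)$; since $\log(1/\epsilon')=O(\log(1/\epsilon)+\log(\sigma_1^2/\lambda))$ and the inversion adds only $O(\log(1/\epsilon))$ further ridge solves, the total is $\tilde O(\gap^{-2}\log(1/\epsilon))$ approximate applications of $(\bv{A}^\T\bv{A}+\lambda\bv{I})^{-1}$. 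A final rescaling of $\epsilon$ removes the constant factor.

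I expect the main obstacle to be term (c): a naive operator-norm analysis of the inversion step fails, because the error the projection routine leaves in a direction with singular value $\sigma_i$ is amplified by $1/\sigma_i^2$, which is unbounded; the remedy is to argue per singular direction and to exploit that the guarantee is stated in $\norm{\cdot}_{\bv{A}^\T\bv{A}}$, which down-weights precisely those directions by $\sigma_i^2$. A secondary point of care -- essentially automatic in this norm -- is that both the projection subroutine and the ridge solves are only approximate, so one must confirm (as in Theorem~\ref{thm:main}) that neither kernel leakage nor accumulated solve error spoils the bound; if desired, one extra approximate application of $\bv{P}_{\bv{A}_\lambda}$ to $\bv{x}^{(T)}$ annihilates any residual mass in the tiny directions and makes this entirely transparent.
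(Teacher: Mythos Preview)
Your proposal is correct and is essentially the paper's own argument: your preconditioned Richardson iteration $\bv{x}^{(j+1)}=\bv{x}^{(j)}+(\bv{A}^\T\bv{A}+\lambda\bv{I})^{-1}(\bv{z}-\bv{A}^\T\bv{A}\bv{x}^{(j)})$ from $\bv{x}^{(0)}=\bv{0}$ produces exactly $p_T\bigl((\bv{A}^\T\bv{A}+\lambda\bv{I})^{-1}\bigr)\bv{z}$ with $p_T(x)=\sum_{i=1}^T\lambda^{i-1}x^i$, which is precisely the truncated series the paper uses (Lemma~\ref{lem:pcr-main} and Algorithm~\ref{alg:pcr}), and your three-term error split matches the paper's decomposition into the polynomial approximation error (Lemma~\ref{pcrMatrixApprox}) plus propagated projection/solve error. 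Your direct per-singular-direction bound $\sigma_i\,g_T(\sigma_i^2)\le T/(2\sqrt\lambda)$ for term~(c) is in fact a little cleaner than the paper's inductive error-tracking in the proof of Lemma~\ref{lem:pcr-main}, but the route and the resulting parameters are the same.
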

Corollary~\ref{cor:pcr} gives the first known algorithm
for PCR that avoids the cost of principal component analysis.

\subsection{Related work}
A number of papers attempt to alleviate the high cost of principal component analysis when
solving PCR. It has been shown that an approximation to
$\bv{A}_\lambda$ suffices for solving the regression problem \cite{chan1990computing,boutsidis2014faster}. 
Unfortunately, even the fastest approximations are much slower than routines for ridge regression and 
inherently incur a linear dependence on the number of principal components above $\lambda$.

More closely related to our approach is work on the \emph{matrix sign function}, an important operation in control theory, quantum chromodynamics, and scientific computing in general.
Approximating the sign function often involves matrix polynomials 
similar to our ``sharpening polynomial'' that converts ridge regression to principal component projection. Significant effort addresses Krylov methods for applying 
such operators without computing them explicitly \cite{van2002numerical,Frommer2008}. 

Our work differs from these methods in an important way: since we only assume access to an approximate ridge regression algorithm, it is essential that our sharpening step is robust to noise. Our iterative polynomial construction allows for a complete and rigorous noise analysis that is not available for Krylov methods, while at the same time eliminating space and post-processing costs.
\ificml

\fi
Iterative approximations to the matrix sign function have been proposed, but lack rigorous noise analysis \cite{Higham:2008:FM}.

\subsection{Paper layout}
\begin{description}
\ificml \itemsep0em \fi
\item Section~\ref{sec:prelims}: Mathematical and algorithmic preliminaries.
\item Section~\ref{sec:ridge2project}: Develop a PCA-free algorithm for principal component projection based on a ridge regression subroutine.
\item Section~\ref{sec:project2regress}: Show how our approximate projection algorithm can be used to solve PCR, again without PCA.
\item Section~\ref{sec:matrix-step-function}: Detail our iterative approach to sharpening the smooth ridge regression projection towards true projection via a low degree sharpening polynomial.
\item Section~\ref{sec:experiments}: Empirical evaluation of our principal component projection and regression algorithms.
\end{description}

\section{Preliminaries}\label{sec:prelims}

\ificml
\newcommand\prelimHeader{\textbf}
\else
\newcommand\prelimHeader{\paragraph}
\fi

\prelimHeader{Singular value decomposition.}
Any matrix $\bv{A} \in \R^{n \by d}$ of rank $r$ has a
singular value decomposition (SVD) $\bv{A} =
\bv{U}\bv{\Sigma}\bv{V}^\T$, where $\bv{U}\in \mathbb{R}^{n \by r}$
and $\bv{V} \in \mathbb{R}^{d \by r}$ both have orthonormal columns
and $\bv{\Sigma} \in \R^{r\times r}$ is a diagonal matrix. The columns of
$\bv{U}$ and $\bv{V}$ are the left and right singular vectors of
$\bv{A}$. Moreover, $\bv{\Sigma} =
\diag(\sigma_1(\bv{A}),...,\sigma_r(\bv{A)})$, where $\sigma_1(\bv{A})
\ge \sigma_2(\bv{A}) \ge ... \ge \sigma_r(\bv{A}) > 0$ are the
singular values of $\bv{A}$ in decreasing order. 

The columns of $\bv{V}$ are the eigenvectors of the covariance matrix
$\bv{A}^\T \bv{A}$, i.e.\ the \emph{principal components} of the data,
and the eigenvalues of the covariance matrix are the squares of the
singular values $\sigma_1, \dots, \sigma_r$.

\prelimHeader{Functions of matrices.}
If $f : \R \to \R$ is a scalar function and $\vS = \diag(s_1, \dots,
s_n)$ is a diagonal matrix, we define by $f(\vS) \defeq \diag(f(s_1),
\dots, f(s_n))$ the entrywise application of $f$ to the diagonal. For a non-diagonal matrix $\bv{A}$ with SVD $\bv{A} = \bv{U\Sigma V}^\T$ we define $f(\vA) \defeq
\vU f(\vSigma) \vV^\T$.

\prelimHeader{Matrix pseudoinverse.}
We define the \emph{pseudoinverse} of $\bv{A}$ as
$\bv{A}^\pinv = f(\bv{A})^\T$ where $f(x) = 1/x$. The
pseudoinverse is essential in the context of regression, as the vector
$\bv{A}^\pinv \bv{b}$ minimizes the squared error $\ltwo{\bv{A}\bv{x}
  - \bv{b}}^2$.

\prelimHeader{Principal component projection.}
Given a threshold $\lambda > 0$ let $k$ be the largest index
with $\sigma_k(\bv{A})^2 \ge \lambda$ and define:
\begin{align*}
  \bv{A}_\lambda &~\defeq~
    \bv{U} \diag(\sigma_1, \dots, \sigma_k, 0, \dots, 0) \bv{V}^\T.
\end{align*}
The matrix $\bv{A}_\lambda$ contains $\bv{A}$'s rows projected to the
span of all principal components having squared singular value at
least $\lambda$. We sometimes write $\bv{A}_\lambda = \bv{A}
\bv{P}_{\bv{A}_\lambda}$ where $\bv{P}_{\bv{A}_\lambda}\in \R^{d\times
  d}$ is the projection onto these top components. Here
$\bv{P}_{\bv{A}_\lambda} = f(\bv{A}^\T\bv{A})$ where $f(x)$ is a step
function: $0$ if $x < \lambda$ and $1$ if $x \ge \lambda$.

\prelimHeader{Miscellaneous notation.}
For any positive semidefinite $\bv{M},\bv{N} \in \mathbb{R}^{d \times d}$ we use $\bv{N} \preceq \bv{M}$ to denote that $\bv{M}-\bv{N}$ is positive semidefinite. For any $\bv{x}\in\R^d$, $\norm{\bv{x}}_\bv{M} \eqdef \sqrt{\bv{x}^\T\bv{M}\bv{x}}$.

\prelimHeader{Ridge regression.}
Ridge regression is the problem of computing, given a regularization
parameter $\lambda > 0$:
\begin{align}
  \ridgesol{\vx}{\lambda} &\eqdef
    \argmin_{\vx \in \R^d} \ltwo{\vA \vx - \vb}^2 + \lambda \ltwo{\vx}^2.
  \label{eq:ridge}
\end{align}
The solution to \eqref{eq:ridge} is given by $\bv{x}^\lambda = \left (\bv{A}^\T \bv{A} + \lambda \bv{I} \right )^{-1}\bv{A}^\T \bv{b}$. Applying the matrix $\left(\bv{A}^\T \bv{A} + \lambda \bv{I} \right )^{-1}$ to $\bv{A}^\T \bv{b}$ is equivalent to solving the convex minimization problem:
\begin{align*}
\bv{x}^\lambda &= \argmin_{\bv{x}\in\mathbb{R}^d} \bv{x}^\T\bv{A}^\T\bv{A}\bv{x} - 2\bv{y}^\T\bv{x} + \lambda \norm{\bv{x}}_2^2,
\end{align*}
for $\bv{y} = \bv{A}^\T \bv{b}$.
A vast literature studies solving problems of this form via (accelerated) gradient descent, stochastic variants, and random sketching \cite{nesterov1983method,nelson2013osnap,shalev2014accelerated,lin2014accelerated,frostig15unregularizing,cohen2015uniform}. 
We summarize a few, now standard, runtimes achievable by these iterative methods:
\begin{lemma}[Ridge regression runtimes]\label{ridgeRuntime}
Given $\bv{y} \in \R^{d}$ let $\bv{x}^* =
(\bv{A}^\T\bv{A}+\lambda\bv{I})^{-1}\bv{y}.$ 
There is an algorithm,
$\textsc{ridge}(\bv{A},\bv{\lambda},\bv{y},\epsilon)$ that, for any $\epsilon > 0$, returns
$\bv{\tilde x}$ such that
$$\norm{\bv{\tilde x} - \bv{x}^*}_{\bv{A}^\T\bv{A}+\lambda\bv{I}} \le
\epsilon\norm{\bv{y}}_{(\bv{A}^\T\bv{A}+\lambda\bv{I})^{-1}}.$$ It
runs in time $T_{\textsc{ridge}}(\bv{A},\bv{\lambda},\epsilon) =
O\left (\nnz(\bv{A})\sqrt{\kappa_\lambda}\cdot \log(1/\epsilon)\right
)$ where $\kappa_\lambda= \sigma_1^2(\bv{A})/\lambda$ is the condition
number of the regularized system and $\nnz(\bv{A})$ is the number of
nonzero entries in $\bv{A}$. There is a also stochastic algorithm that, for
any $\delta > 0$, gives the same guarantee with probability $1-\delta$
in time
$$T_{\textsc{ridge}}(\bv{A},\bv{\lambda},\epsilon,\delta) = O\left (\left(\nnz(\bv{A}) + d \sr(\bv{A}) \kappa_{\lambda}\right)\cdot \log(1/\delta\epsilon)\right ),$$
where
$\sr(\bv A) = \norm{\bv{A}}_F^2 /\norm{\bv{A}}_2^2$ is $\bv{A}$'s stable rank.
When $\nnz(\bv{A}) \geq d \sr (\bv{A}) \kappa_\lambda$ the runtime can be improved to
$$T_{\textsc{ridge}}(\bv{A},\bv{\lambda},\epsilon,\delta) = \tilde{O} (
\sqrt{\nnz(\bv{A}) \cdot d \sr (\bv{A}) \kappa_\lambda}
\cdot \log(1/\delta\epsilon) ),$$
where the $\tilde{O}$ hides a factor of  $ \log \left(\frac {\nnz(\bv A)} {d \sr (\bv{A})\kappa_\lambda} \right)$.
\end{lemma}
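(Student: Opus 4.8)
The plan is to derive all three runtimes from standard convergence guarantees for first‑order optimization, so that the work is almost entirely bookkeeping: translating abstract iteration counts into the parameters $\nnz(\bv{A})$, $\sr(\bv{A})$, $\kappa_\lambda$, and $\epsilon$. First I would recast ridge regression as minimizing the quadratic $f(\bv{x}) = \tfrac12\bv{x}^\T\bv{H}\bv{x} - \bv{y}^\T\bv{x}$ with $\bv{H} := \bv{A}^\T\bv{A} + \lambda\bv{I}$, whose unique minimizer is $\bv{x}^* = \bv{H}^{-1}\bv{y}$ and for which, completing the square, $f(\bv{x}) - f(\bv{x}^*) = \tfrac12\norm{\bv{x}-\bv{x}^*}_{\bv{H}}^2$. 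Since $\lambda\bv{I} \preceq \bv{H} \preceq (\sigma_1^2(\bv{A})+\lambda)\bv{I}$, the function $f$ is $\lambda$-strongly convex and $(\sigma_1^2(\bv{A})+\lambda)$-smooth, with condition number $1+\kappa_\lambda \le 2\kappa_\lambda$ (we may assume $\kappa_\lambda\ge1$). The observation that ties the two error metrics together is that starting any iterative method at $\bv{x}_0 = \bv{0}$ gives the \emph{exact} identity $\norm{\bv{x}_0-\bv{x}^*}_{\bv{H}} = \sqrt{\bv{y}^\T\bv{H}^{-1}\bv{y}} = \norm{\bv{y}}_{\bv{H}^{-1}}$, so any solver that contracts the $\bv{H}$-norm of the error by a factor $\epsilon$ from $\bv{x}_0 = \bv{0}$ returns exactly a vector meeting the lemma's guarantee.

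For the deterministic bound I would run accelerated gradient descent (equivalently Chebyshev semi-iteration or conjugate gradients) on $f$ \cite{nesterov1983method}: on a quadratic of condition number $\kappa$ the $\bv{H}$-norm error after $t$ steps is $O\!\left((1-1/\sqrt{\kappa})^t\right)\cdot\norm{\bv{x}_0-\bv{x}^*}_{\bv{H}}$, so $t = O(\sqrt{\kappa_\lambda}\log(1/\epsilon))$ steps suffice, each costing one product with $\bv{A}$, one with $\bv{A}^\T$, and $O(d)$ for the $\lambda\bv{I}$ term and the momentum update, i.e.\ $O(\nnz(\bv{A}))$ time once all-zero columns are dropped. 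This gives the first runtime.

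For the stochastic bounds I would split $f = \sum_{i=1}^n f_i$ over the rows $\bv{a}_i^\T$ of $\bv{A}$ with $f_i(\bv{x}) = \tfrac12(\bv{a}_i^\T\bv{x})^2 + \tfrac{\lambda}{2n}\norm{\bv{x}}_2^2 - \tfrac1n\bv{y}^\T\bv{x}$, so that $\nabla f_i$ costs $O(\nnz(\bv{a}_i))$ after the dense $\lambda\bv{I}$ contribution is folded into lazy updates, and then apply a variance-reduced stochastic gradient method (SVRG/SDCA and variants \cite{shalev2014accelerated,frostig15unregularizing,cohen2015uniform}) with sampling probabilities proportional to $\norm{\bv{a}_i}_2^2$. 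The relevant finite-sum condition number is then $\tfrac1\lambda\sum_i\norm{\bv{a}_i}_2^2 = \norm{\bv{A}}_F^2/\lambda = \sr(\bv{A})\kappa_\lambda$ (using $\norm{\bv{A}}_F^2 = \sr(\bv{A})\,\sigma_1^2(\bv{A})$ and $\sigma_1^2(\bv{A})/\lambda = \kappa_\lambda$), so the relative $\bv{H}$-norm error drops below $\epsilon$ with probability $1-\delta$ (a constant number of restarts plus Markov's inequality upgrades the in-expectation bound) using
\[
O\!\left(\big(n + \sr(\bv{A})\kappa_\lambda\big)\log(1/\delta\epsilon)\right)
\]
component-gradient evaluations. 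Amortizing, $n$ evaluations cost $O(\nnz(\bv{A}))$, so the total is $O\big(\big(\nnz(\bv{A}) + \tfrac{\nnz(\bv{A})}{n}\sr(\bv{A})\kappa_\lambda\big)\log(1/\delta\epsilon)\big)$, and bounding the average row density $\nnz(\bv{A})/n \le d$ gives the second runtime. For the third, I would accelerate the finite-sum solver (Catalyst, or accelerated SDCA/coordinate descent \cite{lin2014accelerated,shalev2014accelerated}), which replaces the iteration count by $\tilde O(n + \sqrt{n\,\sr(\bv{A})\kappa_\lambda})$; propagating costs as before, the square-root term becomes $\sqrt{\nnz(\bv{A})\cdot d\,\sr(\bv{A})\kappa_\lambda}$, which dominates the $O(\nnz(\bv{A}))$ first-pass cost exactly when $\nnz(\bv{A}) \ge d\,\sr(\bv{A})\kappa_\lambda$, with the suppressed logarithmic factor coming from the outer acceleration loop.

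The hard part is not any single estimate but making the translation tight and uniform: the cited optimization results are phrased for function-value suboptimality, in expectation, and in terms of abstract per-component condition numbers, whereas the lemma demands a \emph{relative} error in the $\bv{H}$-norm, \emph{high probability}, and explicit $\nnz$/$\sr$/$\kappa_\lambda$ dependence. The ingredients that close this gap — the energy-norm identity $f-f^* = \tfrac12\norm{\bv{x}-\bv{x}^*}_{\bv{H}}^2$, the exact value of the error at $\bv{x}_0 = \bv{0}$, importance sampling to replace $\max_i\norm{\bv{a}_i}_2^2$-type quantities by the average encoded in $\norm{\bv{A}}_F^2$, lazy handling of the $\lambda\bv{I}$ term so per-step cost is $O(\nnz(\bv{a}_i))$ rather than $O(d)$, and the density bound $\nnz(\bv{A})/n \le d$ — are precisely the points I would check carefully.
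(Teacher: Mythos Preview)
The paper does not actually prove this lemma: it is stated in the preliminaries as a summary of ``now standard'' runtimes, with the citations \cite{nesterov1983method,nelson2013osnap,shalev2014accelerated,lin2014accelerated,frostig15unregularizing,cohen2015uniform} standing in for the argument. Your proposal is therefore not in conflict with the paper's proof; rather, it supplies the derivation that the paper omits. The sketch is correct in its essentials: the quadratic reformulation, the identity $\norm{\bv{x}_0-\bv{x}^*}_{\bv{H}}=\norm{\bv{y}}_{\bv{H}^{-1}}$ at $\bv{x}_0=\bv{0}$, accelerated gradient descent for the first bound, importance-sampled variance-reduced SGD for the second (yielding the $\sr(\bv{A})\kappa_\lambda$ term via $\norm{\bv{A}}_F^2/\lambda$), and outer acceleration (Catalyst or accelerated SDCA) for the third. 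The bookkeeping points you flag --- converting expected function-value bounds to high-probability $\bv{H}$-norm bounds, lazy updates for the $\lambda\bv{I}$ term, and the density bound $\nnz(\bv{A})/n\le d$ --- are exactly the right ones to check, and they all go through.
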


Typically, the regularized condition number $\kappa_\lambda$ will be significantly smaller than the full condition number of $\bv{A}^\T \bv{A}$.

\section{From ridge regression to principal component projection}\label{sec:ridge2project}

We now describe how to approximately apply $\bv{P}_{\bv{A}_\lambda}$
using any black-box ridge regression routine. The key idea is to first
compute a soft step function of $\bv{A}^\T\bv{A}$ via ridge
regression, and then to sharpen this step to approximate
$\bv{P}_{\bv{A}_\lambda}$.
\ificml

\fi
Let $\bv{B}\bv{x} = (\bv{A}^\T \bv{A} +\lambda \bv{I})^{-1}(\bv{A}^\T
\bv{A})\bv{x}$ be the result of applying ridge regression to
$(\bv{A}^\T\bv{A})\bv{x}$. In the language of functions of matrices,
we have $\bv{B} = r(\bv{A}^\T\bv{A})$, where
\begin{align*}
r(x) \eqdef \frac{x}{x+\lambda}.
\end{align*}
The function $r(x)$ is a smooth step about $\lambda$ (see Figure
\ref{soft_projection}). It primarily serves to map the eigenvalues of
$\bv{A}^\T \bv{A}$ to the range $[0,1]$, mapping those exceeding the
threshold $\lambda$ to a value above $1/2$ and the rest to a value
below $1/2$.
\ificml

\fi
To approximate the projection $\bv{P}_{\bv{A}_{\lambda}}$, it would
now suffice to apply a simple symmetric step function:
\begin{align*}
  \label{eq:sharpen_function}
 s(x) =
  \begin{cases}
    0 &\textif x < 1/2 \\
    1 &\textif x \ge 1/2
  \end{cases}
\end{align*}
It is easy to see that $s(\bv{B}) = s(r(\bv{A}^\T\bv{A})) =
\bv{P}_{\bv{A}_\lambda}$. For $x \ge \lambda$, $r(x) \ge 1/2$ and so
$s(r(x)) = 1$. Similarly for $x < \lambda$, $r(x) < 1/2$ and hence
$s(r(x)) = 0$. That is, the symmetric step function exactly converts
our smooth ridge regression step to the true projection operator.

\subsection{Polynomial approximation to the step function}

While computing $s(\bv B)$ directly is expensive, requiring the SVD of $\bv{B}$, we show how to approximate this function with a \emph{low-degree polynomial}. We also show how to apply this polynomial efficiently and stably using a simple iterative algorithm. Our main result, proven in Section \ref{sec:matrix-step-function}, is:

\begin{lemma}[Step function algorithm]
\label{lem:matrix-step-main}
Let $\bv{S} \in \R^{d \times d}$ be symmetric with every eigenvalue $\sigma$ satisfying $\sigma \in [0, 1]$ and $|\sigma - 1/2| \geq \gap$. Let $\algA$ denote a procedure that on $\bv x \in \R^d$ produces $\algA(\bv x)$ with $\|\algA(\bv x) - \bv{S} \bv x\|_2 = O(\epsilon^2 \gap^2) \|\bv x\|_2$. Given $\bv y \in \R^d$ set
$\bv{s}_0 := \algA(\bv y)$, $\bv{w}_0 := \bv{s}_0 - \frac{1}{2} \bv{y}$, and for $k \geq 0$ set 
\[
\bv{w}_{k + 1} := 4 \left(\frac{2k + 1}{2k+2}\right) \algA (\bv{w}_{k} - \algA(\bv{w}_{k}))
\]
and $\bv{s}_{k + 1} := \bv{s}_{k} + \bv{w}_{k + 1}$. If all arithmetic operations are performed with $\Omega(\log(d/\epsilon \gap))$ bits of precision then 
$\|\bv{s}_{q} - s(\bv{S}) \bv y\|_2 = O(\epsilon) \|\bv{y}\|_2$ for $q = \Theta(\gap^{-2} \log(1/\epsilon))$.
\end{lemma}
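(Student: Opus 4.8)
The plan is to peel the argument into three layers and combine them with the triangle inequality: \emph{(i)} identify the polynomial the recurrence computes under exact arithmetic with an exact multiplication oracle, and bound how well it approximates the step function $s$; \emph{(ii)} bound the perturbation caused by replacing exact multiplication by $\bv{S}$ with the noisy oracle $\algA$; \emph{(iii)} fold the finite-precision roundoff into the oracle's error budget.

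\emph{Layer (i).} Assume momentarily that $\algA(\bv x)=\bv{S}\bv x$ and arithmetic is exact. A short induction on $k$, using $\binom{2k+2}{k+1}=4\cdot\tfrac{2k+1}{2k+2}\binom{2k}{k}$ and $\bv{w}_k-\bv{S}\bv{w}_k=(\bv I-\bv S)\bv{w}_k$, shows $\bv{w}_k^\star=\binom{2k}{k}\bigl(\bv S(\bv I-\bv S)\bigr)^k\bigl(\bv S-\tfrac12\bv I\bigr)\bv y$ for all $k\ge 0$ (the base case being $\bv{w}_0=\bv{s}_0-\tfrac12\bv y=(\bv S-\tfrac12\bv I)\bv y$). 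Telescoping $\bv{s}_q^\star=\bv{s}_0+\sum_{k=1}^q\bv{w}_k^\star$ and reinstating the $k=0$ term yields $\bv{s}_q^\star=h_q(\bv S)\bv y$ with $h_q(x)=\tfrac12+(x-\tfrac12)\sum_{k=0}^q\binom{2k}{k}\bigl(x(1-x)\bigr)^k$. Since $\sum_{k\ge0}\binom{2k}{k}u^k=(1-4u)^{-1/2}$ and $1-4x(1-x)=(2x-1)^2$, the full series gives $h_\infty(x)=\tfrac12+\tfrac12\operatorname{sgn}(x-\tfrac12)=s(x)$ for $x\ne\tfrac12$, so $|h_q(x)-s(x)|=|x-\tfrac12|\sum_{k>q}\binom{2k}{k}(x(1-x))^k$. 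On the admissible spectrum, $x\in[0,1]$ with $|x-\tfrac12|\ge\gap$, hence $x(1-x)=\tfrac14-(x-\tfrac12)^2\le\tfrac14-\gap^2$ and $\binom{2k}{k}(x(1-x))^k\le(1-4\gap^2)^k$; the tail is thus $O\!\bigl(\gap^{-2}(1-4\gap^2)^q\bigr)=O\!\bigl(\gap^{-2}e^{-4\gap^2 q}\bigr)$, which drops below $\epsilon$ once $q=\Theta(\gap^{-2}\log(1/\epsilon))$. As every eigenvalue of $\bv S$ meets the hypothesis, $\|\bv{s}_q^\star-s(\bv S)\bv y\|_2\le\epsilon\|\bv y\|_2$.

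\emph{Layer (ii).} Keep the ideal iterates $\bv{w}_k^\star,\bv{s}_k^\star$, noting $\|\bv{w}_k^\star\|_2\le\tfrac12(1-4\gap^2)^k\|\bv y\|_2$ (using $\binom{2k}{k}\le4^k$). Write $\algA(\bv x)=\bv S\bv x+\bv\delta(\bv x)$ with $\|\bv\delta(\bv x)\|_2\le\eta\|\bv x\|_2$, $\eta=O(\epsilon^2\gap^2)$. Unwinding the two oracle calls in the update gives $\bv{w}_{k+1}=\beta_k\,\bv S(\bv I-\bv S)\bv{w}_k+\bv\xi_k$, where $\beta_k:=4\tfrac{2k+1}{2k+2}\in[2,4)$ and $\|\bv\xi_k\|_2=O(\eta\|\bv{w}_k\|_2)$. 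The crux is that, \emph{because $\bv S$ has eigenvalues in $[0,1]$ bounded away from $\tfrac12$ by $\gap$}, the PSD matrix $\bv S(\bv I-\bv S)$ has spectral norm $\le\tfrac14-\gap^2$, so $\|\beta_k\,\bv S(\bv I-\bv S)\|_2\le4(\tfrac14-\gap^2)=1-4\gap^2<1$. Hence $\bv{e}_k:=\bv{w}_k-\bv{w}_k^\star$ obeys the contracting recurrence $\|\bv{e}_{k+1}\|_2\le(1-4\gap^2)\|\bv{e}_k\|_2+O(\eta)\|\bv{w}_k\|_2$; bounding $\|\bv{w}_k\|_2\le\|\bv{w}_k^\star\|_2+\|\bv{e}_k\|_2$ and unrolling yields $\|\bv{e}_k\|_2=O\!\bigl(\eta\,k\,(1-4\gap^2)^{k-1}\bigr)\|\bv y\|_2$, so $\|\bv{s}_q-\bv{s}_q^\star\|_2\le\sum_{k=0}^q\|\bv{e}_k\|_2=O(\eta/\gap^4)\|\bv y\|_2=O(\epsilon^2/\gap^2)\|\bv y\|_2$. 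This is precisely why $\algA$ need only be $O(\epsilon^2\gap^2)$-accurate: the $\Theta(\gap^{-4})$ amplification from summing $\Theta(\gap^{-2}\log(1/\epsilon))$ slowly-decaying perturbations is compensated, leaving an $O(\epsilon)\|\bv y\|_2$ contribution in the regime $\epsilon\lesssim\gap^2$ of interest (otherwise run with target accuracy $\min(\epsilon,\gap^2)$, still within $\tilde O(\gap^{-2}\log(1/\epsilon))$ oracle calls).

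\emph{Layer (iii), and conclusion.} With $\Omega(\log(d/\epsilon\gap))$ bits each elementary operation has relative error $(\epsilon\gap/d)^{\Omega(1)}$; since the iteration runs $q=\operatorname{poly}(\gap^{-1},\log(1/\epsilon))$ steps over $O(d)$ numbers apiece, and every intermediate vector has norm $O(\|\bv y\|_2)$ (no overflow), the accumulated roundoff is far below $\epsilon^2\gap^2\|\bv y\|_2$ and absorbs into the $\bv\xi_k$ term of Layer (ii). Thus $\|\bv{s}_q-s(\bv S)\bv y\|_2\le\|\bv{s}_q-\bv{s}_q^\star\|_2+\|\bv{s}_q^\star-s(\bv S)\bv y\|_2=O(\epsilon)\|\bv y\|_2$. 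I expect Layer (ii) to be the main obstacle: getting a genuinely contracting recurrence for the error — which hinges entirely on the bound $\|\beta_k\bv S(\bv I-\bv S)\|_2\le1-4\gap^2$, hence on the spectral-gap hypothesis — and then tracking the $\operatorname{poly}(1/\gap)$ noise amplification precisely enough to match the oracle's $\epsilon^2\gap^2$ accuracy; Layer (i) is essentially a generating-function computation and Layer (iii) is routine bookkeeping.
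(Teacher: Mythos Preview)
Your three-layer decomposition mirrors the paper's structure, and Layer~(i) is the paper's argument in different clothing: the paper derives the polynomial as the truncated Taylor series for $1/\sqrt{x}$ about $x=1$ (Lemma~\ref{lem:convergence-of_p_k}) and bounds the remainder via $\prod_{j=1}^k\frac{2j-1}{2j}\le 1/\sqrt{k}$ (Lemma~\ref{lem:quality_of_sign_poly}), while you reach the identical polynomial through the generating function $\sum_{k\ge 0}\binom{2k}{k}u^k=(1-4u)^{-1/2}$. Your tail bound, using only $\binom{2k}{k}\le 4^k$, is one $\sqrt{k}$-factor cruder than the paper's, costing an extra $\log(1/\gap)$ in $q$; this is trivially repaired with the sharper Stirling-type estimate.

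Layer~(ii) is where you genuinely diverge, and your stated expectation that stability ``hinges entirely'' on the gap-induced contraction $\|\beta_k\bv{S}(\bv{I}-\bv{S})\|_2\le 1-4\gap^2$ is a misconception worth correcting. The paper's stability argument (Lemma~\ref{lem:stable-sign}) does \emph{not} use the gap at all: it relies only on the non-expansiveness $\|\bv{I}-\bv{B}^2\|_2\le 1$ (valid whenever $\|\bv{B}\|_2\le 1$) together with the uniform bound $|t_k(x)|\le|p_k(x)|\le 1$ on $[-1,1]$, which keeps every ideal iterate bounded by $\|\bv{y}\|_2$ and lets oracle error accumulate only linearly in $k$. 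In the paper the gap enters solely through Layer~(i). Your contraction-based analysis is correct as far as it goes, but summing the per-term errors $\|\bv{e}_k\|_2$ by triangle inequality gives only $O(\eta/\gap^4)=O(\epsilon^2/\gap^2)$, which falls short of $O(\epsilon)$ when $\epsilon\gtrsim\gap^2$. Your proposed patch --- rerun with target accuracy $\min(\epsilon,\gap^2)$ --- does not close this: targeting $\gap^2$ would demand oracle accuracy $O(\gap^6)$, which is \emph{stronger} than the hypothesized $O(\epsilon^2\gap^2)$ precisely in the regime $\epsilon>\gap^2$ you are trying to cover. The clean fix is to follow the paper and exploit $\|\bv{w}_k^\star\|_2\le\tfrac12\|\bv{y}\|_2$ uniformly (a consequence of $|t_k|\le 1$) rather than its geometric decay; the gap then plays no role in the stability layer, and the conceptual separation between approximation quality (gap-dependent) and numerical stability (gap-free) is what the paper's route buys.
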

Note that the output $\bv{s}_q$ is an approximation to a $2q$ degree polynomial of $\bv{S}$ applied to $\bv{y}$. 
In Algorithm \ref{alg:pcp}, we give pseudocode for combining the procedure with ridge regression to solve principal component projection.
Set $\bv{S} = \bv{B}$ and let $\algA$ be an algorithm that approximately applies $\bv{B}$ to any $\bv{x}$ by applying approximate ridge regression to $\bv{A}^\T \bv{A}\bv{x}$. As long as $\bv{B}$ has no eigenvalues falling within $\gap$ of $1/2$, the lemma ensures $\norm{\bv{s}_q - \bv{P}_{\bv{A}_\lambda}\bv{y}}_2 = O(\epsilon) \norm{\bv{y}}_2$. This requires $\gap$ on order of the \emph{spectral gap}: $1 -\sigma_{k+1}^2(\bv{A})/\sigma_{k}^2(\bv{A})$, where $k$ is the largest index with $\sigma_k^2(\bv{A}) \ge \lambda$. 

\begin{algorithm}
\caption{(\textsc{pc-proj}) Principal component projection}
\label{alg:pcp}
{\bf input}: $\bv{A} \in \mathbb{R}^{n \times d}$, $\bv{y} \in \mathbb{R}^{d}$, error $\epsilon$, failure rate $\delta$, threshold $\lambda$, gap $\gap \in (0,1)$
\begin{algorithmic}
\STATE{$q := c_1 \gap^{-2} \log(1/\epsilon)$}
\STATE{$\epsilon' := c_2^{-1} \epsilon^2\gap^2/\sqrt{\kappa_\lambda}$,\hspace{1em} $\delta' := \delta/(2q)$}
\STATE{$\bv{s} := \textsc{ridge}(\bv{A},\lambda,\bv{A}^\T\bv{A}\bv{y},\epsilon',\delta')$}
\STATE{$\bv{w} := \bv{s}-\frac{1}{2}\bv{y}$}
\FOR{$k = 0,..., q-1$}
\STATE{$\bv{t} := \bv{w} - \textsc{ridge}(\bv{A},\lambda,\bv{A}^\T\bv{A}\bv{w},\epsilon',\delta')$}
\STATE{$\bv{w} := 4\left (\frac{2k+1}{2k+2}\right ) \textsc{ridge}(\bv{A},\lambda,\bv{A}^\T\bv{A}\bv{t},\epsilon',\delta')$}
\STATE{$\bv{s} := \bv{s} + \bv{w}$}
\ENDFOR \\
\STATE{\textbf{return} $\bv{s}$}
\end{algorithmic}
\end{algorithm}
\begin{theorem}\label{pcpAlgoThm} If $\frac{1}{1-4\gap} \sigma_{k+1}(\bv{A})^2 \le \lambda \le (1-4\gap) \sigma_{k}(\bv{A})^2$ and $c_1,c_2$ are sufficiently large constants, \textsc{pc-proj} (Algorithm \ref{alg:pcp}) returns $\bv{s}$ such that with probability $\ge 1-\delta$,
\begin{align*}
\norm{\bv{s} -\bv{P}_{\bv{A}_\lambda}\bv{y}}_2 \le \epsilon \norm{\bv{y}}_2.
\end{align*}
The algorithm requires $O(\gap^{-2} \log(1/\epsilon))$ ridge regression calls, each costing $T_{\textsc{ridge}}(\bv{A},\lambda,\epsilon',\delta')$. Lemma \ref{ridgeRuntime} yields total cost (with no failure probability)
\begin{align*}
O\left (\nnz(\bv A) \sqrt{\kappa_\lambda} \gap^{-2}\log(1/\epsilon)\log\left (\kappa_\lambda/(\epsilon\gap)\right) \right )
\end{align*}
or, via stochastic methods,
\begin{align*}
\tilde O \left ( \left ( \nnz(\bv{A} + d\sr(\bv{A})\kappa_\lambda \right )\gap^{-2} \log(1/\epsilon)\log(\kappa_\lambda/(\epsilon\gap\delta ))\right )
\end{align*}
with acceleration possible when $\nnz(\bv{A}) > d\sr(\bv{A})\kappa_\lambda$.
\end{theorem}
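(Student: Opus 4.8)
The plan is to derive Theorem~\ref{pcpAlgoThm} by specializing Lemma~\ref{lem:matrix-step-main} to $\bv{S} = \bv{B}$, where $\bv{B} = (\bv{A}^\T\bv{A}+\lambda\bv{I})^{-1}(\bv{A}^\T\bv{A}) = r(\bv{A}^\T\bv{A})$, with $\algA$ the routine $\bv{x}\mapsto\textsc{ridge}(\bv{A},\lambda,\bv{A}^\T\bv{A}\bv{x},\epsilon',\delta')$, which approximately applies $(\bv{A}^\T\bv{A}+\lambda\bv{I})^{-1}$ to $\bv{A}^\T\bv{A}\bv{x}$ and hence approximately applies $\bv{B}$ to $\bv{x}$. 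First I would observe that \textsc{pc-proj} is literally the recurrence of Lemma~\ref{lem:matrix-step-main} for this choice: the pre-loop lines realize $\bv{s}_0 = \algA(\bv{y})$ and $\bv{w}_0 = \bv{s}_0 - \tfrac12\bv{y}$, and each loop pass forms $\bv{t} = \bv{w} - \algA(\bv{w})$ and then $\bv{w}\leftarrow 4(\tfrac{2k+1}{2k+2})\algA(\bv{t})$. Hence the whole theorem reduces to (a) verifying the spectral hypothesis on $\bv{B}$, (b) verifying the accuracy hypothesis on $\algA$, (c) identifying $s(\bv{B})$ with $\bv{P}_{\bv{A}_\lambda}$, and (d) tallying the cost and failure probability.

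Steps (a) and (c) are short spectral computations: the eigenvalues of $\bv{B}$ are $r(\sigma_i^2) = \sigma_i^2/(\sigma_i^2+\lambda)$, $r$ is increasing, and the spectral-gap hypotheses $\lambda\le(1-4\gap)\sigma_k(\bv{A})^2$ and $\frac{1}{1-4\gap}\sigma_{k+1}(\bv{A})^2\le\lambda$ push $r(\sigma_i^2)$ to at least $\tfrac{1}{2-4\gap}\ge\tfrac12+\gap$ for $i\le k$ and at most $\tfrac{1-4\gap}{2-4\gap}\le\tfrac12-\gap$ for $i\ge k+1$ (each comparison reduces to a trivially true inequality in $\gap$). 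So all eigenvalues of $\bv{B}$ lie in $[0,1]$ at distance $\ge\gap$ from $\tfrac12$, as the lemma requires; and since $r(\sigma_i^2)\ge\tfrac12\iff\sigma_i^2\ge\lambda$, we get $s(\bv{B}) = \bv{P}_{\bv{A}_\lambda}$.

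Step (b) is where I expect the only real subtlety. From Lemma~\ref{ridgeRuntime} with right-hand side $\bv{A}^\T\bv{A}\bv{x}$ (noting $(\bv{A}^\T\bv{A}+\lambda\bv{I})^{-1}\bv{A}^\T\bv{A}\bv{x} = \bv{B}\bv{x}$), the call returns $\tilde{\bv{x}}$ with $\norm{\tilde{\bv{x}}-\bv{B}\bv{x}}_{\bv{A}^\T\bv{A}+\lambda\bv{I}}\le\epsilon'\norm{\bv{A}^\T\bv{A}\bv{x}}_{(\bv{A}^\T\bv{A}+\lambda\bv{I})^{-1}}$. To turn this into the $\ell_2$ bound $\norm{\algA(\bv{x})-\bv{B}\bv{x}}_2 = O(\epsilon^2\gap^2)\norm{\bv{x}}_2$ that Lemma~\ref{lem:matrix-step-main} demands, I would use $\bv{A}^\T\bv{A}+\lambda\bv{I}\succeq\lambda\bv{I}$ on the left and $\bv{A}^\T\bv{A}(\bv{A}^\T\bv{A}+\lambda\bv{I})^{-1}\bv{A}^\T\bv{A}\preceq\bv{A}^\T\bv{A}\preceq\sigma_1^2\bv{I}$ on the right, getting $\norm{\tilde{\bv{x}}-\bv{B}\bv{x}}_2\le\epsilon'\sqrt{\kappa_\lambda}\norm{\bv{x}}_2$; the algorithm's choice $\epsilon' = c_2^{-1}\epsilon^2\gap^2/\sqrt{\kappa_\lambda}$ then makes the bound $c_2^{-1}\epsilon^2\gap^2\norm{\bv{x}}_2$, which is within $O(\epsilon^2\gap^2)\norm{\bv{x}}_2$ once $c_2$ exceeds the lemma's hidden constant. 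The point to watch here — really the crux of the proof — is that the $\sqrt{\kappa_\lambda}$ blowup from ridge's relative-error guarantee is exactly what the $\sqrt{\kappa_\lambda}$ in the denominator of $\epsilon'$ is there to absorb; the nested ridge solves $\algA(\bv{w}-\algA(\bv{w}))$ need no extra care because Lemma~\ref{lem:matrix-step-main}'s analysis already composes $\algA$ with itself.

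For step (d): \textsc{pc-proj} makes $1+2q$ ridge calls with $q = c_1\gap^{-2}\log(1/\epsilon)$, so choosing $c_1$ large makes $q = \Theta(\gap^{-2}\log(1/\epsilon))$ at least the iteration count the lemma needs while keeping the number of calls $O(\gap^{-2}\log(1/\epsilon))$, and a union bound over the $\le 2q+1$ calls at failure rate $\delta' = \delta/(2q)$ each gives overall failure $\le\delta$ (vacuous for the deterministic solver). Given the assumed $\Omega(\log(d/\epsilon\gap))$ bits of precision, Lemma~\ref{lem:matrix-step-main} then yields $\norm{\bv{s}-s(\bv{B})\bv{y}}_2 = O(\epsilon)\norm{\bv{y}}_2$, which is $\le\epsilon\norm{\bv{y}}_2$ after enlarging $c_1,c_2$, and $s(\bv{B}) = \bv{P}_{\bv{A}_\lambda}$ finishes the error claim. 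For the running time I would substitute $\log(1/\epsilon') = O(\log(\kappa_\lambda/(\epsilon\gap)))$ and $\log(1/(\delta'\epsilon')) = O(\log(\kappa_\lambda/(\epsilon\gap\delta)))$ (a $\log q$ factor disappearing into the $\tilde O$) into the per-call costs of Lemma~\ref{ridgeRuntime} and multiply by the $O(\gap^{-2}\log(1/\epsilon))$ calls, which gives exactly the deterministic bound $O(\nnz(\bv{A})\sqrt{\kappa_\lambda}\gap^{-2}\log(1/\epsilon)\log(\kappa_\lambda/(\epsilon\gap)))$, the stochastic bound $\tilde O((\nnz(\bv{A})+d\sr(\bv{A})\kappa_\lambda)\gap^{-2}\log(1/\epsilon)\log(\kappa_\lambda/(\epsilon\gap\delta)))$, and the accelerated variant when $\nnz(\bv{A}) > d\sr(\bv{A})\kappa_\lambda$.
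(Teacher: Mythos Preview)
Your proposal is correct and follows essentially the same approach as the paper: instantiate Lemma~\ref{lem:matrix-step-main} with $\bv{S}=\bv{B}=(\bv{A}^\T\bv{A}+\lambda\bv{I})^{-1}\bv{A}^\T\bv{A}$, verify the gap condition via the monotone map $r$, convert the ridge guarantee of Lemma~\ref{ridgeRuntime} to an $\ell_2$ bound by the same two inequalities ($\bv{A}^\T\bv{A}+\lambda\bv{I}\succeq\lambda\bv{I}$ and $\|\bv{A}^\T\bv{A}\bv{x}\|_{(\bv{A}^\T\bv{A}+\lambda\bv{I})^{-1}}\le\sigma_1(\bv{A})\|\bv{x}\|_2$), union-bound over the ridge calls, and read off the runtime. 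Your accounting of $2q+1$ calls (versus the paper's slightly loose ``$2q$'') and your explicit runtime substitution are, if anything, a touch more careful than the paper's version.
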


\begin{proof} We instantiate Lemma \ref{lem:matrix-step-main}. Let $\bv{S} = \bv{B} = (\bv{A}^\T \bv{A} + \lambda \bv{I})^{-1} \bv{A}^\T\bv{A}$. As discussed, $\bv{B} = r(\bv{A}^\T\bv{A})$ and hence all its eigenvalues fall in $[0,1]$. Specifically, $\sigma_i(\bv{B}) = \frac{\sigma_i(\bv{A})^2}{\sigma_i(\bv{A})^2 + \lambda}$. Now, $\sigma_k(\bv{B}) \ge \frac{\lambda/(1-4\gap)}{\lambda/(1-4\gap) + \lambda} = \frac{1}{2-4\gap} \ge \frac{1}{2} + \gap$ and similarly $\sigma_{k+1}(\bv{B}) \le \frac{\lambda(1-4\gap)}{\lambda(1-4\gap) + \lambda} = \frac{1-4\gap}{2-4\gap} \le \frac{1}{2}- \gap$, so all eigenvalues of $\bv{B}$ are at least $\gap$ far from $1/2$. 
\ificml

\fi
By Lemma \ref{ridgeRuntime}, for any $\bv{x}$, with probability $\ge 1-\delta'$:
\begin{align*}
\|\textsc{ridge}(\bv{A},&\lambda,\bv{A}^\T \bv{A} \bv{x},\epsilon',\delta') - \bv{B}\bv{x}\|_{\bv{A}^\T\bv{A}+\lambda\bv{I}}\\ &\le \epsilon' \norm{\bv{A}^\T\bv{A}\bv{x}}_{(\bv{A}^\T\bv{A}+\lambda\bv{I})^{-1}} \le \sigma_1(\bv{A})\epsilon' \norm{\bv{x}}_{2}.
\end{align*}
Since the minimum eigenvalue of $\bv{A}^\T\bv{A}+\lambda \bv{I}$ is $\lambda$:
\begin{align*}
\|&\textsc{ridge}(\bv{A},\lambda,\bv{A}^\T \bv{A} \bv{x},\epsilon',\delta') - \bv{B}\bv{x}\|_{2} \\&\le \frac{\sigma_1(\bv{A})}{\sqrt{\lambda}} \epsilon' \norm{\bv{x}}_{2}
\le \frac{\sqrt{\kappa_\lambda}\epsilon^2\gap^2}{c_2\sqrt{\kappa_\lambda}}\norm{\bv{x}}_{2} = O(\epsilon^2\gap^2)\norm{\bv{x}}_2.
\end{align*}
Applying the union bound over all $2q$ calls of $\textsc{ridge}$, this bound holds for all calls with probability $\ge 1-\delta'\cdot 2q = 1-\delta.$
\ificml

\fi
So, overall, by Lemma  \ref{lem:matrix-step-main}, with probability at least $1-\delta$, $\norm{\bv{s}-s(\bv{B})\bv{y}}_2 = O(\epsilon)\norm{\bv y}_2$. As discussed, $s(\bv{B}) = \bv{P}_{\bv{A}_\lambda}$. Adjusting constants on $\epsilon$ (via $c_1$ and $c_2$) completes the proof.
\end{proof}

Note that the runtime of Theorem \ref{pcpAlgoThm} includes a dependence on $\sqrt{\kappa_\lambda}$. In performing principal component projection, $\textsc{pc-proc}$ applies an \emph{asymmetric step function} to $\bv{A}^\T\bv{A}$. The optimal polynomial for approximating this step also has a $\sqrt{\kappa_\lambda}$ dependence \cite{eremenko2011polynomials}, showing that our reduction from projection to ridge regression is optimal in this regard.

\subsection{Choosing $\lambda$ and $\gap$}

Theorem \ref{pcpAlgoThm} requires $\frac{\sigma_{k+1}(\bv{A})^2}{1-4\gap} \le \lambda \le (1-4\gap) \sigma_k(\bv{A})^2.$ %
If $\lambda$ is chosen approximately equidistant from the two eigenvalues, we need $\gap = O (1 -\sigma_{k+1}^2(\bv{A})/\sigma^2_k(\bv{A}) )$.

In practice, however, it is unnecessary to explicitly specify $\gap$ or to choose $\lambda$ so precisely. With $q = O(\gap^{-2}\log(1/\epsilon))$ our projection will be approximately correct on all singular values outside the range $[(1-\gap)\lambda, (1+\gap)\lambda]$. If there are any ``intermediate'' singular values in this range, as shown in Section \ref{sec:matrix-step-function}, the approximate step function applied by Lemma \ref{lem:matrix-step-main} will map these values to $[0,1]$ via a monotonically increasing soft step. That is, Algorithm \ref{alg:pcp} gives a slightly softened projection -- removing any principal directions with value $< (1-\gap)\lambda$, keeping any with value $> (1+\gap)\lambda$ and partially projecting away any in between.

\section{From principal component projection to principal component regression}\label{sec:project2regress}
A major motivation for an efficient, PCA-free method for projecting a vector onto the span of top principal components is principal component regression (PCR). Recall that PCR solves the following problem:
\begin{align*}
  \bv{A}_\lambda^\pinv \bv{b} = \argmin_{\bv{x} \in \R^{d}} \ltwo{\bv{A}_\lambda \bv{x} - \bv{b}}^2.
\end{align*}
In \emph{exact} arithmetic, $\bv{A}_\lambda^\pinv \bv{b}$ is equal to $(\bv{A}^\T\bv{A})^{-1}\bv{P}_{\bv{A}_\lambda}\bv{A}^\T \bv{b}$.
This identity suggests a method for computing the solution to ridge regression without finding $\bv{A}_\lambda$ explicitly: first apply a principal component projection algorithm to $\bv{A}^\T \bv{b}$ and then solve a linear system to apply $(\bv{A}^\T\bv{A})^{-1}$.

Unfortunately, this approach is disastrously unstable, not only when $\bv{P}_{\bv{A}_\lambda}$ is applied approximately, but in any finite precision environment. Accordingly, we present a modified method for obtaining PCA-free regression from projection.

\subsection{Stable inversion via ridge regression}

Let $\bv{y} = \bv{P}_{\bv{A}_\lambda}\bv{A}^\T \bv{b}$ and suppose we
have some $\bv{\tilde y} \approx \bv{y}$ (e.g.\ obtained from
Algorithm \ref{alg:pcp}).  The issue with the first approach mentioned
is that since $(\bv{A}^\T\bv{A})^{-1}$ could have a very large maximum
eigenvalue, we cannot guarantee $(\bv{A}^\T\bv{A})^{-1}\bv{\tilde y}
\approx (\bv{A}^\T\bv{A})^{-1}\bv{y}$. On the other hand, applying the
ridge regression operator $(\bv{A}^\T\bv{A}+\lambda\bv{I})^{-1}$ to
$\bv{\tilde y}$ is much more stable since it has a maximum eigenvalue
of $1/\lambda$, so $(\bv{A}^\T\bv{A}+\lambda\bv{I})^{-1}\bv{\tilde y}$
will approximate $(\bv{A}^\T\bv{A}+\lambda\bv{I})^{-1}\bv{y}$ well.

In short, it is more \emph{stable} to apply $(\bv{A}^\T\bv{A}+\lambda
\bv I)^{-1}\bv{y} = f(\bv{A}^\T \bv{A})\bv{y}$, where $f(x) =
\frac{1}{x+\lambda}$, but the goal in PCR is to apply
$(\bv{A}^\T\bv{A})^{-1} = h(\bv{A}^\T\bv{A})$ where $h(x) = 1/x$.
So, in order to go from one function to the other, we use a correction
function $g(x) = \frac{x}{1 - \lambda x}$. By simple calculation,
\begin{align*}
\bv{A}_\lambda^{\pinv}\bv{b} = (\bv{A}^\T \bv{A})^{-1} \bv{y} = g((\bv{A}^\T\bv{A}+\lambda\bv{I})^{-1})\bv{y}.
\end{align*}
Additionally, we can stably approximate $g(x)$ with an iteratively
computed low degree polynomial! Specifically, we use a truncation of
the series $ g(x) = \sum_{i=1}^\infty \lambda^{i-1} x^i.  $ An exact
approximation to $g(x)$ would exactly apply $(\bv{A}^\T\bv{A})^{-1}$,
which as discussed, is unstable due to very large eigenvalues
(corresponding to small eigenvalues of $\bv{A}^\T\bv{A}$).  Our
approximation to $g(x)$ is accurate on the large eigenvalues of
$\bv{A}^\T \bv{A}$ but \emph{inaccurate} on the small
eigenvalues. This turns out to be the key to the stability of our
algorithm. By not ``fully inverting'' these eigenvalues, our
polynomial approximation avoids the instability of applying the true
inverse $(\bv{A}^\T\bv{A})^{-1}$.  We provide a complete error analysis in
Appendix \ref{sec:pcr-appendix}, the upshot of which is the following:
\begin{lemma}[PCR approximation algorithm]
\label{lem:pcr-main}
Let $\algA$ be a procedure that, given $\bv x \in \R^d$, produces $\algA(\bv x)$ with $\|\algA(\bv x) - (\bv{A}^\T\bv{A}+\lambda \bv{I})^{-1} \bv x\|_{\bv{A}^\T\bv{A}+\lambda \bv{I}} =O( \frac{\epsilon}{q^2\sigma_1(\bv{A})}) \|\bv x\|_{2}$. Let $\algB$ be a procedure that, given $\bv x \in \R^d$ produces $\algB(\bv x)$ with $\norm{\algB(\bv x) - \bv{P}_{\bv{A}_\lambda}\bv{x}}_2 = O(\frac{\epsilon}{q^2\sqrt{\kappa_\lambda}}) \norm{\bv{x}}_2 $. Given $\bv b \in \R^n$ set $\bv{s}_0 := \algB(\bv{A}^\T \bv{b})$ and $\bv{s}_1 := \algA(\bv{s}_0)$. For $k \ge 1$ set:
\begin{align*}
\bv{s}_{k+1} := \bv{s}_1 + \lambda \cdot \algA(\bv{s}_k).
\end{align*}
If all arithmetic operations are performed with $\Omega(\log(d/q\epsilon))$ bits of precision then 
$\|\bv{s}_{q} - \bv{A}_\lambda^\pinv \bv b\|_{\bv{A}^\T \bv{A}} = O(\epsilon) \|\bv{b}\|_2$ for $q = \Theta(\log(\kappa_\lambda/\epsilon))$.
\end{lemma}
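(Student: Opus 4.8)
\medskip

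\noindent\textbf{Overall approach.} The plan is to analyze the recurrence as an (approximate) truncated Neumann-type series for the correction function $g$, in two layers. First I would carry out the analysis in \emph{exact} arithmetic assuming $\algB$ and $\algA$ are \emph{exact} applications of $\bv{P}_{\bv{A}_\lambda}$ and $(\bv{A}^\T\bv{A}+\lambda\bv{I})^{-1}$ respectively, showing that $\bv{s}_q$ equals a degree-$q$ polynomial in $(\bv{A}^\T\bv{A}+\lambda\bv{I})^{-1}$ applied to $\bv{y}=\bv{P}_{\bv{A}_\lambda}\bv{A}^\T\bv{b}$, and that this polynomial approximates $g((\bv{A}^\T\bv{A}+\lambda\bv{I})^{-1})$ well enough \emph{on the relevant part of the spectrum} that $\|\bv{s}_q - \bv{A}_\lambda^\pinv\bv{b}\|_{\bv{A}^\T\bv{A}}$ is small. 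Second, I would propagate the errors from the inexact subroutines $\algA,\algB$ and from finite-precision arithmetic, using the stated accuracy guarantees (which are inverse-polynomial in $q$) against the fact that the recurrence does only $q$ iterations and each iteration is contractive on the relevant eigenspace.

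\medskip

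\noindent\textbf{Key steps in order.}
\begin{enumerate}
\item \emph{Identify the target and the series.} Write $\bv{M}=(\bv{A}^\T\bv{A}+\lambda\bv{I})^{-1}$, so its eigenvalues lie in $(0,1/\lambda]$, and $z\mapsto \lambda z$ has eigenvalues in $(0,1]$. On eigenvalues of $\bv{A}^\T\bv{A}$ that survive $\bv{P}_{\bv{A}_\lambda}$ (those $\ge\lambda$), the corresponding eigenvalue of $\lambda\bv{M}$ is $\le 1/2$, so the geometric series $g(x)=\sum_{i\ge1}\lambda^{i-1}x^i$ converges geometrically there. Show by induction that in exact arithmetic $\bv{s}_k = \big(\sum_{i=1}^{k}\lambda^{i-1}\bv{M}^{i}\big)\bv{y}$, i.e.\ $\bv{s}_k$ is the $k$-term truncation of $g(\bv{M})\bv{y}$.
\item \emph{Exact-arithmetic accuracy.} Since $\bv{y}=\bv{P}_{\bv{A}_\lambda}\bv{A}^\T\bv{b}$ lies in the top eigenspace, on that eigenspace $\lambda\bv{M}\preceq\tfrac12\bv{I}$, so the tail $\|g(\bv{M})\bv{y}-\bv{s}_q\|$ decays like $2^{-q}$; measured in the $\bv{A}^\T\bv{A}$ norm this costs only an extra $\sqrt{\sigma_1(\bv{A})^2}$-type factor, absorbed since $q=\Theta(\log(\kappa_\lambda/\epsilon))$. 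Combine with the identity $\bv{A}_\lambda^\pinv\bv{b}=g(\bv{M})\bv{y}$ (stated in the text).
\item \emph{Error from inexact $\algB$.} Replacing $\bv{P}_{\bv{A}_\lambda}\bv{A}^\T\bv{b}$ by $\algB(\bv{A}^\T\bv{b})$ introduces an error $\bv{e}_0$ with $\|\bv{e}_0\|_2 = O(\tfrac{\epsilon}{q^2\sqrt{\kappa_\lambda}})\|\bv{A}^\T\bv{b}\|_2 = O(\tfrac{\epsilon}{q^2\sqrt{\kappa_\lambda}})\sigma_1(\bv{A})\|\bv{b}\|_2$. This propagates through the linear recurrence; crucially, $\bv{e}_0$ may have a component in the \emph{bottom} eigenspace where $\lambda\bv{M}$ is close to $1$, so the series does \emph{not} contract there — but there the map $\bv{M}$ has norm $\le 1/\lambda$ and the polynomial $\sum_{i\le q}\lambda^{i-1}z^i$ has value $\le q/\lambda$, so $\|\,(\text{poly of }\bv{M})\,\bv{e}_0\|_2 \le (q/\lambda)\|\bv{e}_0\|_2$; then convert to $\bv{A}^\T\bv{A}$ norm, which on that same bottom eigenspace only multiplies by $\le\sqrt{\lambda}$ more, giving $O(q\sigma_1(\bv{A})/\sqrt{\lambda})\cdot\tfrac{\epsilon}{q^2\sqrt{\kappa_\lambda}}\|\bv{b}\|_2 = O(\epsilon/q)\|\bv{b}\|_2$. (On the top eigenspace the contraction makes the contribution even smaller.)
\item \emph{Error from inexact $\algA$ and finite precision.} Each of the $q$ iterations replaces an exact application of $\bv{M}$ by $\algA$ with $\bv{M}\bv{A}^\T\bv{A}+\lambda\bv{I}$-norm error $O(\tfrac{\epsilon}{q^2\sigma_1(\bv{A})})\|\cdot\|_2$; since $\|\cdot\|_{\bv{A}^\T\bv{A}+\lambda\bv{I}}\ge\sqrt{\lambda}\|\cdot\|_2$, this is $O(\tfrac{\epsilon}{q^2\sigma_1(\bv{A})\sqrt{\lambda}})$-small in $\ell_2$ i.e.\ $O(\tfrac{\epsilon}{q^2\lambda\sqrt{\kappa_\lambda}})$ after accounting for scaling by $1/\lambda$... the point is that per-step error times the worst-case amplification of the remaining $\le q$ steps (a factor $O(q/\lambda)$ in $\ell_2$, or $O(q\sqrt{\kappa_\lambda}/\sigma_1)$ after the norm conversions) times $q$ steps is $O(\epsilon)\|\bv{b}\|_2$; the $q^{-2}$ slack in the hypotheses is exactly what pays for the factor-of-$q$ steps and factor-of-$q$ amplification. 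Finite-precision roundoff with $\Omega(\log(d/q\epsilon))$ bits contributes a relative error $\le 2^{-\Omega(\log(d/q\epsilon))}=\mathrm{poly}(q\epsilon/d)$ per operation; over $\mathrm{poly}(d,q)$ operations this is dominated by the other terms.
\item \emph{Combine.} Triangle-inequality the three error sources in the $\bv{A}^\T\bv{A}$ norm, each $O(\epsilon)\|\bv{b}\|_2$, and rescale constants.
\end{enumerate}

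\medskip

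\noindent\textbf{Main obstacle.} The delicate point is step~3–4: the error vectors introduced by $\algB$ and by each call to $\algA$ need not lie in the top eigenspace of $\bv{A}^\T\bv{A}$, so on their bottom-eigenspace components the recurrence behaves like a genuine (non-contractive) truncated inverse and can amplify by a factor as large as $\Theta(q/\lambda)$ in the $\ell_2$ norm. The saving grace — and the reason the lemma measures the final error in the $\|\cdot\|_{\bv{A}^\T\bv{A}}$ seminorm rather than $\ell_2$ — is that precisely on that bottom eigenspace the $\bv{A}^\T\bv{A}$ seminorm is \emph{small} (eigenvalue $<\lambda$), so the amplified error gets re-weighted down. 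Making this trade-off quantitative, and checking that the inverse-$q^2$ accuracy demanded of $\algA,\algB$ is the right amount of headroom to cover both the $q$ iterations and the per-iteration amplification, is the crux of the argument; the rest is bookkeeping of norms and a routine finite-precision accumulation bound.
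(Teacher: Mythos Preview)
Your proposal is correct and follows the paper's two-layer structure (exact truncated-series analysis, then error propagation), but the paper organizes the error propagation differently and more simply. Rather than splitting into top and bottom eigenspaces and tracking $\ell_2$ norms, the paper works throughout in the $\bv{M}$-norm with $\bv{M}=\bv{A}^\T\bv{A}+\lambda\bv{I}$: since $\lambda\bv{M}^{-1}$ commutes with $\bv{M}$ and has all eigenvalues in $(0,1]$, it is non-expansive in $\|\cdot\|_{\bv{M}}$, so each iteration adds at most a fresh $O((k+1)\epsilon/q^2)\|\bv{b}\|_2$ to $\|\bv{s}_k-\bv{s}_k^*\|_{\bv{M}}$, giving $\|\bv{s}_q-\bv{s}_q^*\|_{\bv{M}}=O(\epsilon)\|\bv{b}\|_2$ by induction; the final passage to $\|\cdot\|_{\bv{A}^\T\bv{A}}$ is free because $\bv{A}^\T\bv{A}\preceq\bv{M}$. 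This sidesteps your eigenspace bookkeeping entirely---in particular, your ``main obstacle'' (bottom-eigenspace error components that get amplified by $\Theta(q/\lambda)$) simply does not arise, because in the $\bv{M}$-norm nothing is ever amplified.

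Your route can be made to work, but step~4 as written is hazy and slightly off. The phrase ``amplification $O(q/\lambda)$ in $\ell_2$'' is not right: $\lambda\bv{M}^{-1}$ is already non-expansive in $\ell_2$, so the only growth is the additive factor $q$ from summing terms. More importantly, on the \emph{top} eigenspace you cannot afford to convert $\|\eta_k\|_2$ to $\|\cdot\|_{\bv{A}^\T\bv{A}}$ via a blunt factor of $\sigma_1(\bv{A})$---that loses $\sqrt{\kappa_\lambda}$, which $q=\Theta(\log(\kappa_\lambda/\epsilon))$ does not absorb. You would need to exploit that the $\algA$ error is specified in the $\bv{M}$-norm (so its $\ell_2$ size on large-eigenvalue components is correspondingly small), at which point you have effectively rediscovered the paper's device. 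The paper's $\bv{M}$-norm argument buys a one-line inductive step in place of all of this.
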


We instantiate the iterative procedure above in Algorithm \ref{alg:pcr}. $\textsc{pc-proj}(\bv{A},\lambda,\bv{y},\gap, \epsilon,\delta)$ denotes a call to Algorithm \ref{alg:pcp}.

\begin{algorithm}
\caption{(\textsc{ridge-pcr}) Ridge regression-based PCR}
\label{alg:pcr}
{\bf input}: $\bv{A} \in \mathbb{R}^{n \times d}$, $\bv{b} \in \R^n$, error $\epsilon$, failure rate $\delta$, threshold $\lambda$, gap $\gap \in (0,1)$
\begin{algorithmic}
\STATE{$q := c_1 \log(\kappa_\lambda/\epsilon)$}
\STATE{$\epsilon ' := c_2^{-1}\epsilon/(q^2 \sqrt{\kappa_\lambda})$, \hspace{1em}$\delta' = \delta/2(q+1)$}
\STATE{$\bv{y} := \textsc{pc-proj}(\bv{A},\lambda,\bv{A}^\T\bv{b},\gap,\epsilon',\delta/2)$}
\STATE{$\bv{s}_0 := \textsc{ridge}(\bv{A},\lambda,\bv{y},\epsilon',\delta')$, $\bv{s} := \bv{s}_0$}
\FOR{$k = 1,..., q$}
\STATE{$\bv{s} := \bv{s}_0 + \lambda \cdot  \textsc{ridge}(\bv{A},\lambda,\bv{s},\epsilon',\delta')$}
\ENDFOR
\STATE{\textbf{return} $\bv{s}$}
\end{algorithmic}
\end{algorithm}
\begin{theorem}\label{pcrAlgoThm} If $\frac{1}{1-4\gap} \sigma_{k+1}(\bv{A})^2 \le \lambda \le (1-4\gap) \sigma_{k}(\bv{A})^2$ and $c_1,c_2$ are sufficiently large constants, \textsc{ridge-pcr} (Algorithm \ref{alg:pcr}) returns $\bv{s}$ such that with probability $\ge 1-\delta$,
\begin{align*}
\norm{\bv{s} -\bv{A}_\lambda^\pinv \bv{b}}_{\bv{A}^\T\bv{A}} \le \epsilon \norm{\bv{b}}_2.
\end{align*}
The algorithm makes one call to $\textsc{pc-proj}$ and $O(\log(\kappa_\lambda/\epsilon))$ calls to ridge regression, each of which costs $T_{\textsc{ridge}}(\bv{A},\lambda,\epsilon',\delta')$, so
Lemma \ref{ridgeRuntime} and Theorem \ref{pcpAlgoThm} imply a total runtime of
\begin{align*}
\tilde O (\nnz(\bv A) \sqrt{\kappa_\lambda}\gap^{-2}\log^2\left (\kappa_\lambda/(\epsilon\gap)\right)),
\end{align*}
where $\tilde O$ hides $\log\log (1/\epsilon)$, or, with stochastic methods,
\begin{align*}
\tilde O ((\nnz(\bv A) + d\sr(\bv{A})\kappa_\lambda)\gap^{-2}\log^2\left (\kappa_\lambda/(\epsilon\gap\delta)\right) ).
\end{align*}
\end{theorem}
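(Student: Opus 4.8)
The plan is to derive Theorem~\ref{pcrAlgoThm} by instantiating Lemma~\ref{lem:pcr-main} with the concrete subroutines used in Algorithm~\ref{alg:pcr}: take $\algB$ to be the call $\textsc{pc-proj}(\bv A,\lambda,\cdot,\gap,\epsilon',\delta/2)$ and $\algA$ to be the call $\textsc{ridge}(\bv A,\lambda,\cdot,\epsilon',\delta')$, and observe that the pseudocode of Algorithm~\ref{alg:pcr} is exactly the recurrence of Lemma~\ref{lem:pcr-main} (up to a harmless renaming of indices, so that the returned $\bv s$ plays the role of $\bv s_q$). The first step is to verify that, with the parameter settings $q=c_1\log(\kappa_\lambda/\epsilon)$ and $\epsilon'=c_2^{-1}\epsilon/(q^2\sqrt{\kappa_\lambda})$, these subroutines meet the accuracy hypotheses of the lemma. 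For $\algB$ this is immediate from Theorem~\ref{pcpAlgoThm}: under the assumed gap condition on $\lambda$, $\textsc{pc-proj}$ returns $\bv y$ with $\|\bv y-\bv P_{\bv A_\lambda}\bv x\|_2\le\epsilon'\|\bv x\|_2 = O(\epsilon/(q^2\sqrt{\kappa_\lambda}))\|\bv x\|_2$, which is the requirement on $\algB$ (and it suffices that this holds for the single input $\bv A^\T\bv b$). For $\algA$ I would invoke Lemma~\ref{ridgeRuntime} and convert its mixed-norm guarantee into the form the lemma wants: $\|\textsc{ridge}(\bv A,\lambda,\bv x,\epsilon',\delta')-(\bv A^\T\bv A+\lambda\bv I)^{-1}\bv x\|_{\bv A^\T\bv A+\lambda\bv I}\le\epsilon'\|\bv x\|_{(\bv A^\T\bv A+\lambda\bv I)^{-1}}\le(\epsilon'/\sqrt{\lambda})\|\bv x\|_2$, and since $\sqrt{\kappa_\lambda\lambda}=\sigma_1(\bv A)$ this equals $O(\epsilon/(q^2\sigma_1(\bv A)))\|\bv x\|_2$, matching the hypothesis on $\algA$ once $c_2$ is large enough to absorb the hidden constant.

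The second step is to handle randomness and precision. Each of the $q+1$ ridge calls (one for the initial iterate, then $q$ in the loop) fails with probability at most $\delta'=\delta/(2(q+1))$, and the single $\textsc{pc-proj}$ call fails with probability at most $\delta/2$; a union bound gives total failure probability at most $\delta/2+(q+1)\delta'=\delta$. Conditioned on no failure, every subroutine behaves as the deterministic approximate operator demanded by Lemma~\ref{lem:pcr-main}; the standing $\Omega(\log(d/q\epsilon))$-bit precision assumption supplies what the lemma needs, and $q=c_1\log(\kappa_\lambda/\epsilon)=\Theta(\log(\kappa_\lambda/\epsilon))$ is exactly the iteration count it prescribes. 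Lemma~\ref{lem:pcr-main} then yields $\|\bv s-\bv A_\lambda^\pinv\bv b\|_{\bv A^\T\bv A}=O(\epsilon)\|\bv b\|_2$, and rescaling by choosing $c_1,c_2$ appropriately gives the clean bound $\epsilon\|\bv b\|_2$ claimed.

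The third step is the runtime accounting. The $O(\log(\kappa_\lambda/\epsilon))$ ridge calls each cost $T_{\textsc{ridge}}(\bv A,\lambda,\epsilon',\delta')$; since $\log(1/\epsilon')=O(\log(\kappa_\lambda/\epsilon)+\log\log(\kappa_\lambda/\epsilon))$ and $\log(1/\delta')=\tilde O(\log(1/\delta))$, Lemma~\ref{ridgeRuntime} bounds their aggregate cost by $\tilde O(\nnz(\bv A)\sqrt{\kappa_\lambda}\log^2(\kappa_\lambda/\epsilon))$ deterministically, or $\tilde O((\nnz(\bv A)+d\sr(\bv A)\kappa_\lambda)\log^2(\kappa_\lambda/\epsilon))$ via the stochastic solver (with acceleration when $\nnz(\bv A)>d\sr(\bv A)\kappa_\lambda$). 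The lone $\textsc{pc-proj}$ call, invoked at accuracy $\epsilon'$, costs $\tilde O(\nnz(\bv A)\sqrt{\kappa_\lambda}\gap^{-2}\log^2(\kappa_\lambda/(\epsilon\gap)))$ by Theorem~\ref{pcpAlgoThm} (again with a stochastic analogue), and since this term carries the extra $\gap^{-2}$ it dominates, giving the stated totals after absorbing $\log\log(1/\epsilon)$ factors into $\tilde O$.

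I expect the only genuinely delicate point to be the norm bookkeeping in the first step: $\textsc{ridge}$ comes with a relative error measured in the $(\bv A^\T\bv A+\lambda\bv I)^{-1}$ norm of its right-hand side, whereas Lemma~\ref{lem:pcr-main} asks for the residual to be small in the $(\bv A^\T\bv A+\lambda\bv I)$ norm relative to $\|\bv x\|_2$, so one must track the factors of $\sqrt{\lambda}$ and check they collapse to $\sigma_1(\bv A)$ through $\kappa_\lambda=\sigma_1^2(\bv A)/\lambda$; one should also confirm that it is legitimate to feed $\textsc{pc-proj}$'s $\|\bv x\|_2$-relative guarantee into the slot for $\algB$, which is fine since $\algB$ is applied only to $\bv A^\T\bv b$. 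Everything else — the union bound, the precision hypothesis, and the runtime arithmetic — is routine given Lemma~\ref{lem:pcr-main}, Theorem~\ref{pcpAlgoThm}, and Lemma~\ref{ridgeRuntime}.
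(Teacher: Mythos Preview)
Your proposal is correct and follows essentially the same route as the paper: instantiate Lemma~\ref{lem:pcr-main} with $\algA=\textsc{ridge}$ and $\algB=\textsc{pc-proj}$, convert the ridge guarantee from the $(\bv A^\T\bv A+\lambda\bv I)^{-1}$ norm to the $\ell_2$ norm via the factor $1/\sqrt{\lambda}$ (so that $\epsilon'/\sqrt{\lambda}=c_2^{-1}\epsilon/(q^2\sqrt{\kappa_\lambda\lambda})=c_2^{-1}\epsilon/(q^2\sigma_1(\bv A))$), invoke Theorem~\ref{pcpAlgoThm} for $\algB$, union bound over the $q+1$ ridge calls and the one projection call, and absorb constants. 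Your more explicit runtime accounting is a welcome addition but not a departure in strategy.
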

\begin{proof} We apply Lemma \ref{lem:pcr-main}; $\algA$ is given by $\textsc{ridge}(\bv{A},\lambda,\bv{x},\epsilon',\delta')$. Since $\norm{(\bv{A}^\T\bv{A}+\lambda \bv{I})^{-1}}_2< 1/\lambda$, Lemma \ref{ridgeRuntime} states that with probability $1-\delta'$,
\begin{align*}
\|&\algA(\bv{x}) - (\bv{A}^\T\bv{A}+\lambda\bv{I})^{-1}
\bv{x}\|_{\bv{A}^\T\bv{A}+\lambda \bv{I}}\\ &\le
\epsilon'\norm{\bv{x}}_{(\bv{A}^\T\bv{A}+\lambda\bv{I})^{-1}} \le
\frac{c_2^{-1}\epsilon}{q^2\sqrt{\kappa_\lambda
    \lambda}}\norm{\bv{x}}_2 \le
\frac{c_2^{-1}\epsilon}{q^2\sigma_1(\bv{A})} \norm{\bv x}_2.
\end{align*}

Now, $\algB$ is given by $\textsc{pc-proj}(\bv{A},\lambda,\bv{x},\gap,\epsilon',\delta/2)$. With probability $1-\delta/2$, if $\frac{1}{1-4\gap} \sigma_{k+1}(\bv{A})^2 \le \lambda \le (1-4\gap) \sigma_{k}(\bv{A})^2$ then by Theorem \ref{pcpAlgoThm},
$
\norm{\algB(\bv{x}) - \bv{P}_{\bv{A}_\lambda}\bv{x}}_2 \le \epsilon'\norm{\bv x}_2 = \epsilon/(c_2 q^2\sqrt{\kappa_\lambda}).
$
Applying the union bound over $q+1$ calls to $\algA$ and a single call to $\algB$, these bounds hold on every call with probability $\ge 1-\delta$. Adjusting constants on $\epsilon$ (via $c_1$ and $c_2$) proves the theorem.
\end{proof}

\section{Approximating the matrix step function}
\label{sec:matrix-step-function}
We now return to proving our underlying result on iterative polynomial approximation of the matrix step function:

\begin{replemma}{lem:matrix-step-main}[Step function algorithm]
Let $\bv{S} \in \R^{d \times d}$ be symmetric with every eigenvalue $\sigma$ satisfying $\sigma \in [0, 1]$ and $|\sigma - 1/2| \geq \gap$. Let $\algA$ denote a procedure that on $\bv x \in \R^d$ produces $\algA(\bv x)$ with $\|\algA(\bv x) - \bv{S} \bv x\| = O(\epsilon^2 \gap^2) \|\bv x\|_2$. Given $\bv y \in \R^d$ set
$\bv{s}_0 := \algA(\bv y)$, $\bv{w}_0 := \bv{s}_0 - \frac{1}{2} \bv{y}$, and for $k \geq 0$ set 
\[
\bv{w}_{k + 1} := 4 \left(\frac{2k + 1}{2k+2}\right) \algA (\bv{w}_{k} - \algA(\bv{w}_{k}))
\]
and $\bv{s}_{k + 1} := \bv{s}_{k} + \bv{w}_{k + 1}$. If all arithmetic operations are performed with $\Omega(\log(d/\epsilon \gap))$ bits of precision and if $q = \Theta(\gap^{-2} \log(1/\epsilon))$ then 
$\|s_{q} - s(\bv{S}) \bv{y}\|_2 = O(\epsilon) \|\bv{y}\|_2$.
\end{replemma}

The derivation and proof of Lemma~\ref{lem:matrix-step-main} is split into 3 parts. In Section~\ref{sub:poly_approx_sign} we derive a simple low degree polynomial approximation to the sign function:
\[
\sgn(x) \defeq 
\begin{cases}
1 & \text{ if } x > 0 \\
0 & \text{ if } x = 0 \\
-1 & \text{ if } x < 0
\end{cases} 
\]
In Section~\ref{sub:iterative_sign} we show how this polynomial can be computed with a stable iterative procedure. In Section~\ref{sub:step_function} we use these pieces and the fact that the step function is simply a shifted and scaled sign function to prove Lemma~\ref{lem:matrix-step-main}.
Along the way we give complementary views of
Lemma~\ref{lem:matrix-step-main} and show that there exist more
efficient polynomial approximations. \ificml All proofs are in
Appendix~\ref{sec:step-appendix}.\fi

\subsection{Polynomial approximation to the sign function}
\label{sub:poly_approx_sign}

We show that for sufficiently large $k$, the following polynomial is
uniformly close to $\sgn(x)$ on $[-1, 1]$:
\[
p_k(x) \defeq \sum_{i = 0}^{k} \left( x (1 - x^2)^i\prod_{j =1}^i \frac{2j - 1}{2j} \right)
\]

The polynomial $p_k(x)$ can be derived in several ways. One follows
from observing that $\sgn(x)$ is odd and thereby $\sgn(x) / x = 1 /
|x|$ is even. So, a good polynomial approximation for $\sgn(x)$ should
be odd and, when divided by $x$, should be even (i.e.\ a function of
$x^2$). Specifically, given a polynomial approximation $q(x)$ to
$1/\sqrt{x}$ on the range $(0, 1]$ we can approximate $\sgn(x)$ using
  $x q(x^2)$. Choosing $q$ to be the $k$-th order Taylor approximation
  to $1/\sqrt{x}$ at $x = 1$ yields $p_k(x)$. With this insight we
  show that $p_k(x)$ converges to $\sgn(x)$.
\begin{lemma}
\label{lem:convergence-of_p_k}
$\sgn(x) = \lim_{k \rightarrow \infty} p_k(x)$ for all $x \in [-1, 1]$. 
\end{lemma}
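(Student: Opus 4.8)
The plan is to recognize $p_k(x)$ as the truncation of an explicitly summable power series and then show the remainder vanishes pointwise on $[-1,1]$. First I would make the substitution suggested right before the lemma statement: write $p_k(x) = x \cdot q_k(x^2)$, where $q_k(u) = \sum_{i=0}^k \binom{-1/2}{i}(-1)^i (1-u)^i = \sum_{i=0}^k \left(\prod_{j=1}^i \tfrac{2j-1}{2j}\right)(1-u)^i$ is exactly the degree-$k$ Taylor polynomial of $u \mapsto u^{-1/2}$ expanded about $u=1$ (using the generalized binomial coefficients $\binom{-1/2}{i} = (-1)^i \prod_{j=1}^i \tfrac{2j-1}{2j}$). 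The formal power series $\sum_{i=0}^\infty \left(\prod_{j=1}^i \tfrac{2j-1}{2j}\right) t^i$ converges to $(1-t)^{-1/2}$ for $|t| < 1$, so with $t = 1 - u$ we get $q_k(u) \to u^{-1/2}$ as $k \to \infty$ for all $u \in (0,2)$, and hence $p_k(x) = x\, q_k(x^2) \to x / \sqrt{x^2} = x/|x| = \sgn(x)$ for all $x \in (-1,1) \setminus \{0\}$.

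The two remaining cases are handled separately. At $x = 0$ every term of $p_k$ has a factor of $x$, so $p_k(0) = 0 = \sgn(0)$ for all $k$, and the claim is trivial there. At the endpoints $x = \pm 1$ the argument $t = 1 - x^2 = 0$, so $p_k(\pm 1) = \pm 1 \cdot \sum_{i=0}^k (\cdots)\cdot 0^i = \pm 1$ for every $k$ (the $i=0$ term contributes $1$, all others vanish), matching $\sgn(\pm 1) = \pm 1$ exactly. Thus pointwise convergence holds on all of $[-1,1]$.

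The one genuine subtlety — the point I expect to need the most care — is convergence of $\sum_{i=0}^\infty \left(\prod_{j=1}^i \tfrac{2j-1}{2j}\right)t^i$ at the boundary values of $t$ relevant here, namely whether we are ever forced to evaluate at $|t|$ close to $1$. For $x \in (-1,1)$ we have $t = 1 - x^2 \in [0,1)$, so we are strictly inside the radius of convergence and ordinary convergence of the binomial series $(1-t)^{-1/2} = \sum_i \binom{-1/2}{i}(-t)^i$ suffices; the coefficients $\prod_{j=1}^i \tfrac{2j-1}{2j} = \Theta(i^{-1/2})$ by Wallis/Stirling decay polynomially, so the series converges absolutely for every fixed $t \in [0,1)$. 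Hence no endpoint-of-convergence issue actually arises, and the lemma follows. (If one wanted uniform rather than merely pointwise convergence, one would have to control the remainder $\sum_{i>k}\Theta(i^{-1/2})t^i$ uniformly, which degrades badly as $t \to 1^-$, i.e.\ as $x \to 0$ — but that is precisely the quantitative refinement deferred to the later sections, not something needed for this lemma.)
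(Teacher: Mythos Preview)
Your proposal is correct and takes essentially the same approach as the paper: both recognize $p_k(x) = x\, q_k(x^2)$ where $q_k$ is the degree-$k$ Taylor polynomial of $u \mapsto u^{-1/2}$ about $u=1$, and both conclude by showing that this Taylor series converges on the relevant range. The only cosmetic difference is that the paper bounds the Lagrange remainder directly (showing it is at most $(1-\epsilon)^k$ on $[\epsilon,1]$), whereas you appeal to the known radius of convergence of the binomial series $(1-t)^{-1/2}$; and the paper absorbs the endpoints $x=\pm 1$ into the general case rather than treating them separately.
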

\begin{proof}
Let $f(x) = x^{-1/2}$. By induction on $k$ it is straightforward to show that the $k$-th derivative of $f$ at $x > 0$ is 
\[
f^{(k)}(x) =  \left(-1\right)^k \cdot \left(x\right)^{-\frac{1+2k}{2}} \prod_{i =1}^k \frac{2i - 1}{2}~.
\]
Since $(-1)^i (x - 1)^i = (1 - x)^i$ we see that the degree $k$ Taylor approximation to $f(x)$ at $x = 1$ is therefore
\[
q_k(x) = \sum_{i = 0}^{k} \left( (1 - x)^i \cdot \frac{1}{i!}
 \prod_{j =1}^i \frac{2j - 1}{2}\right) = \sum_{i = 0}^{k} \left( (1 - x)^i \cdot
 \prod_{j =1}^i \frac{2j - 1}{2j}\right) 
~.
\]
Note that for $x,y \in [\epsilon, 1]$, the remainder $f^{(k)}(x) (1 -
y)^k / k!$ has absolute value at most $(1 - \epsilon)^k$. Therefore
the remainder converges to $0$ as $k \rightarrow \infty$ and the
Taylor approximation converges, i.e.\ $\lim_{k \rightarrow \infty}
q_k(x) = 1/\sqrt{x}$ for $x \in (0, 1]$. Since $p_k(x) = x \cdot
  q_k(x^2)$ we have $\lim_{k \rightarrow \infty} p_k(x) = x /
  \sqrt{x^2} = \sgn(x)$ for $x \neq 0$ with $x \in [-1, 1]$. Since
  $p_k(0) = 0 = \sgn(0)$, the result follows.
\end{proof}

Alternatively, to derive $p_k(x)$ we can consider $(1 - x^2)^k$, which is relatively large near $0$ and small on the rest of $[-1, 1]$. Integrating this function from $0$ to $x$ and normalizing yields a good step function. In Appendix~\ref{sec:step-appendix} we prove that:
\begin{lemma} 
\label{lem:integral_form_of_pk}
For all $x \in \mathbb{R}$
\[
p_k(x) = \frac{\int_0^x (1 - y^2)^k dy}{\int_0^1 (1 - y^2)^k dy}
~.
\]
\end{lemma}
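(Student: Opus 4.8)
The plan is to recognize that both sides of the claimed identity are polynomials in $x$ with the same derivative and the same value at $x=0$, and then to pin down the normalizing constant by a Wallis-type reduction formula. Write $c_i \defeq \prod_{j=1}^i \frac{2j-1}{2j}$ (so $c_0 = 1$), so that $p_k(x) = \sum_{i=0}^k c_i\, x(1-x^2)^i$ and in particular $p_k(0) = 0$.

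First I would differentiate $p_k$ term by term. By the product rule, $\frac{d}{dx}\big[x(1-x^2)^i\big] = (1-x^2)^i - 2ix^2(1-x^2)^{i-1}$, and substituting $x^2 = 1-(1-x^2)$ turns each term into $(2i+1)(1-x^2)^i - 2i(1-x^2)^{i-1}$. Summing against the weights $c_i$ and reindexing the second sum by $i \mapsto i+1$, the key point is the coefficient identity $2(i+1)c_{i+1} = (2i+1)c_i$, immediate from $c_{i+1} = c_i \cdot \frac{2i+1}{2i+2}$. This forces a telescoping collapse and leaves $p_k'(x) = (2k+1)c_k(1-x^2)^k$. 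Since $p_k(0)=0$, integrating from $0$ to $x$ gives the exact polynomial identity $p_k(x) = (2k+1)c_k \int_0^x (1-y^2)^k\,dy$, valid for all $x \in \mathbb{R}$.

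It then remains to verify $(2k+1)c_k = \big(\int_0^1(1-y^2)^k\,dy\big)^{-1}$. For this I would integrate $\int_0^1 (1-y^2)^k\,dy$ by parts with $u = (1-y^2)^k$, $dv = dy$, and use $y^2(1-y^2)^{k-1} = (1-y^2)^{k-1} - (1-y^2)^k$ to obtain the reduction $\int_0^1(1-y^2)^k\,dy = \frac{2k}{2k+1}\int_0^1(1-y^2)^{k-1}\,dy$. With the base case $\int_0^1 (1-y^2)^0\,dy = 1$ this unrolls to $\int_0^1(1-y^2)^k\,dy = \prod_{j=1}^k \frac{2j}{2j+1}$, and multiplying by $c_k = \prod_{j=1}^k \frac{2j-1}{2j}$ telescopes to $\prod_{j=1}^k \frac{2j-1}{2j+1} = \frac{1}{2k+1}$. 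Hence $c_k \int_0^1(1-y^2)^k\,dy = \frac{1}{2k+1}$, and substituting this back into the displayed formula for $p_k(x)$ yields the lemma.

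The only mildly delicate step is the telescoping in the derivative computation: one must track the shifted index carefully and invoke $2(i+1)c_{i+1} = (2i+1)c_i$ so that the $(1-x^2)^i$ coefficients for $0 \le i \le k-1$ cancel exactly, leaving only the top-degree term. Beyond that the argument is routine single-variable calculus, with no convergence or precision concerns, since the statement is an exact polynomial identity.
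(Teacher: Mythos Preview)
Your proof is correct, but it proceeds by a different route than the paper's. The paper works directly with the integral: it sets $q_k(x) \defeq \int_0^x (1-y^2)^k\,dy$, applies integration by parts with $u=(1-y^2)^k$, $dv=dy$ to obtain the recursion $q_k(x) = \frac{1}{2k+1}\bigl[x(1-x^2)^k + 2k\,q_{k-1}(x)\bigr]$, reads off $q_k(1) = \prod_{j=1}^k \frac{2j}{2j+1}$, and then checks that the ratio $q_k(x)/q_k(1)$ satisfies exactly the incremental relation $p_k(x) = p_{k-1}(x) + c_k\,x(1-x^2)^k$ that defines $p_k$. You instead differentiate $p_k$ term by term, exploit the coefficient identity $2(i+1)c_{i+1}=(2i+1)c_i$ to telescope the sum down to $p_k'(x)=(2k+1)c_k(1-x^2)^k$, integrate, and then invoke the Wallis reduction only to identify the normalizing constant. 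Both arguments hinge on the same integration-by-parts identity, but the paper uses it to build a recursion for the integral, while you use it solely for the denominator after having already matched the numerator by a direct derivative computation. Your approach is arguably more transparent about \emph{why} the constants $c_i$ are exactly what make the telescoping work; the paper's approach has the mild advantage of never differentiating $p_k$ at all.
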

Next, we bound the rate of convergence of $p_k(x)$ to $\sgn(x)$:
\begin{lemma}
\label{lem:quality_of_sign_poly}
For $k \geq 1$ if $x \in (0, 1]$ then $p_k(x) > 0$ and
\begin{equation}
\label{eq:quality_of_sign_poly}
\sgn(x) - (x \sqrt{k})^{-1}e^{-k x^2} \leq p_k(x) \leq \sgn(x) ~.
\end{equation}
If $x \in [-1, 0)$ then $p_k(x) < 0$ and 
\[
\sgn(x) \leq p_k(x) \leq \sgn(x) + (x\sqrt{k})^{-1}e^{-kx^2}~.
\]
\end{lemma}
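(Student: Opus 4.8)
\textbf{Proof proposal for Lemma~\ref{lem:quality_of_sign_poly}.}

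The plan is to exploit the integral representation of $p_k$ from Lemma~\ref{lem:integral_form_of_pk}, namely $p_k(x) = \left(\int_0^x (1-y^2)^k\,dy\right)\big/\left(\int_0^1 (1-y^2)^k\,dy\right)$, and to focus on $x \in (0,1]$; the case $x \in [-1,0)$ follows immediately by the oddness of $p_k$ (visible directly from the defining sum, or from the integral form since the integrand is even). For $x \in (0,1]$ the integrand $(1-y^2)^k$ is nonnegative on $[0,x]$ and strictly positive on $[0,x)$, so the numerator is positive and hence $p_k(x) > 0$. Since $\int_0^x (1-y^2)^k\,dy \le \int_0^1 (1-y^2)^k\,dy$ (the extra mass over $[x,1]$ is nonnegative), we get $p_k(x) \le 1 = \sgn(x)$ at once, which is the upper bound.

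The substance is the lower bound $p_k(x) \ge 1 - (x\sqrt{k})^{-1}e^{-kx^2}$. Writing $p_k(x) = 1 - \frac{\int_x^1 (1-y^2)^k\,dy}{\int_0^1 (1-y^2)^k\,dy}$, it suffices to show $\int_x^1 (1-y^2)^k\,dy \le (x\sqrt{k})^{-1}e^{-kx^2}\int_0^1 (1-y^2)^k\,dy$. I would bound the two pieces separately. For the numerator, use $1 - y^2 \le e^{-y^2}$ to get $\int_x^1 (1-y^2)^k\,dy \le \int_x^\infty e^{-ky^2}\,dy$, and then bound this Gaussian tail by the standard estimate $\int_x^\infty e^{-ky^2}\,dy \le \frac{1}{2kx}e^{-kx^2}$ (from $\int_x^\infty e^{-ky^2}\,dy \le \int_x^\infty \frac{y}{x}e^{-ky^2}\,dy$). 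For the denominator, I need a lower bound of the form $\int_0^1 (1-y^2)^k\,dy \ge c/\sqrt{k}$ for an absolute constant $c$; this is the classical fact that this is (a scaled) Wallis/Beta integral, $\int_0^1(1-y^2)^k\,dy = \frac{\sqrt\pi\,\Gamma(k+1)}{2\,\Gamma(k+3/2)} = \Theta(k^{-1/2})$, and an elementary lower bound suffices — e.g.\ restrict to $y \in [0, 1/\sqrt{k}]$ and use $(1-y^2)^k \ge (1 - 1/k)^k \ge c_0$ for $k \ge 2$, giving $\int_0^1 (1-y^2)^k\,dy \ge c_0/\sqrt{k}$. Combining, $\int_x^1(1-y^2)^k\,dy \big/ \int_0^1(1-y^2)^k\,dy \le \frac{e^{-kx^2}/(2kx)}{c_0/\sqrt k} = \frac{1}{2c_0}\cdot \frac{e^{-kx^2}}{x\sqrt k}$, which is $\le (x\sqrt k)^{-1}e^{-kx^2}$ provided the constant works out; if $2c_0 < 1$ one absorbs it by tightening the denominator bound (take the constant $c_0$ closer to its true value, or integrate over a slightly larger interval), so I would present the Wallis lower bound with the sharper constant to make the inequality come out exactly as stated.

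The main obstacle is purely the bookkeeping of constants: getting the Gaussian-tail bound on the numerator and the Wallis-type lower bound on the denominator to multiply out to a clean $(x\sqrt k)^{-1}e^{-kx^2}$ with leading constant $1$, rather than some larger absolute constant. Everything else (positivity, the upper bound, the odd-symmetry reduction, the $1-y^2 \le e^{-y^2}$ step) is routine. The cleanest route is probably to prove the slightly stronger denominator bound $\int_0^1(1-y^2)^k\,dy \ge \frac{1}{2}\sqrt{\pi/k}\cdot(1+o(1))$ and then simply note $\frac{1}{2kx}\big/\frac{1}{2}\sqrt{\pi/k} = \frac{1}{\sqrt{\pi k}\,x} \le \frac{1}{x\sqrt k}$, which gives the bound with room to spare for all $k \ge 1$; the edge case $k=1$ can be checked by hand if the asymptotic constant is not yet comfortable there.
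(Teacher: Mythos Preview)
Your approach is correct and genuinely different from the paper's. The paper does not use the integral representation at all for this lemma; instead it bounds the series tail directly: writing $\sgn(x) - p_k(x) = \sum_{i \ge k+1} x(1-x^2)^i \prod_{j=1}^i \frac{2j-1}{2j}$, it pulls out $(1-x^2)^k$ and the product up to $k$, bounds $\prod_{j=1}^k \frac{2j-1}{2j} \le 1/\sqrt{k}$ via $1+x \le e^x$ and $\sum_{j=1}^k 1/j \ge \ln k$, sums the remaining geometric series $\sum_{i \ge 0}(1-x^2)^i = x^{-2}$, and uses $(1-x^2)^k \le e^{-kx^2}$. The constant $1$ falls out with no tuning. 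Your route through the integral form and a Gaussian tail is equally valid and arguably more conceptual, but the constant does require one more honest step than you wrote: you need $\int_0^1(1-y^2)^k\,dy \ge \tfrac{1}{2\sqrt{k}}$, which is true (e.g.\ set $a_k = 4k\prod_{j=1}^k(2j/(2j+1))^2$ and check $a_1 = 16/9 > 1$ and $a_{k+1}/a_k = 4(k+1)^3/[k(2k+3)^2] > 1$), but your crude $(1-1/k)^k \ge c_0$ restriction to $[0,1/\sqrt{k}]$ only gives $c_0 = 1/4$, which is not enough. So the paper's series argument is shorter precisely because the constant is automatic; your integral argument works but needs the sharper Wallis-type denominator bound you gestured at rather than the crude one you wrote first.
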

\begin{proof}
The claim is trivial when $x = 0$. Since $p_k(x)$ is odd it suffices
to consider $x \in (0, 1]$. For such $x$, it is direct that $p_k(x) >
0$, and $p_k(x) \leq \sgn(x)$ follows from the observation that
$p_k(x)$ increases monotonically with $k$ and $\lim_{k \rightarrow
  \infty} p_k(x) = \sgn(x)$ by Lemma~\ref{lem:convergence-of_p_k}. All
that remains to show is the left-side inequality of
\eqref{eq:quality_of_sign_poly}. Using
Lemma~\ref{lem:convergence-of_p_k} again,
\begin{align*}
\sgn(x) - p_k(x)
&= \sum_{i = k + 1}^{\infty} \left( x (1 - x^2)^i \prod_{j =1}^i \frac{2i - 1}{2i}\right)
\\
&\leq  x (1 - x^2)^{k} \sum_{i = 0}^{\infty} \left ((1 - x^2)^i \prod_{j =1}^k \frac{2j - 1}{2j}\right ) ~.
\end{align*}
Now since $1 + x \leq e^x$ for all $x$ and $\sum_{i=1}^n \frac{1}{i} \geq \ln n$, we have
\[
\prod_{j=1}^k \frac{2j - 1}{2j} \leq \exp \left(\sum_{j =1}^k \frac{- 1}{2j}
\right)
\leq \exp\left(\frac{- \ln k}{2}\right)
= \frac{1}{\sqrt{k}} ~. 
\]
Combining with $\sum_{i = 0}^\infty (1 - x^2)^i = x^{-2}$ and again
that $1 + x \leq e^x$ proves the left hand side of
\eqref{eq:quality_of_sign_poly}.
\end{proof}

The lemma directly implies that $p_k(x)$ is a high quality approximation to $\sgn(x)$ for $x$ bounded away from $0$.

\begin{corollary}
\label{cor:crude-convergence-of_pq}
If $x \in [-1, 1]$, with $|x| \geq \alpha > 0$ and $k = \alpha^{-2}
\ln(1/\epsilon)$, then $|\sgn(x) - p_k(x)| \leq \epsilon$.
\end{corollary}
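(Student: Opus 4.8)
The plan is to read off the statement directly from Lemma~\ref{lem:quality_of_sign_poly}. That lemma gives, for $x \in (0,1]$, the two-sided bound $0 \le \sgn(x) - p_k(x) \le (x\sqrt{k})^{-1} e^{-kx^2}$, and since both $\sgn$ and $p_k$ are odd, the same absolute estimate $|\sgn(x) - p_k(x)| \le (|x|\sqrt{k})^{-1} e^{-kx^2}$ holds for $x \in [-1,0)$; the case $x = 0$ is immediate because $p_k(0) = 0 = \sgn(0)$. So the entire task reduces to checking that the right-hand side is at most $\epsilon$ under the hypotheses $|x| \ge \alpha$ and $k = \alpha^{-2}\ln(1/\epsilon)$.

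To do this cleanly I would substitute $u = x^2$ and rewrite the bound as $(ku)^{-1/2} e^{-ku}$, then note that $v \mapsto v^{-1/2} e^{-v}$ is monotonically decreasing on $(0,\infty)$. Since $u = x^2 \ge \alpha^2$ forces $ku \ge k\alpha^2 = \ln(1/\epsilon)$, the bound is largest at $u = \alpha^2$, where it equals $(\ln(1/\epsilon))^{-1/2} e^{-\ln(1/\epsilon)} = (\ln(1/\epsilon))^{-1/2}\,\epsilon \le \epsilon$, using that $\ln(1/\epsilon) \ge 1$ (i.e.\ $\epsilon \le 1/e$, which may be assumed, or else absorbed into the constant defining $k$). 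This finishes the proof.

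There is essentially no obstacle here; the only point worth a moment's care is the prefactor $(|x|\sqrt{k})^{-1}$. A crude argument that merely bounds $e^{-kx^2} \le e^{-k\alpha^2} = \epsilon$ still carries that prefactor along, and it is harmless precisely because $|x|\sqrt{k} \ge \alpha \cdot \alpha^{-1}\sqrt{\ln(1/\epsilon)} = \sqrt{\ln(1/\epsilon)} \ge 1$; the monotonicity observation above is just the tidy way to package this so that no stray constants appear.
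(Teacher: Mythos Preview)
Your proposal is correct and matches the paper's approach: the paper states the corollary as a direct consequence of Lemma~\ref{lem:quality_of_sign_poly} without writing out a proof, and your argument is exactly the computation that justifies this. Your handling of the prefactor $(|x|\sqrt{k})^{-1}$ via the monotonicity of $v \mapsto v^{-1/2}e^{-v}$ and the observation that $|x|\sqrt{k} \ge \sqrt{\ln(1/\epsilon)} \ge 1$ (for $\epsilon \le 1/e$) is a clean way to close the gap that a naive bound on $e^{-kx^2}$ alone would leave.
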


We conclude by noting that this proof in fact implies the existence of
a lower-degree polynomial approximation to $\sgn(x)$. Since the sum of coefficients in our expansion is small, we can replace each $(1 -
x^2)^q$ with Chebyshev polynomials of lower degree. In
Appendix~\ref{sec:step-appendix}, we prove:

\begin{lemma}
\label{lem:chebyshev-improved-expansion}
There exists an $O(\alpha^{-1} \log(1/\alpha \epsilon))$ degree polynomial $q(x)$ such that $|\sgn(x) - q(x)| \leq \epsilon$ for all $x \in [-1, 1]$ with $|x| \geq \alpha > 0$ .
\end{lemma}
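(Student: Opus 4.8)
The plan is to start from the explicit expansion $p_k(x) = \sum_{i=0}^k c_i \, x(1-x^2)^i$ with $c_i = \prod_{j=1}^i \frac{2j-1}{2j}$ used in the proof of Lemma~\ref{lem:quality_of_sign_poly}, and to replace each high-degree ``bump'' $(1-x^2)^i$ by a much lower-degree polynomial that is uniformly close to it on the relevant interval. First I would fix $k = \Theta(\alpha^{-2}\log(1/\alpha\epsilon))$ so that, by Corollary~\ref{cor:crude-convergence-of_pq}, $|\sgn(x) - p_k(x)| \le \epsilon/2$ for $|x|\ge\alpha$. The degree of $p_k$ is $2k+1 = \Theta(\alpha^{-2}\log(1/\alpha\epsilon))$, which is too large; the point is that $p_k$ is a very special polynomial whose coefficients are tame. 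Indeed, from the bound already established, $\sum_{i=0}^k c_i \le \sum_{i=0}^\infty c_i (1-x^2)^i \big|_{\text{formally}}$ — more usefully, each $c_i \le 1/\sqrt{i} \le 1$, so $\sum_{i=0}^k c_i \le k+1 = \mathrm{poly}(\alpha^{-1},\log(1/\epsilon))$. This polynomial control on the coefficients is exactly what lets a uniform approximation of each factor propagate to a uniform approximation of the whole sum.

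The key step is a Chebyshev truncation of each factor. Write $u = x^2 \in [0,1]$, so $(1-x^2)^i = (1-u)^i$ is a degree-$i$ polynomial in $u$. The standard fact (from Chebyshev/Jackson theory, e.g. as used in Clenshaw–Lord or in Sachdeva–Vishnoi-style arguments) is that for the function $(1-u)^i$ on $[0,1]$ there is a polynomial $\tilde q_i(u)$ of degree $O(\sqrt{i}\,\log(1/\eta))$ with $\sup_{u\in[0,1]}|(1-u)^i - \tilde q_i(u)| \le \eta$; this comes from truncating the Chebyshev expansion of $(1-u)^i$ and using that its Chebyshev coefficients decay geometrically after index $\sim\sqrt i$. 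Substituting $u = x^2$ gives an \emph{even} polynomial $\tilde r_i(x) = \tilde q_i(x^2)$ of degree $O(\sqrt i\,\log(1/\eta)) = O(\sqrt k\,\log(1/\eta))$ in $x$, with the same uniform error on $[-1,1]$. Now set
\[
q(x) \;\defeq\; \sum_{i=0}^{k} c_i \, x\, \tilde r_i(x),
\]
which has degree $1 + O(\sqrt k\,\log(1/\eta)) = O(\alpha^{-1}\log^{1/2}(\cdot)\log(1/\eta))$. Choosing $\eta = \epsilon / (2\sum_i c_i) = \epsilon/\mathrm{poly}(\cdot)$ so that $\log(1/\eta) = O(\log(1/\alpha\epsilon))$, the triangle inequality gives, for all $x\in[-1,1]$,
\[
|p_k(x) - q(x)| \;\le\; \sum_{i=0}^k c_i\,|x|\,|(1-x^2)^i - \tilde r_i(x)| \;\le\; \Big(\textstyle\sum_{i=0}^k c_i\Big)\eta \;\le\; \epsilon/2,
\]
and combined with $|\sgn(x) - p_k(x)| \le \epsilon/2$ for $|x|\ge\alpha$ we get $|\sgn(x) - q(x)| \le \epsilon$ there. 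The resulting degree is $\sqrt k\cdot \mathrm{polylog} = \alpha^{-1}\cdot\mathrm{polylog}(1/\alpha\epsilon) = O(\alpha^{-1}\log(1/\alpha\epsilon))$ after absorbing the logarithmic factors into the $O(\cdot)$ as the statement permits (or one can be slightly more careful in the Chebyshev bound to get exactly this order).

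The main obstacle is getting the quantitative Chebyshev-truncation estimate for $(1-u)^i$ in the right form — specifically, that degree $O(\sqrt i\,\log(1/\eta))$ suffices for uniform error $\eta$ on $[0,1]$. This is where the genuine approximation-theory input lies; everything else is bookkeeping on the coefficient sums $\sum c_i$, which are already controlled by the estimates in the proof of Lemma~\ref{lem:quality_of_sign_poly}. A clean way to package it is to invoke the known near-optimal polynomial approximation to $x^i$ (equivalently to the monomial, via the shifted Chebyshev basis) on a bounded interval, or alternatively to note that $(1-u)^i$ for $u\in[0,1]$ is, after the change of variables $u = (1+t)/2$, a scaled power that is well-approximated in degree $O(\sqrt i\,\log(1/\eta))$ by the classical result on approximating $e^{-x}$ / low-degree approximation of powers. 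One caveat to flag in the write-up: the substitution $u=x^2$ maps $[-1,1]$ onto $[0,1]$, so a uniform bound on $[0,1]$ in $u$ transfers verbatim to $[-1,1]$ in $x$, and oddness of $x\tilde r_i(x)$ is automatic since $\tilde r_i$ is even — so the symmetry structure of the sign approximation is preserved throughout.
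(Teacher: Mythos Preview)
Your approach is essentially the same as the paper's: write $p_k(x)=\sum_i c_i\,x(1-x^2)^i$, replace each power $(1-x^2)^i$ by a low-degree Chebyshev-type approximant, and use $\sum_i c_i\le k+1$ to control the aggregate error. The paper packages this via a general ``polynomial compression'' lemma that invokes the Sachdeva--Vishnoi monomial approximation (degree $d$ approximates $x^s$ on $[-1,1]$ to error $2e^{-d^2/2s}$), applied with $f_i(x)=c_i x$ and $g_i(x)=1-x^2$.

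One quantitative point to tighten: you state the per-term degree as $O(\sqrt{i}\,\log(1/\eta))$, which yields a final degree of $O(\sqrt{k}\,\log(1/\eta))=O(\alpha^{-1}\log^{3/2}(1/\alpha\epsilon))$ --- this is \emph{not} $O(\alpha^{-1}\log(1/\alpha\epsilon))$, and the extra $\sqrt{\log}$ cannot simply be ``absorbed'' into the $O(\cdot)$. The sharp form of the monomial bound is that degree $O(\sqrt{i\log(1/\eta)})$ suffices (this is exactly what $e^{-d^2/2i}\le\eta$ gives), and with that the final degree becomes $O(\sqrt{k\log(k/\epsilon)})=O(\alpha^{-1}\log(1/\alpha\epsilon))$ as claimed. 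You already flag that ``one can be slightly more careful in the Chebyshev bound''; this is precisely the care needed, and it is the version the paper uses. Also, it suffices to take $k=\alpha^{-2}\ln(2/\epsilon)$ from Corollary~\ref{cor:crude-convergence-of_pq}; your larger $k=\Theta(\alpha^{-2}\log(1/\alpha\epsilon))$ is harmless but unnecessary.
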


Lemma \ref{lem:chebyshev-improved-expansion} achieves, up an additive
$\log(1/\alpha)/\alpha$, the optimal trade off between degree and
approximation of $\sgn(x)$ \cite{eremenko2007uniform}.  We have preliminary progress toward making this near-optimal polynomial algorithmic, a topic we leave to explore in future work.

\subsection{Stable iterative algorithm for the sign function}
\label{sub:iterative_sign}

We now provide an iterative algorithm for computing $p_k(x)$ that works when applied with limited precision. Our formula is obtained by considering each term of $p_k(x)$. Let
\[
t_k(x) \defeq x (1 - x^2)^k \prod_{j =1}^k \frac{2j - 1}{2j}~.
\]
Clearly $t_{k + 1}(x)  = t_k(x) (1 - x^2) (2k + 1) / (2k + 2)$ and therefore we can compute the $t_{k}$ iteratively. Since $p_k(x) = \sum_{i = 0}^{k} t_i(x)$ we can compute $p_k(x)$ iteratively as well. We show this procedure works when applied to matrices, even if all operations are performed with limited precision:

\begin{lemma}
\label{lem:stable-sign}
Let $\bv{B} \in \R^{d \times d}$ be symmetric with $\|\bv{B}\|_2 \leq 1$. Let $\algC$ be a procedure that given $\bv x \in \R^d$ produces $\algC(\bv x)$ with 
$
\| \algC(\bv x) - (\bv{I} - \bv{B}^2) \bv x\|_2 \leq \epsilon \|\bv x\|_2.
$
Given $\bv y \in \R^d$ suppose that we have $\bv{t}_0$ and $\bv{p}_0$ such that $\|\bv{t}_0 - \bv{B} \bv{y}\|_2 \leq \epsilon \|\bv y\|_2$ and  $\|\bv{p}_0 - \bv{B} \bv y\|_2 \leq \epsilon \|\bv y\|_2$. For all $k \geq 1$ set
\[
\bv{t}_{k + 1} := \left(\frac{2k + 1}{2k+ 2}\right) \algC(\bv{t}_k)
\enspace \text{ and } \enspace
\bv{p}_{k + 1} := \bv{p}_k + \bv{t}_{k + 1} ~.
\]
Then if arithmetic operations are carried out with $\Omega(\log(d/\epsilon))$ bits of precision we have for $1 \leq k \leq 1/(7\epsilon)$ 
\[
\| t_{k}(\bv{B}) \bv{y} - \bv{t}_{k} \|_2 \leq 7k\epsilon 
\enspace \text{ and } \enspace
\| p_{k}(\bv{B}) \bv y - \bv{p}_k \|_2 \leq 7k\epsilon ~.
\]
\end{lemma}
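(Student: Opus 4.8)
The plan is to reduce everything to two error recursions and then run a crude induction followed by a refined one. Write $c_k := \prod_{j=1}^{k}\frac{2j-1}{2j}$, so that $t_k(x) = x(1-x^2)^k c_k$, the one-step relations are $t_{k+1}(x) = \frac{2k+1}{2k+2}(1-x^2)\,t_k(x)$ and $p_{k+1}(x) = p_k(x) + t_{k+1}(x)$, and (from the proof of Lemma~\ref{lem:quality_of_sign_poly}) $c_k \le 1/\sqrt{k}$ with $(c_k)$ decreasing. Normalize to $\|\bv{y}\|_2 = 1$ (the statement is scale-invariant) and set $\bv{e}_k := \bv{t}_k - t_k(\bv{B})\bv{y}$ and $\bv{f}_k := \bv{p}_k - p_k(\bv{B})\bv{y}$; since $p_0(x) = t_0(x) = x$, the hypotheses give $\|\bv{e}_0\|_2,\|\bv{f}_0\|_2 \le \epsilon$. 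Because $\bv{B}$ is symmetric with $\|\bv{B}\|_2 \le 1$, the eigenvalues of $\bv{I}-\bv{B}^2$ lie in $[0,1]$, so $\|\bv{I}-\bv{B}^2\|_2 \le 1$. In exact arithmetic with an exact $\algC$ the recursion returns $t_k(\bv{B})\bv{y}$ and $p_k(\bv{B})\bv{y}$ exactly, so the only error sources are the initial data, the $\epsilon\|\cdot\|_2$ slack in $\algC$, and floating-point roundoff, which under $\Omega(\log(d/\epsilon))$-bit precision is $O(\epsilon^2)$ in $\ell_2$-norm on any single operation on $O(1)$-norm vectors in $\R^d$. This yields the per-step recursions
\begin{align*}
\bv{e}_{k+1} &= \tfrac{2k+1}{2k+2}(\bv{I}-\bv{B}^2)\bv{e}_k + \bv{\zeta}_k, & \|\bv{\zeta}_k\|_2 &\le \epsilon\|\bv{t}_k\|_2 + O(\epsilon^2),\\
\bv{f}_{k+1} &= \bv{f}_k + \bv{e}_{k+1} + \bv{\rho}_k, & \|\bv{\rho}_k\|_2 &= O(\epsilon^2).
\end{align*}

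For the first bound I would argue $\|\bv{e}_k\|_2 \le 7k\epsilon$ by induction on $k \le 1/(7\epsilon)$: the hypothesis gives $\|\bv{t}_k\|_2 \le \|t_k(\bv{B})\bv{y}\|_2 + \|\bv{e}_k\|_2 \le 2$, hence $\|\bv{\zeta}_k\|_2 \le 2\epsilon + O(\epsilon^2) \le 7\epsilon$, and since $\|\bv{I}-\bv{B}^2\|_2 \le 1$ and $7(k+1) - \frac{2k+1}{2k+2}\cdot 7k \ge 7$ by a one-line computation, $\|\bv{e}_{k+1}\|_2 \le \frac{2k+1}{2k+2}\cdot 7k\epsilon + 7\epsilon \le 7(k+1)\epsilon$; the base case $k=1$ is one unrolling step. (This also certifies $\|\bv{t}_k\|_2 \le 2$ throughout, which is what the $O(\epsilon^2)$ roundoff bounds require.)

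The crux is to improve this to $\|\bv{e}_k\|_2 = O(\epsilon)$ with an absolute constant, for all $1 \le k \le 1/(7\epsilon)$. Unrolling the $\bv{e}$-recursion, $\bv{e}_k = c_k(\bv{I}-\bv{B}^2)^k\bv{e}_0 + \sum_{j=0}^{k-1}\frac{c_k}{c_{j+1}}(\bv{I}-\bv{B}^2)^{k-1-j}\bv{\zeta}_j$, so $\|\bv{e}_k\|_2 \le c_k\epsilon + \sum_{j=0}^{k-1}\frac{c_k}{c_{j+1}}\|\bv{\zeta}_j\|_2$ using $\|\bv{I}-\bv{B}^2\|_2 \le 1$ and $c_k \le c_{j+1}$. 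Now I would plug in the key estimate $\|t_j(\bv{B})\bv{y}\|_2 \le \max_{x\in[-1,1]}|t_j(x)| \le c_j/\sqrt{j+1} = O(1/(j+1))$ (the maximum of $|x|(1-x^2)^j$ on $[-1,1]$ is at most $(j+1)^{-1/2}$), together with the inductive hypothesis $\|\bv{e}_j\|_2 = O(\epsilon)$ for $j < k$, to get $\|\bv{\zeta}_j\|_2 \le \epsilon(\|t_j(\bv{B})\bv{y}\|_2 + \|\bv{e}_j\|_2) + O(\epsilon^2) = O(\epsilon/(j+1)) + O(\epsilon^2)$. Substituting, and using $c_k \le 1/\sqrt{k}$, $1/c_{j+1} \le 2\sqrt{j+1}$, $\sum_{i\le k} i^{-1/2} \le 2\sqrt{k}$ and $\sum_{i\le k} i^{1/2} = O(k^{3/2})$,
\[
\|\bv{e}_k\|_2 \;\le\; c_k\epsilon \;+\; c_k\,O(\epsilon)\!\sum_{j<k}\frac{1}{c_{j+1}(j+1)} \;+\; c_k\,O(\epsilon^2)\!\sum_{j<k}\frac{1}{c_{j+1}} \;=\; O(\epsilon) + O(\epsilon) + O(\epsilon^2 k) \;=\; O(\epsilon),
\]
since $\epsilon^2 k \le \epsilon/7$ in the stated range; choosing the implied constants and the precision constant large enough closes the induction, with base cases $k=0,1$ by inspection. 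The second bound then follows by summing $\bv{f}_k = \bv{f}_0 + \sum_{i=1}^{k}(\bv{e}_i + \bv{\rho}_{i-1})$: $\|\bv{f}_k\|_2 \le \epsilon + \sum_{i=1}^{k}O(\epsilon) + k\cdot O(\epsilon^2) = O(k\epsilon)$, and tracking the absolute constants (with the precision constant taken large) brings this to $7k\epsilon$.

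I expect the refined estimate $\|\bv{e}_k\|_2 = O(\epsilon)$ to be the only real obstacle. The naive bound $\|\bv{e}_i\|_2 \le 7i\epsilon$ is useless for the sum defining $\bv{f}_k$, since it only gives $\|\bv{f}_k\|_2 = O(k^2\epsilon)$; what makes the argument go through is that the intermediate terms $t_j(\bv{B})\bv{y}$ are genuinely small, of norm $O(1/j)$, and that the coefficients $c_j$ decay like $1/\sqrt{j}$ — so each freshly injected $O(\epsilon)$ perturbation contributes only $O(\epsilon/j)$ to $\bv{\zeta}_j$, and the accumulated error in $\bv{t}_k$ telescopes to a $k$-independent multiple of $\epsilon$. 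Getting the two competing $\sqrt{\cdot}$ factors to cancel correctly, and verifying that all roundoff contributions stay lower-order under $\Omega(\log(d/\epsilon))$-bit precision, is the delicate bookkeeping; the rest is routine.
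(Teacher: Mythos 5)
Your proposal is correct, and it is not the same argument as the paper's. The paper's proof is a single forward error-propagation: writing $\bv{C} = \bv{I}-\bv{B}^2$ and using $\|\bv{C}\|_2 \le 1$ together with $\|t_k(\bv{B})\bv{y}\|_2 \le \|\bv{y}\|_2$, it shows the error in $\bv{t}_k$ increases by at most an additive $6\epsilon\|\bv{y}\|_2$ per iteration (budgeting a full $\epsilon$, not $\epsilon^2$, per truncation), and then asserts ``by the same argument'' that the error in $\bv{p}_k$ grows at the same rate, yielding $7k\epsilon$ for both. Your crude induction reproduces the first half of this. Where you go further is in recognizing that the $\bv{p}_k$ bound does not follow from the linear-in-$k$ bound on $\bv{e}_k$ alone — summing $\|\bv{e}_i\|_2\le 7i\epsilon$ only gives $O(k^2\epsilon)$ — and in supplying the missing ingredient: the uniform estimate $\|\bv{e}_k\|_2 = O(\epsilon)$, obtained by unrolling the recursion and exploiting both the decay $\max_{x\in[-1,1]}|t_j(x)| = O(1/j)$ and the two-sided Wallis-type bounds $c_j \asymp 1/\sqrt{j}$ so that the $\sqrt{k}$ and $\sqrt{j+1}$ factors cancel against the harmonic-type sum. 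This is a genuinely sharper structural fact about the iteration (each fresh perturbation is injected with weight $O(\epsilon/j)$ and then only contracted), and it is exactly what legitimizes the additive-$O(\epsilon)$-per-step growth of $\|\bv{p}_k - p_k(\bv{B})\bv{y}\|_2$ that the paper's terse phrasing takes for granted; it is also what the downstream use in Lemma~\ref{lem:poly-step-approx} (which needs $O(k\epsilon)$, not $O(k^2\epsilon)$) actually requires. The only soft spot is at the very end: your refined argument delivers $C k\epsilon$ for some absolute constant $C$ that you do not verify is $\le 7$, so the literal constant in the lemma statement is not fully tracked — but since every invocation of the lemma only uses the $O(k\epsilon)$ form, this is immaterial.
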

\begin{proof}
Let $\bv{t}_k^* \defeq t_k(\bv{B}) \bv{y}$, $p_k^* \defeq p_k(\bv{B}) \bv y$, and  $\bv{C} \defeq \bv{I} - \bv{B}^2$. Since $\bv{p}_0^* = \bv{t}_0^* = \bv{B} \bv{y}$ and $\|\bv{B}\|_2 \leq 1$ we see that even if $\bv{t}_0$ and $\bv{p}_0$ are truncated to the given bit precision we still have $\|\bv{t}_0 - \bv{t}_0^*\|_2 \leq \epsilon \|\bv{y}\|_2$ and $\|\bv{p}_0 - \bv{p}_0^*\|_2 \leq  \epsilon \|\bv y\|_2$.

Now suppose that $\|\bv{t}_k - \bv{t}_k^*\|_2 \leq \alpha\norm{\bv y}_2$ for some $\alpha \le 1$. Since $|t_k(x)| \leq |p_k(x)| \leq |\sgn(x)| \leq 1$ for $x \in [-1, 1]$ and $-\bv{I} \preceq \bv{B} \preceq \bv{I}$ we know that $\|\bv{t}_k^*\|_2 \leq \|\bv{y}\|_2$ and by reverse triangle inequality $\|\bv{t}_k\|_2 \leq (1 + \alpha) \|\bv{y}\|_2$. Using our assumption on $\algC$ and applying triangle inequality yields 
\begin{align*}
\| \algC(\bv{t}_k) - \bv{C} \bv{t}_k^* \|_2
&\leq
\| \algC(\bv{t}_k)  - \bv{C} \bv{t}_k\|_2 + \|\bv{C} (\bv{t}_k - \bv{t}_k^*) \|_2
\\
&\leq 
\epsilon \| \bv{t}_k\|_2 + \|\bv{C} \|_2 \cdot  \|(\bv{t}_k - \bv{t}_k^*) \|_2
\\
&\leq
(\epsilon (1 + \alpha) + \alpha) \|\bv y\|_2 
\leq (2\epsilon + \alpha) \|\bv y\|_2 ~.
\end{align*}
In the last line we used $\|\bv{C}\|_2 \leq 1$ since $\bv{0} \preceq \bv{B}^2 \preceq \bv{I}$. Again, by this fact we know that $\|\bv{C} \bv{t}_k^* \|_2 \leq \|\bv{y}\|_2$ and therefore again by reverse triangle inequality $\|\algC(\bv{t}_k)\|_2 \leq (1 + 2\epsilon + \alpha) \|\bv y\|_2$. Using $\algC(\bv{t}_k)$ to compute $\bv{t}_{k + 1}$ with bounded arithmetic precision will then introduce an additional additive error of $\epsilon (1 + 2\epsilon + \alpha) \|\bv y\|_2 \leq 4\epsilon \|\bv y\|_2$. Putting all this together we have that $\|\bv{t}_k^* - \bv{t}_k\|_2$ grows by at most an additive $6\epsilon\norm{\bv{y}}_2$ every time $k$ increases and by the same argument so does $\|\bv{p}_k - \bv{p}_k^*\|_2$. Including our initial error of $\epsilon$ on $\bv{t}_0$ and $\bv{p}_0$, we conclude that $\|\bv{t}_k^* - \bv{t}_k\|_2$ and $\|\bv{p}_k - \bv{p}_k^*\|_2$ are both bounded by $6k\epsilon + \epsilon \leq 7k\epsilon$.
\end{proof}

\subsection{Approximating the step function}
\label{sub:step_function}

We finally apply the results of Section~\ref{sub:poly_approx_sign} and Section~\ref{sub:iterative_sign} to approximate the step function and prove Lemma~\ref{lem:matrix-step-main}. We simply apply the fact that $s(x) = (1/2) (1 + \sgn(2x - 1))$ and perform further error analysis. We first use Lemma~\ref{lem:stable-sign} to show how to compute $(1/2) ( 1 + p_k(2x - 1))$. 

\begin{lemma}
\label{lem:poly-step-approx}
Let $\bv{S} \in \R^{d \times d}$ be symmetric with $\bv{0} \preceq \bv{S} \preceq \bv{I}$. Let $\algA$ be a procedure that on $\bv x \in \R^d$ produces $\algA(\bv x)$ with $\|\algA(\bv x) - \bv{S}\bv x\|_2 \leq \epsilon \|\bv x\|_2$. Given arbitrary $\bv y \in \R^d$ set $\bv{s}_0 := \algA(\bv y)$, $\bv{w}_0 := \bv{s}_0 - (1/2) \bv{y}$, and for all $k \geq 0$ set
\[
\bv{w}_{k + 1} := 4 \left(\frac{2k + 1}{2k + 2}\right) \algA(\bv{w}_k - \algA(\bv{w}_k))
\]
and $\bv{s}_{k + 1} := \bv{s}_k + \bv{w}_{k + 1}$. If arithmetic operations are performed with $\Omega(\log(d/\epsilon))$ bits of precision and $k = O(1/\epsilon)$ then  
$
\| 1/2(\bv{I} - p_k(2\bv{S} - \bv{I})) \bv y - \bv{s}_k \|_2 = O(k \epsilon) \|\bv y\|_2 
$.
\end{lemma}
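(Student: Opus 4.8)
The plan is to reduce Lemma~\ref{lem:poly-step-approx} to the sign-function stability result, Lemma~\ref{lem:stable-sign}, via a linear change of variables. Set $\bv{B} := 2\bv{S} - \bv{I}$; since $\bv{0} \preceq \bv{S} \preceq \bv{I}$ we have $-\bv{I} \preceq \bv{B} \preceq \bv{I}$, so $\norm{\bv{B}}_2 \le 1$, as Lemma~\ref{lem:stable-sign} requires. The key algebraic identity is
\begin{align*}
\bv{I} - \bv{B}^2 \;=\; \bv{I} - (2\bv{S} - \bv{I})^2 \;=\; 4\bv{S} - 4\bv{S}^2 \;=\; 4\,\bv{S}(\bv{I} - \bv{S}),
\end{align*}
so the map $\bv{x} \mapsto 4\algA(\bv{x} - \algA(\bv{x}))$ is a noisy realization of $\bv{x} \mapsto (\bv{I} - \bv{B}^2)\bv{x}$. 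Consequently the update $\bv{w}_{k+1} = 4\bigl(\tfrac{2k+1}{2k+2}\bigr)\algA(\bv{w}_k - \algA(\bv{w}_k))$ is, after the rescaling $\bv{t}_k := 2\bv{w}_k$, exactly the $\bv{t}$-recurrence of Lemma~\ref{lem:stable-sign} driven by the subroutine $\algC(\bv{x}) := 8\,\algA\bigl(\tfrac{\bv{x}}{2} - \algA(\tfrac{\bv{x}}{2})\bigr)$; moreover $\bv{t}_0 = 2\bv{w}_0 = 2\bv{s}_0 - \bv{y} \approx 2\bv{S}\bv{y} - \bv{y} = \bv{B}\bv{y}$. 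Setting also $\bv{p}_k := 2\bv{s}_k - \bv{y}$ (so $\bv{p}_0 = \bv{t}_0 \approx \bv{B}\bv{y}$ and $\bv{p}_{k+1} = \bv{p}_k + \bv{t}_{k+1}$), the pair $(\bv{p}_k, \bv{t}_k)$ runs precisely the iteration of Lemma~\ref{lem:stable-sign}, whose exact-arithmetic limit is $\bv{p}_k = p_k(\bv{B})\bv{y}$. Unwinding, $\bv{s}_k = \tfrac12(\bv{p}_k + \bv{y})$ targets $\tfrac12\bigl(\bv{I} + p_k(2\bv{S}-\bv{I})\bigr)\bv{y}$, which is the quantity in the statement (the sign in the argument of $p_k$ is immaterial since $p_k$ is odd).

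The only genuinely new estimate needed — and the main obstacle — is that $\algC$ meets the hypothesis of Lemma~\ref{lem:stable-sign}: $\norm{\algC(\bv{x}) - (\bv{I} - \bv{B}^2)\bv{x}}_2 = O(\epsilon)\norm{\bv{x}}_2$. Write the inner call as $\algA(\tfrac{\bv{x}}{2}) = \tfrac12\bv{S}\bv{x} + \bv{e}_1$ with $\norm{\bv{e}_1}_2 \le \tfrac{\epsilon}{2}\norm{\bv{x}}_2$. Because $\norm{\bv{I} - \bv{S}}_2 \le 1$, the intermediate vector $\tfrac{\bv{x}}{2} - \algA(\tfrac{\bv{x}}{2}) = \tfrac12(\bv{I} - \bv{S})\bv{x} - \bv{e}_1$ has norm $O(\norm{\bv{x}}_2)$, so the outer call of $\algA$ adds a further $O(\epsilon)\norm{\bv{x}}_2$; since $\norm{\bv{S}}_2 \le 1$, the inner error $\bv{e}_1$ is not amplified when $\bv{S}$ is applied to it, and multiplication by $8$ (and division by $2$) is exact in binary. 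Thus the errors from the two applications of $\algA$ total $O(\epsilon)\norm{\bv{x}}_2$ — it is exactly here that $\bv{0} \preceq \bv{S} \preceq \bv{I}$ prevents error blow-up under composition. Finally, every intermediate vector above has norm $O(\norm{\bv{x}}_2)$, so the single rounding step (the subtraction) contributes $O(2^{-b}\sqrt{d}\,\norm{\bv{x}}_2)$, which is $O(\epsilon)\norm{\bv{x}}_2$ once $b = \Omega(\log(d/\epsilon))$ bits are carried. Absorbing constants, $\algC$ is a valid subroutine for Lemma~\ref{lem:stable-sign} with error parameter $\epsilon' = O(\epsilon)$, and $\bv{t}_0, \bv{p}_0$ satisfy its initial bounds with the same $\epsilon'$.

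It remains to invoke Lemma~\ref{lem:stable-sign} and translate back. The reparametrization $\bv{t}_k = 2\bv{w}_k$, $\bv{p}_k = 2\bv{s}_k - \bv{y}$ only rescales by a fixed constant and shifts by the fixed vector $\bv{y}$ of norm $\norm{\bv{y}}_2$, so it commutes with finite-precision rounding up to an additional $O(\epsilon)\norm{\bv{y}}_2$ per step; hence the limited-precision analysis of Lemma~\ref{lem:stable-sign} transfers verbatim, up to absolute constants, to our $(\bv{s}_k,\bv{w}_k)$ iteration. With $b = \Omega(\log(d/\epsilon)) = \Omega(\log(d/\epsilon'))$ bits it yields $\norm{p_k(\bv{B})\bv{y} - \bv{p}_k}_2 = O(k\epsilon')\norm{\bv{y}}_2 = O(k\epsilon)\norm{\bv{y}}_2$ for all $1 \le k \le 1/(7\epsilon') = O(1/\epsilon)$. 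Combining with $\bv{s}_k = \tfrac12(\bv{p}_k + \bv{y})$ and the identity $\tfrac12\bigl(\bv{I} + p_k(2\bv{S}-\bv{I})\bigr)\bv{y} = \tfrac12\bigl(p_k(\bv{B})\bv{y} + \bv{y}\bigr)$ gives $\bigl\|\tfrac12(\bv{I} + p_k(2\bv{S}-\bv{I}))\bv{y} - \bv{s}_k\bigr\|_2 = O(k\epsilon)\norm{\bv{y}}_2$, as claimed. Aside from the nested-error bound for $\algC$, every ingredient (the change of variables, the affine bookkeeping, and the per-step rounding accounting) follows the template already set up in the proof of Lemma~\ref{lem:stable-sign}.
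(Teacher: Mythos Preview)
Your proposal is correct and follows essentially the same route as the paper: set $\bv{B}=2\bv{S}-\bv{I}$, use the identity $\bv{I}-\bv{B}^2=4\bv{S}(\bv{I}-\bv{S})$ to recognize $4\algA(\cdot-\algA(\cdot))$ as a noisy realization of $(\bv{I}-\bv{B}^2)$, verify the $O(\epsilon)$ error bound for the composite subroutine, and invoke Lemma~\ref{lem:stable-sign}. The only cosmetic difference is that the paper applies Lemma~\ref{lem:stable-sign} directly with input $\tfrac12\bv{y}$ and $\algC(\bv{x})=4\algA(\bv{x}-\algA(\bv{x}))$, whereas you rescale via $\bv{t}_k=2\bv{w}_k$ and take $\algC(\bv{x})=8\algA(\tfrac{\bv{x}}{2}-\algA(\tfrac{\bv{x}}{2}))$; these are equivalent. (Your parenthetical about oddness of $p_k$ does not actually reconcile $\tfrac12(\bv{I}+p_k(\bv{B}))$ with the printed $\tfrac12(\bv{I}-p_k(\bv{B}))$ --- the discrepancy is a sign typo in the lemma statement, not something oddness fixes.)
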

\begin{proof}
Since $\bv{M} \defeq  \bv{I} - (2\bv{S} - \bv{I})^2
= 4 \bv{S} (\bv{I} - \bv{S})$ we see that $\bv{w}_{k}$ is the same as $(1/2) \bv{t}_{k}$ in Lemma~\ref{lem:stable-sign} with $\bv{B} = 2 \bv{S} - \bv{I}$ and $\algC(\bv x) = 4 \algA(\bv x - \algA(\bv x))$, and $\bv{s}_{k} = \sum_{i = 0}^{k} (1/2) \bv{t}_i + (1/2) \bv{b}$. Since multiplying by $1/2$ everywhere does not increase error and since $\|2 \bv{S} - \bv{I}\|_2 \leq 1$ we can invoke Lemma~\ref{lem:stable-sign} to yield the result provided we can show
$
\| 4 \algA(\bv{x} - \algA(\bv{x})) - \bv{M} \bv{x}\|_2 = O(\epsilon) \|\bv x\|_2 
$.
Computing $\algA(\bv{x})$ and subtracting from $\bv{x}$ introduces at most additive error $2\epsilon \|\bv{x}\|_2$ Consequently by the error guarantee of $\algA$, $\| 4 \algA(\algA(\bv x) - \bv x) - \bv{M} \bv x\|_2 =O(\epsilon) \|\bv x\|_2$ as desired. 
\end{proof}
Using Lemma~\ref{lem:poly-step-approx} and Corollary~\ref{cor:crude-convergence-of_pq} we finally have:
\ificml
\vspace{-.5em}
\fi
\begin{proof}[Proof of Lemma~\ref{lem:matrix-step-main}]
By assumption, $\bv{0} \preceq \bv{S} \preceq \bv{I}$ and $\epsilon \gap^2 q = O(1)$. 
Invoking Lemma~\ref{lem:poly-step-approx} with error $\epsilon' = \epsilon^2\gap^2$, letting $\bv{a}_q \defeq 1/2(\bv{I} - p_q(2\bv{S} - \bv{I})) \bv{y}$ we have
\begin{equation}
\label{eq:main-iter-eq-1}
\| \bv{a}_q - \bv{s}_q \|_2 = O(\gap^2 \epsilon^2 q) \|\bv y\|_2 = O(\epsilon) \|\bv y\|_2 ~.
\end{equation}
Now, since $s(\bv{S}) = 1/2(\bv{I} - \sgn(2\bv{S} - \bv{I}))$ and every eigenvalue of $2\bv{S} - \bv{I}$ is in $[\gap, 1]$, by assumption on $\bv{S}$ we can invoke  Corollary~\ref{cor:crude-convergence-of_pq} yielding
$\| \bv{a}_q - s(\bv{S}) \bv{y}\|_2
\leq \frac{1}{2} \| p_q(2\bv{S} - \bv{I}) - \sgn(2 \bv{S} - \bv{I}) \|_2\norm{\bv y}_2
\leq 2\epsilon \norm{\bv y}_2.
$
The result follows from combining with \eqref{eq:main-iter-eq-1} via triangle inequality.
\end{proof}

\section{Empirical evaluation}\label{sec:experiments}
We conclude with an empirical evaluation of \textsc{pc-proc} and
\textsc{ridge-pcr} (Algorithms \ref{alg:pcp} and \ref{alg:pcr}).
Since PCR has already been justified as a statistical technique, we
focus on showing that, with few iterations, the algorithm recovers
an accurate approximation to $\bv{A}_\lambda^\pinv\bv{b}$ and
$\bv{P}_{\bv{A}_\lambda}\bv{y}$.

We begin with synthetic data, which lets us control the spectral gap $\gamma$ that dominates our iteration bounds (see Theorem \ref{pcpAlgoThm}). 
Data is generated randomly by drawing top singular values uniformly from the range $[.5(1+\gamma), 1]$ and tail singular values from $[0, .5(1-\gamma)]$. $\lambda$ is set to $.5$ and $\bv{A}$ is formed via the SVD $\bv{U}\bs{\Sigma}\bv{V}^\T$ where $\bv{U}$ and $\bv{V}$ are random orthonormal matrices and $\bs{\Sigma}$ contains our random singular values. To model a typical PCR application, $\bv{b}$ is generated by adding noise to the response $\bv{A}\bv{x}$ of a random ``true'' $\bv{x}$ that correlates with $\bv{A}$'s top principal components. 
\begin{figure}[h]
\centering
\subfigure[\normalsize{Regression}]{
	\label{synth_reg}
	\ificml
      	\includegraphics[width=.47\columnwidth]{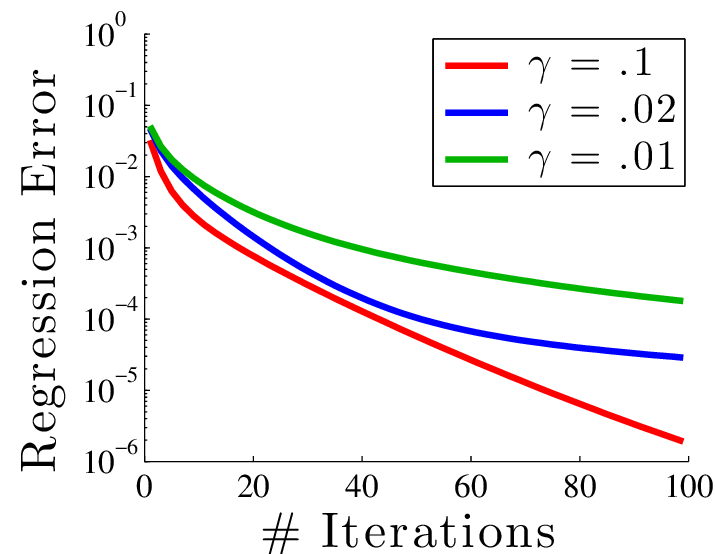}
	\else
	\includegraphics[width=.42\columnwidth]{synth_regression_error.eps}
	\fi
}
\subfigure[\normalsize{Projection}]{
	\label{synth_proj}
	\ificml
	\includegraphics[width=.47\columnwidth]{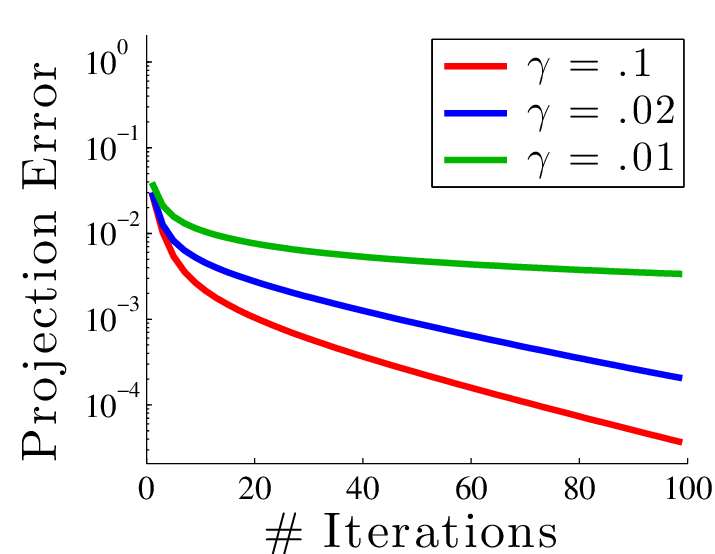}
	\else
	\hspace{2em}
	\includegraphics[width=.42\columnwidth]{synth_projection_error.eps}
	\fi
}
\ificml
\vskip -1em
\fi
\caption{Relative error (shown on log scale) for \textsc{ridge-pcr} and \textsc{pc-proj} for synthetically generated data.}
\ificml
\vskip -1em
\fi
\end{figure}

As apparent in Figure \ref{synth_reg}, our algorithm performs very well for regression, even for small $\gamma$. Error is measured via the natural $\bv{A}^\T\bv{A}$-norm and we plot
$\|\textsc{ridge-pcr}(\bv{A},\bv{b},\lambda) - \bv{A}_\lambda^\pinv\bv{b}\|_{\bv{A}^\T\bv{A}}^2/\|\bv{A}_\lambda^\pinv\bv{b}\|_{\bv{A}^\T\bv{A}}^2$.

Figure \ref{synth_proj} shows similar convergence for projection, although we do notice a stronger effect of a small gap $\gamma$ in this case.
Projection error is given with respect to the more natural 2-norm. 

Both plots confirm the linear convergence predicted by our analysis (Theorems \ref{pcpAlgoThm} and \ref{pcrAlgoThm}). 
To illustrate stability, we include an extended plot for the $\gamma=.1$ data which shows arbitrarily high accuracy as iterations increase (Figure \ref{synth_log}).
\begin{figure}[H]
\ificml
\vskip -0.2em
\fi
\begin{center}
\ificml
\centerline{\includegraphics[width=.85\columnwidth]{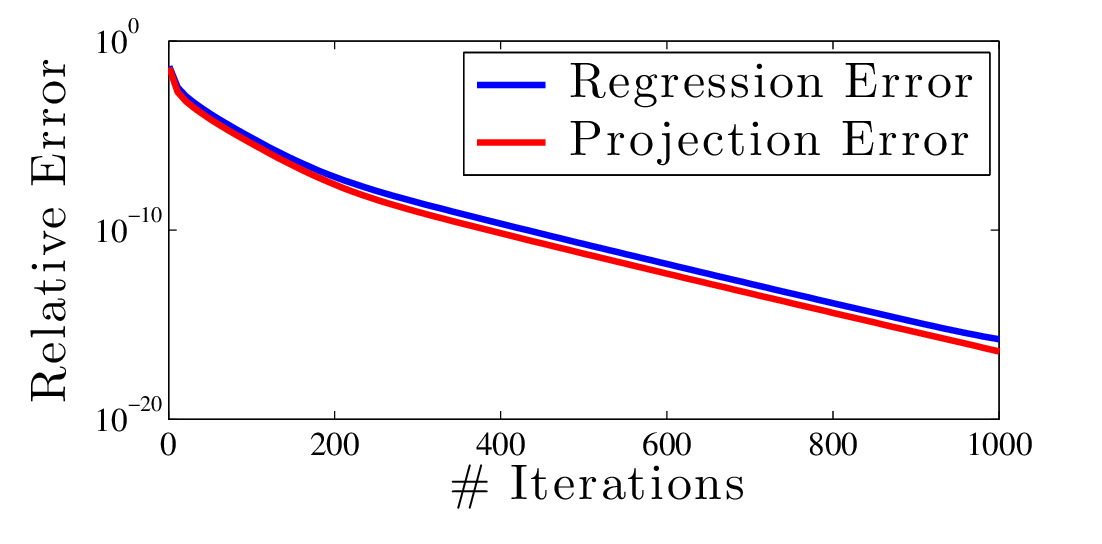}}
\else
\centerline{\includegraphics[width=.55\columnwidth]{synth_log_plot.eps}}
\fi
\ificml  
\vskip -1em
\fi
\caption{Extended log error plot on synthetic data with gap $\gamma = .1$.}
\label{synth_log}
\end{center}
\ificml
\vskip -1em
\fi
\end{figure} 
\ificml
\vspace{-.5em}
\fi

Finally, we consider a large regression problem constructed from MNIST classification data \cite{mnist}, with the goal of distinguishing handwritten digits \{1,2,4,5,7\} from the rest.
Input is normalized and 1000 random Fourier features are generated according to a unit RBF kernel \cite{NIPS2007_3182}. Our final data set is both of larger scale and condition number than the original.
\begin{figure}[H]
\ificml
\vskip -0.2em
\fi
\begin{center}
\ificml
\centerline{\includegraphics[width=.85\columnwidth]{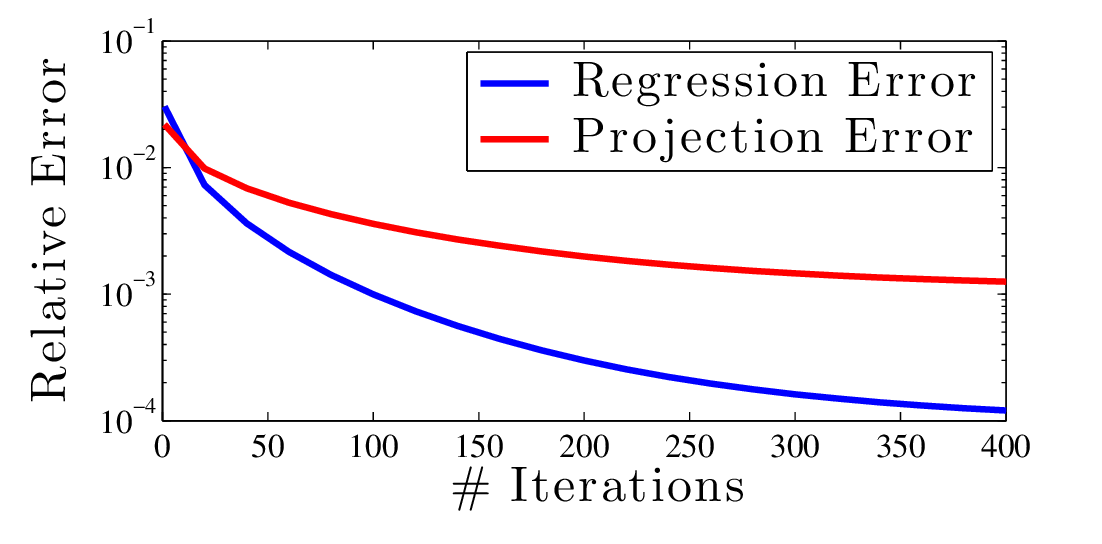}}
\else
\centerline{\includegraphics[width=.55\columnwidth]{mnist_plot.eps}}
\fi
\ificml  
\vskip -1em
\fi
\caption{Relative error (on log scale) for \textsc{ridge-pcr} and \textsc{pc-proj} for an MNIST-based regression problem.}
\label{fig:mnist}
\end{center}
\ificml
\vskip -1em
\fi
\end{figure} 
The MNIST principal component regression was run with $\lambda =
.01\sigma_1^2$. Although the gap $\gamma$ is very small around this
cutoff point (just $.006$), we see fast convergence for
PCR. Convergence for projection is slowed more notably by the small
gap, but it is still possible to obtain $0.01$ relative error with
only 20 iterations (i.e.\ invocations of ridge regression).

\ificml
\clearpage
\fi
\bibliography{truncsys}
\ificml
\bibliographystyle{icml2016}
\else
\bibliographystyle{alpha}
\fi
\ificml
\clearpage
\fi
\appendix

\section{The matrix step function}
\label{sec:step-appendix}

Here we provide proofs omitted from Section~\ref{sec:matrix-step-function}. We prove Lemma~\ref{lem:integral_form_of_pk} showing that $p_q(x)$ can be viewed alternatively as a simple integral of $(1- x^2)^q$. We also prove Lemma~\ref{lem:chebyshev-improved-expansion} showing the existence of an even lower degree polynomial approximation to $\sgn(x)$.

\begin{replemma} {lem:integral_form_of_pk}
For all $x \in \mathbb{R}$
\[
p_k(x) = \frac{\int_0^x (1 - y^2)^k dy}{\int_0^1 (1 - y^2)^k dy}
~.
\]
\end{replemma}
\begin{proof}

Let $q_k(x) \defeq \int_{0}^{x} (1 - x^2)^q$. Our proof follows from simply recursively computing this integral via integration by parts.
Integration by parts 
with $u=(1-x^2)^k$ and $dv = dx$ yields
\[
q_k(x) 
= 
x(1-x^2)^k + 2k \int_{0}^{x} x^2(1-x^2)^{k - 1} ~.
\]
Since $x^2 = 1 - (1- x^2)$ we have
\[
q_k(x) = x(1-x^2)^k + 2k \cdot  q_{k - 1}(x) - 2k \cdot q_k(x).
\]
Rearranging terms and dividing by $2k + 1$ yields
\[
q_k(x) = \frac{1}{2k + 1} \left[x(1 - x^2)^k + 2k \cdot q_{k - 1}(x) \right].
\]
Since $q_0(1) = 1$ this implies that $q_k(1) = \prod_{j =1}^k \frac{2j}{2j + 1}$ and
\[
\frac{q_k(x)}{q_k(1)}
= \frac{1}{2k + 1} \prod_{j =1}^k \left(\frac{2j + 1}{2j}\right) x (1 - x^2)^k + \frac{q_{k - 1}(x)}{q_{k - 1}(1)} ~.
\]
Since $\frac{1}{2k + 1} \prod_{j =1}^k \frac{2j + 1}{2j} = \prod_{j =1}^k \frac{2j - 1}{2j}$ we have that $q_k(x) / q_k(1) = p_k(x)$ as desired.
\end{proof}

We now prove the existence of a lower degree polynomial for approximating $\sgn(x)$.
\begin{replemma}{lem:chebyshev-improved-expansion}
There exists an $O(\alpha^{-1} \log(1/\alpha \epsilon))$ degree polynomial $q(x)$ such that $|\sgn(x) - q(x)| \leq \epsilon$ for all $x \in [-1, 1]$ with $|x| \geq \alpha > 0$ .
\end{replemma}

We first provide a general result on approximating polynomials with lower degree polynomials.

\begin{lemma} [Polynomial Compression]
\label{lem:chebyshev-compression}
Let $p(x)$ be an $O(k)$ degree polynomial that we can write as
\[
p(x) = \sum_{i = 0}^{k} f_i(x) \left( g_i(x)\right)^i
\]
where $f_i(x)$ and $g_i(x)$ are $O(1)$ degree polynomials satisfying $|f_i(x)| \leq a_i$ and $|g_i(x)| \leq 1$ for all $x \in [-1, 1]$. Then, there exists polynomial $q(x)$ of degree $O(\sqrt{k \log(A/\epsilon))}$ where $A = \sum_{i = 0}^{k} a_i$ such that $|p(x) - q(x)| \leq \epsilon$ for all $x \in [-1, 1]$.
\end{lemma}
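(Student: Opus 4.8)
The plan is to prove Lemma~\ref{lem:chebyshev-compression} (Polynomial Compression) by replacing each high-degree monomial power $(g_i(x))^i$ with a low-degree polynomial approximation built from Chebyshev polynomials, then controlling the total error by the sum of coefficient magnitudes $A = \sum a_i$. The key fact I would invoke is the classical result that for any $i$ the monomial $z^i$ on $[-1,1]$ admits a uniform approximation by a polynomial of degree $O(\sqrt{i \log(1/\delta)})$ with error at most $\delta$; this is a standard consequence of Chebyshev approximation / the Chernoff-style tail bound on Chebyshev coefficients of $z^i$, or equivalently of the fact that a random walk of length $i$ concentrates within $O(\sqrt{i\log(1/\delta)})$ steps. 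Since each $g_i(x)$ has $|g_i(x)|\le 1$ on $[-1,1]$ and degree $O(1)$, composing the degree-$O(\sqrt{i\log(1/\delta)})$ approximant of $z^i$ with $z = g_i(x)$ yields a polynomial $\tilde g_i(x)$ of degree $O(\sqrt{i\log(1/\delta)})$ (absorbing the constant degree of $g_i$) with $\sup_{[-1,1]}|\tilde g_i(x) - (g_i(x))^i| \le \delta$.

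Next I would assemble $q(x) \defeq \sum_{i=0}^k f_i(x)\,\tilde g_i(x)$. This is a polynomial of degree $O(\sqrt{k\log(1/\delta)}) + O(1) = O(\sqrt{k\log(1/\delta)})$ since the largest term dominates (at $i = \Theta(k)$) and the $f_i$ have $O(1)$ degree. For the error bound, I would write
\[
|p(x) - q(x)| \le \sum_{i=0}^k |f_i(x)| \cdot |(g_i(x))^i - \tilde g_i(x)| \le \sum_{i=0}^k a_i \cdot \delta = A\delta
\]
for all $x \in [-1,1]$. Setting $\delta = \epsilon/A$ gives $|p(x)-q(x)|\le \epsilon$ with the degree becoming $O(\sqrt{k\log(A/\epsilon)})$ as claimed.

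The main technical obstacle — and the step I would spend the most care on — is nailing down the Chebyshev approximation of $z^i$ with the stated degree-error tradeoff, since this is the only nontrivial input. I would justify it via the expansion $z^i = \sum_{j} c_{i,j} T_j(z)$ and the bound that the tail mass $\sum_{j > m}|c_{i,j}|$ decays like $e^{-\Omega(m^2/i)}$ (truncating after $m = O(\sqrt{i\log(1/\delta)})$ terms makes this at most $\delta$); this tail estimate follows because $c_{i,j}$ equals (up to normalization) the probability that a $\pm 1$ random walk of $i$ steps lands at height $j$, which is a binomial coefficient ratio admitting a standard Gaussian tail bound. An alternative justification, which I would mention, is to cite the known uniform approximation of $x^i$ by polynomials of degree $\sqrt{i \cdot O(\log(1/\delta))}$; either way the substance is the subgaussian concentration. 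Once that building block is in hand, the composition and the triangle-inequality error aggregation are routine, so the proof is short.

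Then to deduce Lemma~\ref{lem:chebyshev-improved-expansion} I would apply the compression lemma to $p_k(x) = \sum_{i=0}^{k} t_i(x)$ with $k = \alpha^{-2}\ln(1/\epsilon)$ (from Corollary~\ref{cor:crude-convergence-of_pq}, so that $|\sgn(x)-p_k(x)|\le\epsilon$ for $|x|\ge\alpha$), writing $f_i(x) = x\prod_{j=1}^i \frac{2j-1}{2j}$ and $g_i(x) = 1-x^2$; on $[-1,1]$ we have $|g_i(x)|\le 1$ and, as shown in the proof of Lemma~\ref{lem:quality_of_sign_poly}, $a_i = \prod_{j=1}^i \frac{2j-1}{2j} \le 1/\sqrt{i}$, so $A = \sum_{i=0}^k a_i = O(\sqrt{k})$. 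Plugging in gives a polynomial $q(x)$ of degree $O(\sqrt{k\log(\sqrt{k}/\epsilon)}) = O(\sqrt{k\log(k/\epsilon)}) = O(\alpha^{-1}\sqrt{\log(1/\epsilon)}\,\sqrt{\log(1/\alpha\epsilon)}) = O(\alpha^{-1}\log(1/\alpha\epsilon))$ with $|q(x) - p_k(x)| \le \epsilon$, and combining with Corollary~\ref{cor:crude-convergence-of_pq} by the triangle inequality (and rescaling $\epsilon$ by a constant) completes the proof.
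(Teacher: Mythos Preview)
Your proposal is correct and follows essentially the same approach as the paper: both replace each monomial power $(g_i(x))^i$ by a low-degree Chebyshev-based approximant of $z^i$ (the paper cites this as Theorem~3.3 of Sachdeva--Vishnoi), compose with $g_i$, and then sum the termwise errors weighted by the $a_i$ to get the $A\delta$ bound. The only cosmetic difference is that the paper fixes a single degree $d = \sqrt{2k\log(A/\epsilon)}$ for all terms while you let the degree scale with $i$, but the resulting maximum degree and error are the same; your derivation of Lemma~\ref{lem:chebyshev-improved-expansion} from the compression lemma also matches the paper's.
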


This lemma follows from the well known fact in approximation theory that there exist $O(\sqrt{d})$ degree polynomials that approximate $x^d$ uniformly on the interval $[-1, 1]$. In particular we make use of the following:

\begin{theorem}[Theorem~3.3 from \cite{sachdeva2014approx}]
\label{thm:monomial-approx}
For all $s$ and $d$ there exists a degree $d$ polynomial denoted $p_{s,d}(x)$ such that $|p_{s, d}(x) - x^s| \leq 2 \exp(-d^2/2s)$ for all $x \in [-1, 1]$.
\end{theorem}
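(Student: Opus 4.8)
The plan is to exhibit $p_{s,d}$ as a truncated Chebyshev expansion of $x^s$ and to bound the truncation error by a random-walk tail inequality. The key identity, valid for $x=\cos\theta$, is
\[
x^s \;=\; \mathbb{E}\!\left[T_{|W_s|}(x)\right],
\qquad W_s := Y_1+\cdots+Y_s,
\]
where the $Y_i$ are i.i.d.\ uniform on $\{-1,+1\}$ and $T_m$ denotes the degree-$m$ Chebyshev polynomial of the first kind.

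First I would establish this identity. Using $\mathbb{E}[\cos(Y_i\theta)]=\cos\theta$ together with the product-to-sum formula $\cos(a\theta)\cos(b\theta)=\tfrac12\big(\cos((a+b)\theta)+\cos((a-b)\theta)\big)$, an induction on $s$ gives $\mathbb{E}[\cos(W_s\theta)]=\cos^s\theta$. Since $\cos$ is even, $\cos(W_s\theta)=\cos(|W_s|\theta)=T_{|W_s|}(\cos\theta)$, and letting $x=\cos\theta$ range over $[-1,1]$ turns this into the stated identity (hence a polynomial identity).

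Then I would \emph{define} $p_{s,d}(x):=\mathbb{E}\big[T_{|W_s|}(x)\,\mathbf{1}[\,|W_s|\le d\,]\big]$. This is a polynomial of degree at most $d$ by construction, since only Chebyshev polynomials of index $\le d$ appear (and $|W_s|\le s$ makes the expectation a finite sum). For $x\in[-1,1]$, using $|T_m(x)|\le 1$,
\[
\big|x^s-p_{s,d}(x)\big| \;=\; \Big|\mathbb{E}\big[T_{|W_s|}(x)\,\mathbf{1}[\,|W_s|>d\,]\big]\Big| \;\le\; \Pr\!\big[\,|W_s|>d\,\big].
\]
Finally, $W_s$ is a sum of $s$ independent $\pm1$ terms, so Hoeffding's inequality yields $\Pr[\,|W_s|>d\,]\le 2\exp(-d^2/2s)$, which is exactly the claimed bound.

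All the steps are short; the one needing care is the identity $x^s=\mathbb{E}[T_{|W_s|}(x)]$ — in particular the inductive product-to-sum step, and the observation that the sign of $W_s$ is irrelevant because $\cos$ is even. A probability-free variant, in case one prefers it, is to expand $(\cos\theta)^s=2^{-s}(e^{i\theta}+e^{-i\theta})^s$ by the binomial theorem and collect conjugate exponentials to obtain $x^s=2^{-s}\sum_j\binom{s}{j}T_{|s-2j|}(x)$; discarding the terms with $|s-2j|>d$ gives the same $p_{s,d}$, and the error is the binomial tail $2^{-s}\sum_{|j-s/2|>d/2}\binom{s}{j}$, again at most $2e^{-d^2/2s}$ by Hoeffding applied to $\mathrm{Binomial}(s,\tfrac12)$. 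I expect the random-walk version to be the cleanest to write out.
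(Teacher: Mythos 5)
Your proof is correct: the identity $x^s=\mathbb{E}[T_{|W_s|}(x)]$, the truncation $p_{s,d}(x)=\mathbb{E}[T_{|W_s|}(x)\mathbf{1}[|W_s|\le d]]$, the bound $|T_m(x)|\le 1$ on $[-1,1]$, and the Hoeffding tail $\Pr[|W_s|>d]\le 2e^{-d^2/2s}$ all go through as you describe. The paper imports this theorem from \cite{sachdeva2014approx} without proof, and your argument is essentially the standard one given in that reference, so there is nothing to compare beyond noting the match.
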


Using Theorem~\ref{thm:monomial-approx} we prove Lemma~\ref{lem:chebyshev-compression}.

\begin{proof}
Let $d = \sqrt{2k \log(A/\epsilon)}$ and let our low degree polynomial be defined as
$
q(x) = \sum_{i =1}^k f_i(x) p_{d,i} (g_i(x))
$. By Theorem~\ref{thm:monomial-approx} we know that $q(x)$ has the desired degree and by triangle inequality for all $x \in [-1, 1]$
\begin{align*}
|p(x)- q(x)| 
&\leq \sum_{i =1}^k a_i |g_i(x)^i - p_{d,i}( g_{i}(x))|
\\
&\leq \sum_{i =1}^k a_i \exp(-d^2/2i) \leq \epsilon, 
\end{align*}
where the last line used $i \leq k$ and our choice of $d$.
\end{proof}

Using Lemma~\ref{lem:chebyshev-compression} we can now complete the proof.

\begin{proof}[Proof of Lemma~\ref{lem:chebyshev-improved-expansion}]
Note that $p_k(x)$ can be written in the form of Lemma~\ref{lem:chebyshev-compression} with $f_i(x) = x \prod_{j =1}^i \frac{2j - 1}{2j}$ and $g_i(x) = 1 - x^2$. Clearly $|f_i(x)| \leq 1$ and $|g_i(x)| \leq 1$ for $x \in [-1,1]$ and thus we can invoke $Lemma~\ref{lem:chebyshev-compression}$ to obtain a degree $O(\sqrt{k \log(k/\epsilon)})$ polynomial $q_k(x)$ with $|q_k(x) - p_k(x)| \leq \frac{1}{2} \epsilon$ for all $x \in [-1, 1]$. 

By Corollary~\ref{cor:crude-convergence-of_pq}  we know that for $k = \alpha^{-2} \ln(2/\epsilon)$ we have $|\sgn(x) - p_k(x)| \leq \epsilon/2$ and therefore $|\sgn(x) - q_k(x)| \leq \epsilon$. Since 
\[
\sqrt{\alpha^{-2} \ln(2/\epsilon) \ln\left(\frac{\alpha^{-2} \ln(2/\epsilon)}{\epsilon}\right)}
= O(\alpha^{-1} \ln (1/\alpha\epsilon)),
\]
we have the desired result.
\end{proof}

\section{Principal component regression}
\label{sec:pcr-appendix}
Finally we prove Lemma \ref{lem:pcr-main}, the main result behind our algorithm to convert principal component projection to PCR algorithm. The proof is in two parts. First, letting $\bv{y} = \bv{P}_{\bv{A}_\lambda}\bv{A}^\T \bv{b}$, we show how to approximate $(\bv{A}^\T \bv{A})^{-1}\bv{y} = \bv{A}_\lambda^{\pinv}$ with a low degree polynomial of the ridge inverse $(\bv{A}^\T \bv{A}+\lambda \bv{I})^{-1}$. Second, we provide an error analysis of our iterative method for computing this polynomial.

We start with a very basic polynomial approximation bound:
\begin{lemma}\label{pcrPoly} Let $g(x) \eqdef \frac{x}{1-\lambda x}$ and $p_k(x) \eqdef \sum_{i=1}^k \lambda^{i-1} x^i$. For $x \le \frac{1}{2\lambda}$ we have:
$
g(x) - p_k(x) \le \frac{1}{2^k\lambda}
$
\end{lemma}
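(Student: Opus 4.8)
The plan is to establish the bound by a short exact calculation, taking care not to invoke the geometric series $g(x) = \sum_{i=1}^\infty \lambda^{i-1} x^i$, which is only valid for $|\lambda x| < 1$. First I would note that $p_k$ telescopes against the factor $1 - \lambda x$: expanding the product,
\[
(1 - \lambda x)\, p_k(x) = \sum_{i=1}^{k} \lambda^{i-1} x^i - \sum_{i=2}^{k+1} \lambda^{i-1} x^i = x - \lambda^k x^{k+1},
\]
so that $p_k(x) = (x - \lambda^k x^{k+1})/(1 - \lambda x)$ whenever $\lambda x \neq 1$. This is a formal identity of rational functions and requires no convergence assumption.

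Subtracting this from $g(x) = x/(1-\lambda x)$ then yields the clean closed form
\[
g(x) - p_k(x) = \frac{\lambda^k x^{k+1}}{1 - \lambda x}.
\]
At this point I would observe that the lemma is applied only to $x$ an eigenvalue of the positive semidefinite matrix $(\bv{A}^\T\bv{A} + \lambda \bv{I})^{-1}$, hence $x \geq 0$, and that $\lambda > 0$ throughout; so every quantity appearing is nonnegative and the displayed expression is a genuine (signed) bound.

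Finally I would substitute the hypothesis $x \leq \frac{1}{2\lambda}$, equivalently $\lambda x \leq \tfrac12$. Then $1 - \lambda x \geq \tfrac12$, while $\lambda^k x^{k+1} = \tfrac1\lambda (\lambda x)^{k+1} \leq \tfrac1\lambda\, 2^{-(k+1)}$, and dividing these two estimates gives $g(x) - p_k(x) \leq \tfrac{1}{2^k \lambda}$, as claimed.

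The only subtlety is the one already flagged: resisting the temptation to write $g(x) - p_k(x)$ as the series tail $\sum_{i \geq k+1} \lambda^{i-1} x^i$ (legitimate only for $|\lambda x| < 1$) and instead using the exact rational identity, together with noting the implicit sign condition $x \geq 0$ inherited from the spectrum of the ridge inverse. Beyond this there is no real obstacle — the argument is routine algebra.
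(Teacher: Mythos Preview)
Your proof is correct and essentially the same as the paper's: both compute the tail $g(x)-p_k(x)=\frac{\lambda^k x^{k+1}}{1-\lambda x}$ and bound it using $\lambda x\le\tfrac12$; the paper reaches this via the geometric series $g(x)=\sum_{i\ge1}\lambda^{i-1}x^i$ (which is legitimate here since in context $0\le\lambda x\le\tfrac12$), whereas you obtain it by telescoping. Your explicit observation that $x\ge0$ is inherited from the spectrum of $(\bv{A}^\T\bv{A}+\lambda\bv{I})^{-1}$ is a nice bit of hygiene the paper leaves implicit, but the underlying argument is the same.
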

\begin{proof}
We can expand $g(x) = \sum_{i=1}^\infty \lambda^{i-1}x^i$. So:
\begin{align*}
g(x) - p_k(x) = \sum_{i=k+1}^\infty \lambda^{i-1}x^i \le \frac{x}{2^{k-1}} \sum_{i=1}^\infty (\lambda x)^i \le \frac{1}{2^{k}\lambda}.
\end{align*}
\end{proof}
We next extend this lemma to the matrix case:

\begin{lemma}\label{pcrMatrixApprox} For any $\bv{A} \in \mathbb{R}^{n\times d}$ and $\bv{b} \in \mathbb{R}^d$, let $\bv{y} = \bv{P}_{\bv{A}_\lambda}\bv{A}^\T \bv{b}$. Let $p_k(x) = \sum_{i=1}^k \lambda^{i-1} x^i$. Then we have:
\begin{align*}
\norm{p_k\left ((\bv{A}^\T\bv{A}+\lambda\bv{I})^{-1}\right)\bv{y} - \bv{A}_\lambda^{\pinv}\bv{b}}_{\bv{A}^\T\bv{A}} \le \frac{\kappa_\lambda\norm{\bv{b}}_2}{2^{k}}.
\end{align*} 
\end{lemma}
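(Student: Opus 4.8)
The plan is to reduce the matrix statement to a spectral (eigenvalue-by-eigenvalue) bound using the scalar estimate of Lemma~\ref{pcrPoly}. First I would diagonalize: write $\bv{A} = \bv{U}\bs{\Sigma}\bv{V}^\T$, so that $\bv{A}^\T\bv{A} = \bv{V}\bs{\Sigma}^2\bv{V}^\T$ and $(\bv{A}^\T\bv{A}+\lambda\bv{I})^{-1} = \bv{V}\,\diag\!\big(\tfrac{1}{\sigma_i^2+\lambda}\big)\,\bv{V}^\T$ on the column space, with the orthogonal complement handled separately. The crucial observation is that $\bv{y} = \bv{P}_{\bv{A}_\lambda}\bv{A}^\T\bv{b}$ lies entirely in the span of the top singular vectors $\bv v_1,\dots,\bv v_k$ (those with $\sigma_i^2 \ge \lambda$), and that both $p_k\big((\bv{A}^\T\bv{A}+\lambda\bv{I})^{-1}\big)$ and $\bv{A}_\lambda^\pinv$ act within this same subspace. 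So it suffices to compare the two operators coordinate-wise in the eigenbasis $\{\bv v_i\}_{i\le k}$.

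On the $i$-th top eigendirection, $\bv{A}_\lambda^\pinv$ acts as multiplication by $1/\sigma_i^2 = h(\sigma_i^2)$ with $h(x)=1/x$, while $p_k\big((\bv{A}^\T\bv{A}+\lambda\bv{I})^{-1}\big)$ acts as multiplication by $p_k\!\big(\tfrac{1}{\sigma_i^2+\lambda}\big)$. Writing $x_i := \tfrac{1}{\sigma_i^2+\lambda}$, one checks that the correction function $g(x)=\tfrac{x}{1-\lambda x}$ satisfies $g(x_i) = \tfrac{1/(\sigma_i^2+\lambda)}{1-\lambda/(\sigma_i^2+\lambda)} = 1/\sigma_i^2$, exactly the identity $(\bv{A}^\T\bv{A})^{-1}\bv{y} = g\big((\bv{A}^\T\bv{A}+\lambda\bv{I})^{-1}\big)\bv{y}$ already noted in Section~\ref{sec:project2regress}. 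Since $\sigma_i^2 \ge \lambda$ for $i\le k$, we have $x_i = \tfrac{1}{\sigma_i^2+\lambda} \le \tfrac{1}{2\lambda}$, so Lemma~\ref{pcrPoly} applies and gives $|g(x_i) - p_k(x_i)| \le \tfrac{1}{2^k\lambda}$ for each such $i$. Hence the error vector $\bv{e} := p_k\big((\bv{A}^\T\bv{A}+\lambda\bv{I})^{-1}\big)\bv{y} - \bv{A}_\lambda^\pinv\bv{b}$ has, in the orthonormal basis $\{\bv v_i\}_{i\le k}$, $i$-th coordinate equal to $\big(p_k(x_i)-g(x_i)\big)\langle \bv v_i,\bv y\rangle$, bounded in magnitude by $\tfrac{1}{2^k\lambda}|\langle \bv v_i,\bv y\rangle|$.

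It remains to convert this $\ell_2$-type bound on $\bv e$ into the $\bv{A}^\T\bv{A}$-norm bound claimed. Since $\bv e$ lies in $\mathrm{span}\{\bv v_i\}_{i\le k}$ where $\bv{A}^\T\bv{A}$ has eigenvalues $\sigma_i^2 \le \sigma_1^2$, we get $\norm{\bv e}_{\bv{A}^\T\bv{A}}^2 = \sum_{i\le k}\sigma_i^2\big(p_k(x_i)-g(x_i)\big)^2\langle \bv v_i,\bv y\rangle^2 \le \sigma_1^2 \cdot \tfrac{1}{4^k\lambda^2}\sum_{i\le k}\langle \bv v_i,\bv y\rangle^2 = \tfrac{\sigma_1^2}{4^k\lambda^2}\norm{\bv y}_2^2$. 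Finally I would bound $\norm{\bv y}_2 = \norm{\bv{P}_{\bv{A}_\lambda}\bv{A}^\T\bv b}_2 \le \norm{\bv{A}^\T\bv b}_2 \le \sigma_1\norm{\bv b}_2$, which yields $\norm{\bv e}_{\bv{A}^\T\bv{A}} \le \tfrac{\sigma_1^2}{2^k\lambda}\norm{\bv b}_2 = \tfrac{\kappa_\lambda}{2^k}\norm{\bv b}_2$ using $\kappa_\lambda = \sigma_1^2/\lambda$. The only mild subtlety — the step I would be most careful about — is checking that $p_k$ applied to the ridge inverse genuinely agrees with $p_k\circ x$ on the relevant eigendirections and that $\bv y$ contributes nothing outside $\mathrm{span}\{\bv v_i\}_{i\le k}$; once that is pinned down, everything else is the scalar bound plus the triangle/Cauchy–Schwarz bookkeeping above.
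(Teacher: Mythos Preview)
Your proof is correct and follows essentially the same route as the paper: identify $\bv{A}_\lambda^\pinv\bv{b} = g\big((\bv{A}^\T\bv{A}+\lambda\bv{I})^{-1}\big)\bv{y}$ via $g(1/(x+\lambda))=1/x$, observe that $\bv{y}$ lies in the top eigenspace where the relevant eigenvalues of the ridge inverse are at most $1/(2\lambda)$, apply Lemma~\ref{pcrPoly} there, and finish with $\|\bv{y}\|_2\le\sigma_1\|\bv{b}\|_2$. The only cosmetic difference is that you write the $\bv{A}^\T\bv{A}$-norm bound coordinatewise in the eigenbasis, whereas the paper uses the operator inequality $\|\cdot\|_{\bv{A}^\T\bv{A}}\le\sigma_1(\bv{A})\|\cdot\|_2$ directly.
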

\begin{proof}
For conciseness, in the remainder of this section we denote $\bv{M} \eqdef \bv{A}^\T\bv{A} +\lambda \bv{I}$. Let $\bv{z} = p_k\left (\bv{M}^{-1}\right)\bv{y}$. Letting $g(x) = x/(1-\lambda x)$ we have $g \left (\frac{1}{x + \lambda}\right ) = \frac{1}{x}$. So, $g(\bv{M}^{-1})\bv{y} = (\bv{A}^\T \bv{A})^{-1} \bv{y} = \bv{A}_\lambda^\pinv \bv{b}$. Define $\delta_k(x) \eqdef g(x)-p_k(x)$.
\begin{align}
\norm{\bv{A}_\lambda^{\pinv}\bv{b} - \bv{z}}_{\bv{A}^\T\bv{A}} &= \norm{\delta_k(\bv{M}^{-1})\bv{y}}_{\bv{A}^\T\bv{A}}\nonumber\\
&\le \sigma_1(\bv{A}) \cdot \norm{\delta_k(\bv{M}^{-1})\bv{y}}_{2}.\label{intialDeltaBound}
\end{align}

The projection $\bv{y}$ falls entirely in the span of principal components of $\bv{A}$ with squared singular values $\ge \lambda$. $\bv{M}$ maps these values to singular values $\le \frac{1}{\lambda + \lambda} = \frac{1}{2\lambda}$, and hence by Lemma \ref{pcrPoly} we have:
\begin{align*}
\norm{\delta_k(\bv{M}^{-1})\bv{y}}_{2} &\le \frac{1}{2^k\lambda} \norm{\bv{y}}_2\\
&\le \frac{\sigma_1(\bv{A})}{2^k\lambda}  \norm{\bv b}_2.
\end{align*}
Combining with \eqref{intialDeltaBound} and recalling that $\kappa_\lambda \eqdef \sigma_1(\bv{A})^2/\lambda$ gives the lemma.
\end{proof}

With this bound in place, we are ready to give a full error analysis of our iterative method for applying $p_k((\bv{A}^\T\bv{A}+\lambda\bv{I})^{-1})$.

\begin{replemma}{lem:pcr-main}[PCR approximation algorithm]
Let $\algA$ be a procedure that, given $\bv x \in \R^d$ produces $\algA(\bv x)$ with $\|\algA(\bv x) - (\bv{A}^\T\bv{A}+\lambda \bv{I})^{-1} \bv x\|_{\bv{A}^\T\bv{A}+\lambda \bv{I}} = O(\frac{\epsilon}{q^2\sigma_1(\bv{A})}) \|\bv x\|_{2}$. Let $\algB$ be a procedure that, given $\bv x \in \R^d$ produces $\algB(\bv x)$ with $\norm{\algB(\bv x) - \bv{P}_{\bv{A}_\lambda}\bv{x}}_2 = O(\frac{\epsilon}{q^2\sqrt{\kappa_\lambda}}) \norm{\bv{x}}_2 $. Given $\bv b \in \R^n$ set $\bv{s}_0 := \algB(\bv{A}^\T \bv{b})$ and $\bv{s}_1 := \algA(\bv{s}_0)$. For $k \ge 1$ set:
\begin{align*}
\bv{s}_{k+1} := \bv{s}_1 + \lambda \cdot \algA(\bv{s}_k)
\end{align*}
If all arithmetic operations are performed with $\Omega(\log(d/q\epsilon))$ bits of precision then 
$\|\bv{s}_{q} - \bv{A}_\lambda^\pinv \bv b\|_{\bv{A}^\T \bv{A}} = O(\epsilon) \|\bv{b}\|_2$ for $q = \Theta(\log(\kappa_\lambda/\epsilon))$.
\end{replemma}

\begin{proof}
Let $\bv{s}_0^* \eqdef \bv{P}_{\bv{A}_\lambda} \bv{A}^\T \bv{b}$, and for $k \ge 1$
\begin{align*}
\bv{s}_k^* = \sum_{i=1}^k \lambda^{i-1} \bv{M}^{-i} \bv{s}_0^*
\end{align*}

For ease of exposition, assume our accuracy bound on $\algB$ gives $\norm{\bv{s}_0-\bv{s}^*_0}_2 \le \frac{\epsilon}{q^2\sqrt{\kappa_\lambda}}\norm{\bv{A}^\T\bv{b}}_2 \le \sqrt{\lambda}\epsilon/q^2\norm{\bv{b}}_2$. Adjusting constants, the same proof with give Lemma \ref{lem:pcr-main} when error is actually $O(\sqrt{\lambda}\epsilon/q^2)$. By  triangle inequality:
\begin{align}\label{s1split}
&\norm{\bv{s}_1-\bv{s}_1^*}_{\bv{M}} \nonumber\\
&\le \| \bv{s}_1-\bv{M}^{-1}\bv{s}_0\|_{\bv M}+ \norm{\bv{M}^{-1}(\bv{s}_0^*-\bv{s}_0)}_{\bv{M}}
\end{align}
$\norm{\bv{M}^{-1}(\bv{s}_0^*-\bv{s}_0)}_{\bv{M}} \le \frac{1}{\sqrt{\lambda}}\norm{\bv{s}_0^*-\bv{s}_0}_2 \le \epsilon/q^2\norm{\bv{b}}_2$. And by our accuracy bound on $\algA$:
\begin{align*}
\| \bv{s}_1-\bv{M}^{-1}\bv{s}_0\|_{\bv{M}} \le \frac{\epsilon}{q^2\sigma_1(\bv{A})}\norm{\bv{s}_0}_{2}.
\end{align*}
Applying triangle inequality and the fact that the projection $\bv{P}_{\bv{A}_\lambda}$ can only decrease norm we have:
\begin{align*}
\frac{\epsilon}{q^2\sigma_1(\bv{A})}\norm{\bv{s}_0}_{2} &\le \frac{\epsilon}{q^2\sigma_1(\bv{A})}\left ( \norm{\bv{s}_0^*}_2 + \norm{\bv{s}_0^* - \bv{s}_0}_{2}  \right )\\
&\le  \frac{\epsilon}{q^2\sigma_1(\bv{A})}\left ( \norm{\bv{A}^\T \bv{b}}_2 + \sqrt{\lambda}\epsilon/q^2 \norm{\bv{b}}_2\right )\\
&\le 2\epsilon/q^2 \norm{\bv{b}}_2.
\end{align*}
Plugging back into \eqref{s1split} we finally have: $\norm{\bv{s}_1-\bv{s}_1^*}_{\bv{M}}  \le 3\epsilon/q^2\norm{\bv{b}}_2$.

Suppose we have for any $k \ge 1$, $\norm{\bv{s}_k-\bv{s}_k^*}_{\bv{M}}  \le \alpha\norm{\bv{b}}_2$.
\begin{align*}
\norm{\bv{s}_{k+1}-\bv{s}_{k+1}^*}_{\bv{M}} &\le \|\bv{s}_1-\bv{s}_1^*\|_{\bv{M}} + \lambda \norm{\algA(\bv{s}_k) - \bv{M}^{-1}\bv{s}_k^*}_{\bv{M}}\\
&\le 3\epsilon/q^2\norm{\bv{b}}_2 + \lambda \norm{\algA(\bv{s}_k) - \bv{M}^{-1}\bv{s}_k^*}_{\bv{M}}.
\end{align*}
We have:
\begin{align*}
&\lambda \|\algA(\bv{s}_k) - \bv{M}^{-1}\bv{s}_k^*\|_{\bv{M}} \\
&\le \frac{\lambda \epsilon}{q^2\sigma_1(\bv{A})} \norm{\bv{s}_k}_2 + \lambda \norm{\bv{M}^{-1}(\bv{s}_k - \bv{s}_k^*)}_{\bv{M}}\\
&\le \frac{\lambda \epsilon}{q^2\sigma_1(\bv{A})} \left (\norm{\bv{s}^*_k}_2 + \norm{\bv{s}_k -\bv{s}^*_k}_2 \right ) + \alpha\norm{\bv b}_2\\
&\le \frac{\lambda \epsilon}{q^2\sigma_1(\bv{A})} \norm{\bv{s}_k -\bv{s}^*_k}_2+ (\alpha+k\epsilon/q^2)\norm{\bv b}_2
\end{align*}
where the last step follows from: $ \frac{\lambda \epsilon}{q^2\sigma_1(\bv{A})} \norm{\bv{s}_k^*}_2 \le  \frac{\lambda \epsilon}{q^2\sigma_1(\bv{A})}\sum_{i=1}^k \lambda^{i-1} \norm{\bv{M}^{-i} \bv{s}_0^*}_2 \le \frac{k\epsilon}{q^2\sigma_1(\bv{A})} \norm{\bv{s}_0^*}_2 \le k\epsilon/q^2 \norm{\bv{b}}_2$. 

Now, $\frac{\lambda \epsilon}{q^2\sigma_1(\bv{A})}\norm{\bv{s}_k -\bv{s}^*_k}_2 \le \frac{ \epsilon\sqrt{\lambda}}{q^2\sigma_1(\bv{A})}\norm{\bv{s}_k -\bv{s}^*_k}_{\bv{M}} \le \epsilon\alpha/q^2\norm{\bv{b}}_2$ since $\sqrt{\lambda} \le \sigma_1(\bv{A})$.

So overall, presuming $\norm{\bv{s}_k-\bv{s}_k^*}_{\bv{M}}  \le \alpha\norm{\bv{b}}_2$, we have $\norm{\bv{s}_{k+1}-\bv{s}_{k+1}^*}_{\bv{M}} \le [(1+\epsilon/q^2)\alpha + (3+ k)\epsilon/q^2] \norm{\bv b}_2$. We know that $\norm{\bv{s}_1-\bv{s}_1^*}_{\bv{M}}  \le 3\epsilon/q^2\norm{\bv{b}}_2$, so by induction we have:
\begin{align*}
\norm{\bv{s}_{q}-\bv{s}_{q}^*}_{\bv{M}} < (1+\epsilon/q^2)^q \cdot q^2 \epsilon/q^2 \norm{\bv b}_2 < 3\epsilon \norm{\bv b}_2.
\end{align*}

Finally, applying the above bound, triangle inequality, and the polynomial approximation bound from Lemma \ref{pcrMatrixApprox} we have:
\begin{align*}
\|\bv{s}_{q} - \bv{A}_\lambda^\pinv \bv b\|_{\bv{A}^\T \bv{A}} &\le \norm{\bv{s}_{q}-\bv{s}_{q}^*}_{\bv{A}^\T\bv{A}} + \norm{\bv{s}_{q}^*-\bv{A}_\lambda^\pinv \bv{b}}_{\bv{A}^\T \bv{A}}\\
&\le \left (3\epsilon + \frac{\kappa_\lambda}{2^q} \right ) \norm{\bv b}_2 \le 4\epsilon \norm{\bv b}_2
\end{align*}
since $q = \Theta(\log(\kappa_\lambda/\epsilon))$.
\end{proof}

\end{document}